\newtheoremstyle{slplain}% name
  {1mm}% Space above
  {3mm}% Space below
  {\it}% Body font
  {}%Indent amount (empty = no indent, \parindent = para indent)
  {\bf}%  Thm head font
  {}%       Punctuation after thm head
  { }%      Space after thm head: " " = normal interword space;
\theoremstyle{slplain}
\title{Local, Smooth, and Consistent Jacobi Set Simplification}
\author{
Harsh Bhatia \\ {\small \textsl{hbhatia@sci.utah.edu}}  \\ {\small \textsl{SCI Institute, University of Utah}}\and 
Bei Wang \\ {\small\textsl{beiwang@sci.utah.edu}} \\ {\small \textsl{SCI Institute, University of Utah}}\and 
Gregory Norgard \\{\small\textsl{gregnorgard@gmail.com}} \\ {\small \textsl{Numerica Corporation}}\and
Valerio Pascucci  \\{\small\textsl{pascucci@sci.utah.edu}} \\ {\small \textsl{SCI Institute, University of Utah}}\and
Peer-Timo Bremer \\ {\small\textsl{bremer5@llnl.gov}}  \\ {\small \textsl{Lawrence Livermore National Laboratory}}
}
\date{}
\begin{document}

\begin{titlepage}
\maketitle 
\begin{abstract}
The relation between two Morse functions defined on a common domain
can be studied in terms of their Jacobi set. The Jacobi set contains
points in the domain where the gradients of the functions are
aligned. Both the Jacobi set itself as well as the segmentation of the
domain it induces have shown to be useful in various applications.
Unfortunately, in practice functions often contain noise and
discretization artifacts causing their Jacobi set to become
unmanageably large and complex. While there exist techniques to simplify
Jacobi sets, these are unsuitable for most applications as they lack
fine-grained control over the process and heavily restrict the type of
simplifications possible.

In this paper, we introduce a new framework that generalizes critical
point cancellations in scalar functions to Jacobi sets in two
dimensions.  We focus on simplifications that can be realized by
smooth approximations of the corresponding functions and show how this
implies simultaneously simplifying contiguous subsets of the Jacobi
set.  These extended cancellations form the atomic operations in our
framework, and we introduce an algorithm to successively cancel
subsets of the Jacobi set with minimal modifications according to some
user-defined metric.  We prove that the algorithm is correct and
terminates only once no more local, smooth and consistent simplifications are possible. 
We disprove a previous claim on the minimal Jacobi set for manifolds with arbitrary genus and show that for simply connected domains, our algorithm reduces a given Jacobi set to its simplest configuration.
\end{abstract}
\end{titlepage}
\pagestyle{plain}

%%%%%%%%%%%%%%%%%%%%%%%%%%%%%%%%%%%
% Introduction 
\section{Introduction}
\label{sec:introduction}

In scientific modeling and simulation, one often defines multiple functions, 
e.g.\ temperature, pressure, species distributions etc.\ on a common domain. 
Understanding the relation between such functions is crucial in data exploration 
and analysis. The \emph{Jacobi set}~\cite{EdeHar2002} of two scalar functions 
provides an important tool for such analysis by describing points in the domain 
where the two gradients are aligned, and thus partitioning the domain into
regions based on relative gradient orientation. A variety of interesting physical 
phenomena such as the interplay between salinity and temperature of water in 
oceanography~\cite{BPS01} and the critical paths of gravitational potentials of 
celestial bodies~\cite{SM98} (similar to the Lagrange points in astrophysics) can 
be modeled using Jacobi sets. In data analysis and image processing, Jacobi sets 
have been used to compare multiple scalar functions~\cite{EdeHarNat2004}, as well 
as to express the paths of critical points overtime~\cite{BreBriDuc2007,EdeHar2002}, 
silhouettes of objects~\cite{Edelsbrunner09}, and ridges in image data~\cite{Norgard11cagdb}. 

However, the Jacobi sets can be extremely detailed to the point at which their 
complexity impedes or even prevents a meaningful analysis. Often, one is not 
interested in the fine-scale details, e.g.\ minor silhouette components due to 
surface roughness, but rather in more prevalent features such as significant 
protrusions. The Jacobi sets are also highly sensitive to noise which further leads to 
undesired artifacts. Finally, the most common algorithm to compute Jacobi
sets~\cite{EdeHar2002,Norgard11cagdb} is designed for piecewise linear functions 
defined on triangulations, and is well known to introduce a large number of 
discretization artifacts which could skew the analysis. The natural answer to these 
problems is the controlled simplification of a Jacobi set by ranking and ultimately 
removing portions of it in order of importance.

Some previous techniques exist that can be broadly classified into direct and indirect 
Jacobi set simplification. \emph{Indirect} 
simplification~\cite{BreBriDuc2007,LuoSafWan2009} simplifies the underlying functions 
in a hope to obtain a structurally and geometrically simpler Jacobi set. However this 
poses several problems. First, especially in the case of two non-trivial functions, changing
either of them can introduce a large number of complex changes in the Jacobi set. 
These changes are difficult to predict and track, and instead the Jacobi set is typically 
re-computed at each step, which quickly becomes costly. Second, the Jacobi set encodes 
the relation between two functions and therefore simplifying one function may not
actually simplify the Jacobi set. For example, two functions with complex gradient flows, 
which are similar in terms of relative orientation, define a small and simple Jacobi set. In 
this case, smoothing the gradient flow of either of the functions can introduce
significant additional complexity into the Jacobi set. Finally, creating an appropriate metric 
to rank potential simplification steps can be challenging as small changes relative to 
traditional function norms e.g.\ $L_2$ or $L_\infty$ may induce large changes in the Jacobi
set and vice versa.  

Alternatively, {\it direct} simplification aims to identify and remove ``unimportant'' portions 
of the Jacobi set and subsequently to determine the necessary changes in the corresponding 
functions. Such techniques are designed to reduce the complexity of a Jacobi set measured 
by a user-defined metric.  The first step~\cite{NNat2011} proposed in this direction views the 
Jacobi set as the zero level set of a complexity measure %$\kappa=||\grad f \times \grad g||$ 
\cite{EdeHarNat2004} and removes components (i.e.\ loops) of the level set in order of their 
hyper-volume.  However, this strategy is limited to removing entire loops of the Jacobi set.  
In practice, much of the complexity of the Jacobi set is due to small undulations in the
level sets of the functions causing zig-zag patterns.  Such features are not addressed directly 
by a loop removal, which severely limits the usability of this approach.  Furthermore, as discussed 
in Section~\ref{sec:jssimp}, one can easily construct cases where loops should be combined rather 
than removed.  In contrast, our goal is to obtain a Jacobi set with fewer \emph{birth-death} (BD) points 
(where the level sets of the two functions and the Jacobi curve have a common normal direction) 
and fewer loops.

\newparagraph{Contributions} 
To overcome the current limitations in Jacobi set simplification, we introduce a new direct 
simplification framework for Jacobi sets of two Morse functions defined on a common smooth, 
compact, and orientable $2$-manifold without boundary. By extending the notion of critical
point cancellations in scalar fields to Jacobi sets, we identify all possible simplifications that are 
realizable by smooth approximations of the corresponding functions. Based on a user-defined 
metric, we then rank these operations and progressively simplify the Jacobi set until no further 
reduction is possible. Our framework provides a fine-grained control over a very general set of 
possible simplifications and allows, e.g.\ the combination of loops and the removal of
zig-zag patterns along side the traditional loop removal. In particular:

\begin{itemize}
\item We introduce the notion of \emph{local} pairings of points in the Jacobi set that can be 
	cancelled. These point-wise cancellations are then extended to contiguous sub-domain 
	bounded by segments of the Jacobi set, referred to as \emph{Jacobi regions}, which are 
	simplified simultaneously in a \emph{consistent} manner. To obtain \emph{smooth} realization 
	of the simplification, the modification of Jacobi regions are extended to collection of adjacent 
	regions, referred to as \emph{Jacobi sequences}. Each such sequence is a contiguous 
	sub-domain ranked by a user-defined metric and is simplified as one atomic operation;
\item We propose a simplification algorithm that constructs and successively cancels Jacobi 
	sequences. Our approach naturally cancels critical points of both functions, removes and 
	combines loops, straightens the Jacobi set by removing zig-zag patterns, and always reduces 
	the number of BD points;
\item We show that the algorithm is correct, which means that the simplified Jacobi set is valid, 
	and that it terminates only when no more local, smooth, and consistent simplifications are possible;
\item We disprove a previous claim on the minimal Jacobi set for manifolds with arbitrary genus and 
	show that for domains with even genus there always exist function pairs that create a single loop in 
	the Jacobi set; and
\item We show that for simply-connected domains, our algorithm reduces a given Jacobi set to its 
	minimal configuration; while for non-simply-connected domains, we discuss some fundamental 
	challenges in Jacobi set simplification.
%\item We describe our algorithm under both smooth and piecewise linear (PL) settings.
\end{itemize}

%%%%%%%%%%%%%%%%%%%%%%%%%%%%%%%%%%%
\section{Background and Related Work}
\label{sec:fundamentals}

This section presents the relevant background on Morse 
theory~\cite{Matsumoto2002, Milnor1963} and Jacobi sets~\cite{EdeHar2002}, and 
discusses the existing Jacobi set simplification schemes. In the following, let 
$\Mspace$ be a smooth, compact, and orientable $2$-manifold without boundary.

\newparagraph{Morse functions} 
Given a smooth function $f: \Mspace \to \Rspace$, a point $x \in \Mspace$ is called a
\emph{critical point} if the gradient $\grad f$ of $f$ at $x$ equals zero, and the value
of $f$ at $x$ is called a \emph{critical value}. All other points are \emph{regular points} 
with their function values being \emph{regular values}. A critical point $x$ is 
\emph{non-degenerate} if the Hessian, i.e.\ the matrix of second partial derivatives at the 
point, is invertible. $f$ is a \emph{Morse function} if (a) all its critical points are 
non-degenerate and (b) all its critical values are distinct.

 \begin{figure}
\centering
	\includegraphics[width=0.49\textwidth]{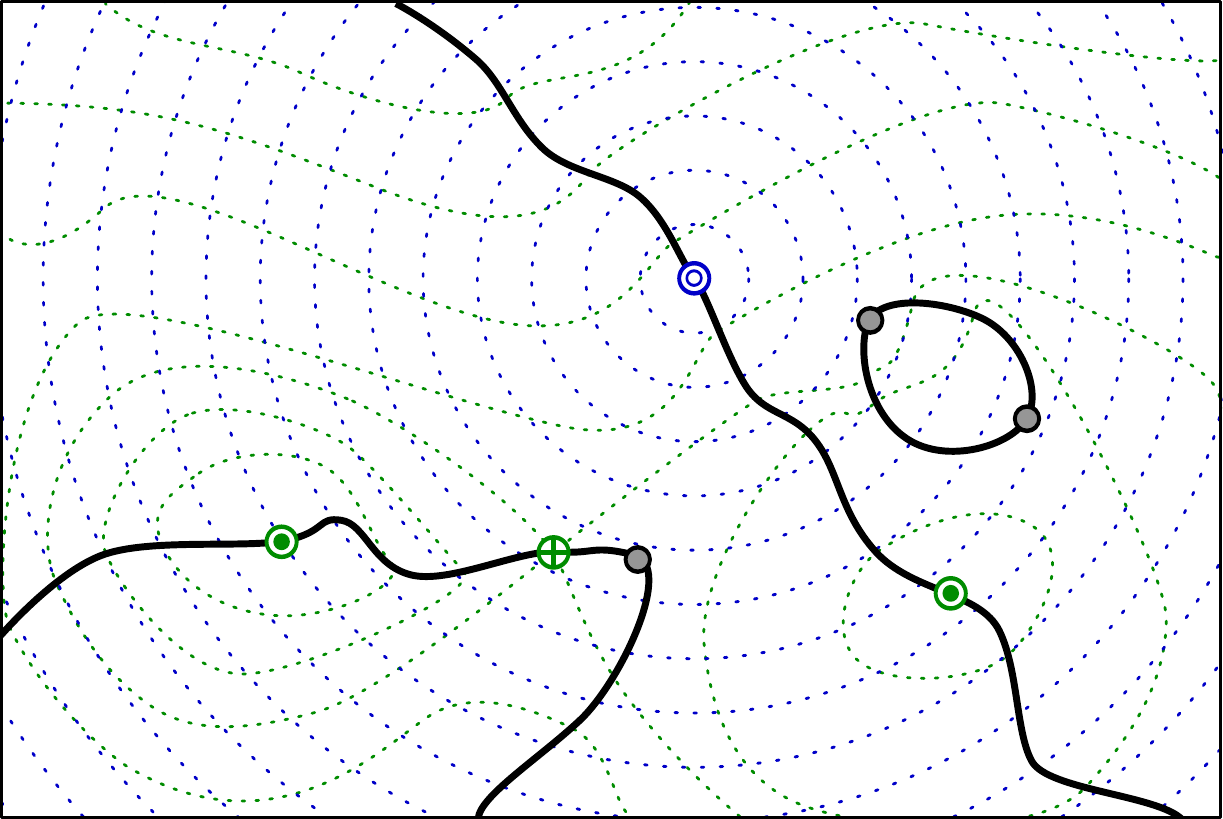} 
\caption{Jacobi set (solid black) of two functions with
    BD points shown in grey and critical points of the function in blue
    and green respectively. \label{fig:jacobi}}
\end{figure}

\newparagraph{Jacobi set} 
Given two generic Morse functions $f, g: \Mspace \to \Rspace$ such that the intersection 
of the sets of their critical points is a null set, their Jacobi set $\J = \J(f,g) = \J(g,f)$ is the 
closure of the set of points where their gradients are linearly dependent~\cite{EdeHar2002},
\begin{eqnarray}
\label{eq:jacobi:grad}
\J = \closure{\{x \in \Mspace \mid \grad f(x) + \lambda \grad g(x) = 0 \mbox{ or } \grad g(x) + \lambda \grad f(x) = 0\}}.
\end{eqnarray}

\noindent 
{The sign of $\lambda$ for each $x$ is also called as its \emph{alignment}, as 
it defines whether the two gradients are aligned or anti-aligned. }
By definition, the Jacobi set contains the critical points of both $f$ and $g$. 
Let $g^{-1}(t)$ represent the {\it level sets} of $g$ for $t \in \Rspace$, and 
$f_t: \lset g t \to \Rspace$ the restriction of $f$ on the level sets of $g$. 
%, then $\grad f_t$ represents the
%orthogonal projection of $\grad f$ on $\grad g$. 
%From
%Eq.~\ref{eq:jacobi:grad}, it follows that the Jacobi set is the
%closure of the set of critical points of $f_t$ (where $\grad f_t = 0$)
%for all regular values $t$ of $g$~\cite{EdeHar2002}.
Equivalently, Jacobi set can be defined as the closure of the set of critical points of $f_t$ 
for all regular values $t$ of $g$~\cite{EdeHar2002}.
\begin{eqnarray}
\label{eq:jacobi:rcp}
\J = \closure{\{x \in \Mspace \mid x \mbox{ is critical point of } f_t\}}.
\end{eqnarray}

\noindent The critical points of $f_t$ are also referred to as the \emph{restricted 
critical points} of $f$ (with respect to $g$). The restricted function $f_t$ is a Morse 
function almost everywhere\footnote{The set of points where the function $f_t$ is not 
Morse is a finite set of measure zero.}. Three types of degeneracies exist where $f_t$ 
is not Morse for some $t \in \Rspace$: (a) $t$ is a critical value of $g$, then the level 
set $\lset g t$ contains a singularity and thus is not a $1$-manifold; (b) Two or more 
critical points in $f_t$ share the same function value; (c) $f_t$ contains an inflection 
point (a degenerate critical point). These degeneracies play an important role in our 
discussion on Jacobi set simplification. For example, each restricted critical point along 
$\J$ is an extrema of $f_t$ for some $t \in \Rspace$. As $t$ varies, maxima and minima
of $f_t$ can approach each other and ultimately merge at an inflection point called a 
\emph{birth-death (BD) point}. Alternatively, traveling along $\J$, restricted critical points of
$f_t$ switch their criticality (from maximum to minimum or vice versa) at BD points. 
Furthermore, the restricted functions $f_t$ switch criticality at critical points of $g$ (but 
not at critical points of $f$). {Similarly, the alignment of restricted critical points 
switches at critical points of both $f$ and $g$. }
Figure~\ref{fig:jacobi} illustrates these concepts.

\newparagraph{Comparison measure} Several other descriptions of Jacobi sets 
exist~\cite{EdeHar2002,EdeHarMas2008,EdeHarNat2004,NNat2011}.  
One such description is in terms of a gradient-based metric to compare two
functions, called the \emph{comparison measure} $\kappa$~\cite{EdeHarNat2004}. 
It plays a significant role in assigning an importance to subsets of a Jacobi set in 
terms of the underlying functions $f$ and $g$ by measuring the relative orientation 
of their gradients.  For a domain $\Omega$,
\begin{eqnarray*}
\kappa  = \kappa(\Omega) = \frac{1}{\area{(\Omega)}} \int_{x \in \Omega} \kappa_x \di x 
= \frac{1}{\area{(\Omega)}} \int_{x \in \Omega} ||\grad{f(x)} \times \grad{g(x)}|| \di x,
\end{eqnarray*}
where $\di x$ is the area element at $x$, and $\area{(\Omega)} = \int_{x \in \Omega}\di x$. 
Here $\kappa_x = ||\grad{f(x)} \times \grad{g(x)}||$ represents the limit of 
$\kappa=||\grad f \times \grad g||$ to a single point, and the Jacobi set is its $0$-level 
set~\cite{EdeHarNat2004,LuoSafWan2009,NNat2011}.

\newparagraph{Level set neighbors}
For a point $v \in \J$, we can define its \emph{level set neighbors} (with respect 
to $g$), $\n_g(v)$, as neighbors of $v \in \J$ along $g^{-1}(g(v))$.  Two points 
$u,v \in \J$ are level set neighbors if $u \in \n_g(v)$ which implies $v \in \n_g(u)$. 
Generically, $|\n_g(v)|\le2$, however, for an extremum of $g$, $|\n_g(v)| = 0$, 
and for a saddle of $g$, $|\n_g(v)| \le 4$.  The level set neighbors can symmetrically 
be defined with respect to $f$. Such a definition can be extended to smooth curves in 
$\J$. Two smooth parametrized curves $\alpha, \beta: (a,b) \to \Mspace$ in $\Jspace$
are level set neighbors if $\alpha(t)$ and $\beta(t)$ are level set neighbors in 
$\lset g t$ for all $t \in (a,b)$. For simplicity in notations, for such level set neighbors, 
we choose $a$ and $b$ to be function values of $g$, i.e.\ $g(\alpha(a)) = a$ and
$g(\alpha(b)) = b$. We further define their \emph{bounded region}, denoted by 
$R_{(a,b)}(\alpha, \beta)$, as the open subset of $\Mspace$ bounded by curves $\alpha$, 
$\beta$, and level sets of $g$ that pass through their end points, i.e.\ $g^{-1}(a)$ and
$g^{-1}(b)$.\footnote{In the case where $\alpha, \beta: (a, b) \to \J$ are subsets of some 
larger parametrized curves $\alpha', \beta': (a',b') \to \J$, that is, $\alpha, \beta$ are the 
restriction of $\alpha', \beta'$ to $(a,b) \subseteq (a',b')$, i.e. $\alpha = \alpha' |_{(a,b)}$ 
and $\beta = \beta'|_{(a,b)}$, we denote their bounded region as $R_{(a,b)}(\alpha', \beta')$.}

%%%%%%%%%%%%%%%%%%%%%%%%%%%%%%%%%%%

\subsubsection*{Related Work}
%\newparagraph{Related work}
As discussed above, the Jacobi set may contain a number of components that 
represent noise, degeneracies, or insignificant features in the data. As a result, 
Jacobi set simplification is both necessary and desirable. Bremer et 
al.~\cite{BreBriDuc2007} use the Jacobi set to track the critical points of a 
time-varying function $f: \Mspace \times \Rspace \to \Rspace$, where time is 
represented as $g: \Mspace \times \Rspace \to \Rspace$ and $g(x,t) = t$. 
The Jacobi set $\J = \J(f,g)$ is therefore the trajectory of the critical points of 
$f_t$ as time varies. To simplify the Jacobi set, they use the Morse-Smale
complex of $f_t$ at discrete time-steps to pair critical points, cancel pairs below 
a persistence threshold, and remove small components of the Jacobi set that lie 
entirely within successive time-steps. This method, however, is difficult to extend 
to a general setting: First, only one function, $f$, is simplified and the other is 
assumed to be trivial; Second, only a small, discrete number of $f_t$ are simplified 
and all intermediate changes are ignored.

Luo et al.~\cite{LuoSafWan2009} propose an algorithm to compute the Jacobi set 
of a point cloud. The Jacobi set is considered as the $0$-level set of $\kappa_x$, 
which is computed by approximating the gradients $\grad f$ and $\grad g$. 
Reducing the number of eigenvectors used in the gradient approximation, therefore, 
corresponds to a simpler Jacobi set after re-computation. This is the foremost 
example of an indirect simplification in which $f$ and $g$ are smoothed which leads 
to some (unpredictable) changes in $\J$. Instead, as discussed below, this paper 
aims at identifying and removing an unimportant portion of $\J$ by determining how 
$f$ and $g$ can be modified accordingly.

N and Natarajan~\cite{NNat2011} consider the simplification of the Jacobi set as the 
reduction in the number of components in $\J$ with minimal change to the relationship 
between the two functions, quantified by $\kappa_x$. The authors construct the Reeb
graph~\cite{ShinagawaKunii1991} of $\kappa_x$, and associate a percentage of $\kappa$ 
as offset cost with each critical point and $0$-level set point in the Reeb graph. A greedy 
strategy is then applied to modify a component in the Jacobi set with the least offset cost 
until a threshold is reached. However, this technique is restricted to removing entire loops 
of $\J$, which significantly restricts its flexibility. For example, one can easily construct
examples where $\J$ is highly complex yet contains only a single loop. 

%%%%%%%%%%%%%%%%%%%%%%%%%%%%%%%%%%%

\section{Jacobi Set Simplification -- An Overview}
\label{sec:jssimp}

As discussed in Section~\ref{sec:introduction}, this paper introduces a direct 
simplification of the Jacobi set, i.e.\ it removes a given set of points from $\J$, 
by understanding the required changes in $f$ and/or $g$. The goal is to obtain 
a Jacobi set with fewer BD points and/or fewer loops, by making the gradients of 
the underlying functions more similar. In the following, we describe simplification 
of $\J$ which modifies $f$ with respect to the level sets of $g$, but all concepts 
apply symmetrically to modifications of $g$ with respect to $f$. In practice, we 
consider simplifications that modify either $f$ or $g$, and typically interleave 
operations acting on one or the other.

Since the Jacobi set $\J(f,g)$ is defined as the closure of the (restricted) critical 
points of $f_t$ for regular values $t \in \Rspace$ where $f_t$ is a $1$D 
function, it is natural to simplify $\J$ by canceling restricted critical points in $f_t$. 
In the topological simplification of a scalar function, typically the features
of interest are critical points. Based on the Morse Cancellation Theorem
(Theorem 5.4 in \cite{Milnor1965} as the First Cancellation Theorem, or
\cite{Morse1965}), critical points must be removed in pairs through atomic
{cancellation} operations. 
%In the topological simplification of a scalar function, the critical points must be 
%removed in pairs through atomic {cancellation} operations. 
Therefore, we remove 
pairs of restricted critical points %$u$ and $v$ from $f_t$ 
to construct continuous 
simplified function $\overline {f_t}$, such that no other critical points of $f_t$ are 
affected. To obtain a smooth approximation $f_t^*$ of the simplified function 
$\overline{f_t}$, the modification can be extended to allow an $\epsilon$-slope 
for the modified function. 
%the modification can be extended to supply an $\epsilon$-slope 
%to the modified function. 
The region of influence of this cancellation is the region 
where $f_t \neq f_t^*$, and is highlighted in Figure~\ref{fig:overview_ft}. In order 
to perform these cancellations, we must first define a scheme for pairing restricted 
critical points. Section~\ref{sec:js:rcp} discusses in detail the choice of our pairing 
scheme, and the procedure of carrying out such cancellations. 

Although cancellation of restricted critical points produces smooth simplified restricted 
functions $f_t^*$ as shown in Figure~\ref{fig:overview_ft}, performing a single such cancellation 
creates a discontinuity across the level set \lset{g}{t}. %, therefore leading to a globally 
%non-smooth function. 
In order to obtain smoothness across level sets, we must extend these 
cancellations by canceling more than one contiguous pairs of restricted critical points, called 
the \emph{Jacobi regions}, at the same time. For example, consider two Jacobi regions $V_1$ 
and $V_2$ existing between the level sets \lset{g}{a}, \lset{g}{c}, and \lset{g}{b} as shown in 
Figure~\ref{fig:valid} (left). These Jacobi regions represent contiguous pairs of restricted critical 
points shown as red and blue lines respectively. A smooth simplification $f^*$ that cancels all 
critical points in these regions can be obtained by modifying $f$ in the corresponding shaded 
region. The construction, properties, and cancellation of the {Jacobi regions} are discussed in 
detail in Section~\ref{sec:js:regions}.

\begin{figure*}[b]
\centering
\subfigure[]{
 	\def\svgwidth{0.6\linewidth} 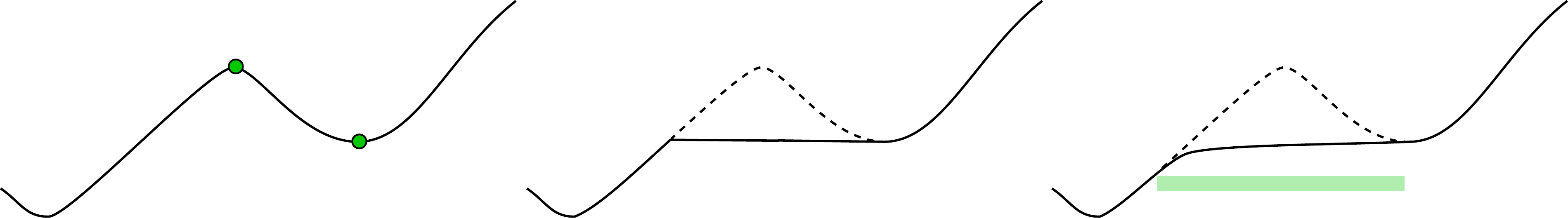
\label{fig:overview_ft}}
\hspace{0.6cm}
\subfigure[]{
	\def\svgwidth{0.3\linewidth} 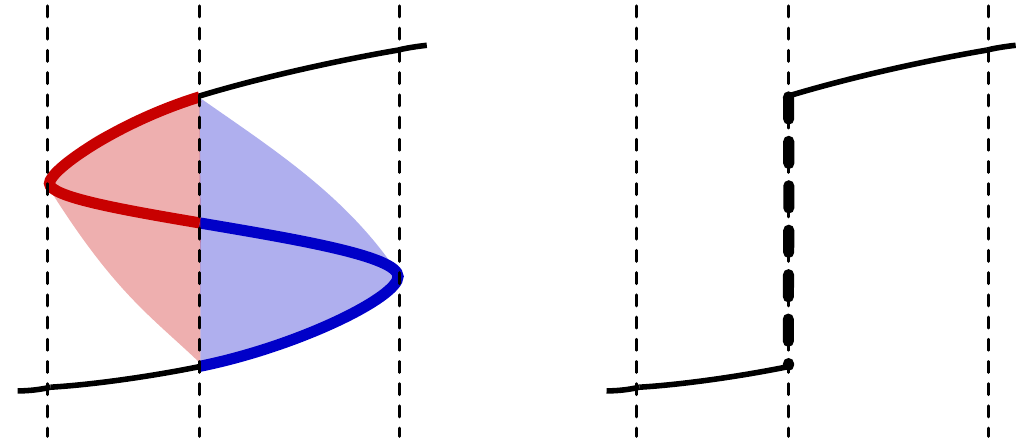\label{fig:valid}}
\caption{
(a) Cancellation of a pair of critical points in the 1-dimensional $f_t$ (left) gives a continuous simplified 
simplified $\overline{f_t}$ (middle). A smooth simplification can be obtained as $f_t^*$. The cancellation 
modifies $f_t$ in the shaded region.  
(b) (Left) Valid simplification of $V_1$ and $V_2$ must construct smooth $f^*$ by 
modifications in the corresponding shaded regions only. $a$, $b$, and $c$ represent 
level sets of $g$. (Right) The simplified Jacobi set.
\label{fig:overview}}
\end{figure*}

The cancellation of Jacobi regions, however, creates smooth functions only at the interior of 
the regions, and the discontinuities are pushed to their boundaries. In order to create globally 
smooth simplified functions $f^*$, we must further cancel a sequence of adjacent regions at 
the same time, e.g.\ canceling $V_1$ and $V_2$ at the same time as shown in 
Figure~\ref{fig:valid} (right). We show that any discontinuities can be avoided by local 
modifications if these \emph{Jacobi sequences} start and end with BD points, and discuss 
their construction and cancellation in Section~\ref{sec:js:seq}. 

This way, the entire Section~\ref{sec:js:main} focuses on a simplification scheme that 
extends the concept of critical point cancellation in scalar functions to Jacobi sets. The 
defining characteristic of a \emph{valid simplification} is the removal of pairs of restricted 
critical points in $\J$ in a \emph{local}, \emph{smooth}, and \emph{consistent} manner.

\begin{definition}[Valid Simplification] \upshape{
  {Let $V$ be a set of level set neighbors in $f_{[a,b]} = \{f_t \mid t \in [a,b]\}$ 
  for some interval $[a,b] \subseteq \Rspace$. Removing $V$ from $\J$ is considered a
  \emph{valid simplification} if it is }
  \begin{enumerate}\dense
  \item[1.] \emph{local}: There exists a continuous
   $\overline{f}_{[a,b]} = \{\overline{f}_t \mid t \in [a,b]\}$ containing all critical points of 
   $f_{[a,b]}$ except for $V$;
  \item[2.] \emph{smooth}: There exists a smooth $f^*:\Mspace \rightarrow \Rspace$ 
  such that $||f^*_{[a,b]} - \overline{f}_{[a,b]}||_\infty < \varepsilon$ for any 
  $\varepsilon > 0$; and
  \item[3.] \emph{consistent}: $\J(f^*,g) = \J(f,g)$ for all $x$ with 
  $g(x) \in (-\infty,a) \cup (b,\infty)$, and $f^*(x) = f(x)$ for all $x$ with 
  $g(x) \in (-\infty,a-\epsilon] \cup [b+\epsilon,\infty)$ for any $\epsilon > 0$.
  \end{enumerate}}
\end{definition}

\noindent 
We point out that the simplified function $f^*(x)$ is defined with respect to a given 
$\epsilon$. 
%However, since we can choose any $\epsilon > 0$ for a valid simplification, we 
%instead write it as $f^*(x)$ for notational brevity. 
Referring to Figure~\ref{fig:overview_ft} it is important to note that the locality conditions 
implies that the modification in $f_t$ must not impact any restricted critical points other 
than $u$ and $v$. In Figure~\ref{fig:valid}, this means that for any level set of $g$ 
(vertical line), the red and blue shaded regions must not touch any portion of Jacobi set 
other than the ones shown in red and blue respectively. Notice that while locality is associated 
with continuous function $\overline{f_t}$, the second condition requires $f^*$ to be smooth 
along and across level sets. In order to create such a smooth $f^*$, the locality condition 
must be relaxed within a small neighborhood around the cancelled Jacobi region. Furthermore, 
the consistency condition requires that no portions of Jacobi set outside $[a,b]$ are modified. 
In a way, the consistency condition implies locality across level sets. 
%
%\noindent
%Referring to Figure~\ref{fig:valid} it is important to note that combining the local and 
%consistent conditions implies that the Jacobi set changes only in a small region in the 
%neighborhood of $V$. 
While the locality condition is obtained by defining a special pairing 
function, a smooth and consistent simplification can be performed when the Jacobi sequence 
begins and ends with BD points (as in Figure~\ref{fig:valid}).
Also, notice that 
$\J(f^*,g) \not\subset \J(f,g)$, since new points (dashed line) may be added to the Jacobi set 
to connect the existing curves.

Unfortunately, as detailed in Section~\ref{sec:js:seq}, the saddles of $g$ present unresolvable 
discontinuities in the pairings, and therefore may obstruct the construction of Jacobi sequences. 
Consequently, the simplification scheme discussed above may not be able to progress. 
%as discussed in Section~\ref{sec:js:seq}, the presence of saddles of $g$ may 
%obstruct the construction of Jacobi sequences, and therefore the simplification scheme 
%discussed above may not be able to progress. Therefore, in order to remove saddles explicitly, 
In order to handle such cases, 
we use a conventional critical point cancellation technique in $2$D to cancel a saddle 
of $g$ with its maxima/minima. As shown in Section~\ref{sec:saddle}, our approach does not 
change the Jacobi set structurally, but only simplifies (reduces the number of) its alignment switches. 

Using the simplification techniques discussed in Sections~\ref{sec:js:main} and~\ref{sec:saddle}, 
Section~\ref{sec:summary} presents a combined procedure for simplifying Jacobi sets which can 
be guided by an arbitrary metric. We provide correctness proofs for the procedure, and show 
that for simply-connected domains, this procedure obtains the simplest possible configuration of 
Jacobi sets. On the other hand, for non-simply-connected domains, we discuss current challenges 
and list them as future work. 

%%%%%%%%%%%%%%%%%%%%%%%%%%%%%%%%%%%

\section{Cancellation of restricted critical points in $f$}
\label{sec:js:main}

This section details the procedure of canceling restricted critical points in $\J$ 
to obtain simplified functions. Starting with the simplification of $1$D restricted 
functions, we discuss the cancellation of entire segments of $\J$ by canceling 
Jacobi sequences.

%%%%%%%%%%%%%%%%%%%%%%%%%%%%%%%%%%%
\subsection{Pairing and cancellation of restricted critical points}
\label{sec:js:rcp}

Since restricted critical points of $f_t$ must be cancelled in pairs, we need a 
mechanism to define such pairings. The 
\emph{topological persistence pairing}~\cite{EdeCohZom2002,ZomorodianCarlsson2005}
seems to be an obvious choice, where critical points are paired and removed 
in order of \emph{persistence}. However since persistence pairing is assigned 
globally, restricted critical points, which are not level set neighbors may be 
paired. These pairs cannot be cancelled without violating the locality condition, 
which prevents most simplifications. Therefore, we instead use a localized 
variant of persistence pairing that guarantees that each point on the Jacobi set 
is paired with one of its level set neighbors as described below.

Given a non-degenerate restricted critical point $v \in f_t$ and its two level set 
neighbors $u, w \in n_g(v)$, the goal is to understand how $f_t$ can be modified 
in a local neighborhood surrounding $v$, in order to cancel $v$ with either $u$ 
or $w$. Consider, e.g.\ $v_3$ shown in Figure~\ref{fig:ft}. One can lower 
$v_3$ to the level of $v_4$ canceling $(v_3,v_4)$, but cannot lower it to the level 
of $v_2$ as this would impact $v_4$, and thus become a non-local simplification. 
In general, each restricted critical point can be cancelled with only one of its level 
set neighbors in this fashion, and we call such a neighbor its {\it partner}. Formally, 
this relation between a restricted critical point and its partner can be described 
through a \emph{local pairing function}, $\p: \J \to \J$, such that for every $v \in \J$, 
its \emph{partner} $\p(v)$ is defined as  
(a) $v$, if $v$ is a degenerate critical point of $f_t$ or a critical point of $g$; or
(b) an arbitrary element in the set $\{u \mid \argmin_{u \in \n_g(v)} \|f_t(u)-f_t(v)\| \}$ 
otherwise. Intuitively, every non-degenerate restricted critical point $v$ is paired with 
one of its level set neighbors $u$ with minimal difference in function value. Then 
$(v, u)$ is referred to as a \emph{(local) pair}.  Notice that, $\p(v) = u$ does not imply 
$\p(u) = v$. Traveling along a Jacobi curve, the discontinuities of $\p(v)$ reflect a 
change in partner for $v$. Since BD points and extrema of $g$ are paired to themselves, 
$\p$ is continuous at such points. Figure~\ref{fig:ft} indicates the pairings between 
restricted critical points as directed arrows pointing from $v$ to its partner $\p(v)$.

\begin{figure*}[!t]
\centering
\subfigure[]{ 	\def\svgwidth{0.25\linewidth} 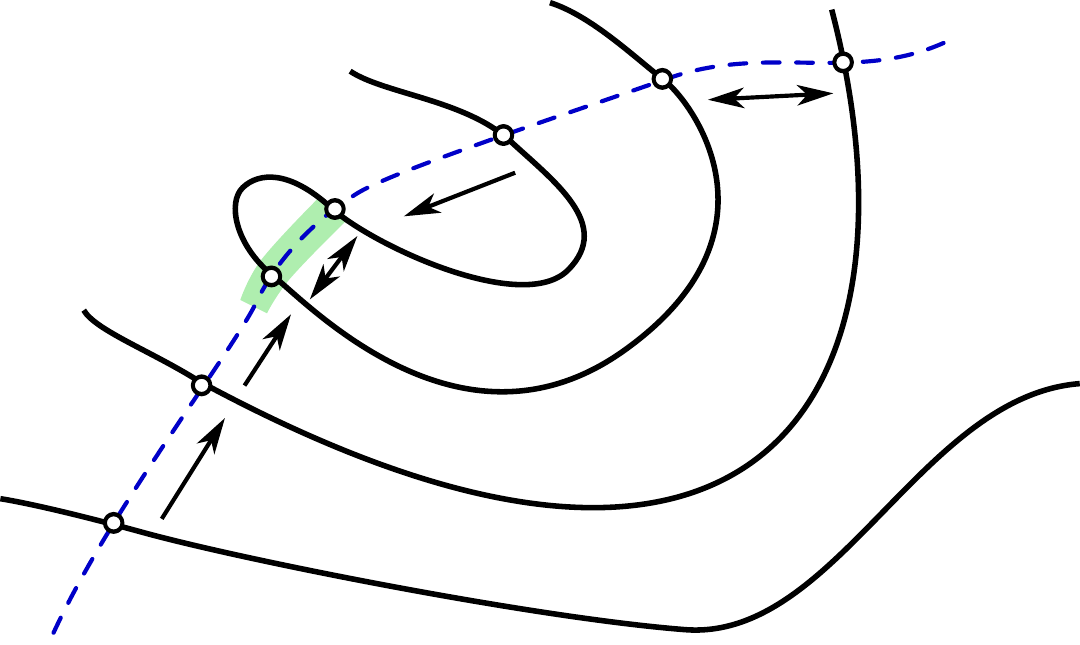 \label{fig:jacpairings}}
\subfigure[]{ 	\def\svgwidth{0.35\linewidth} 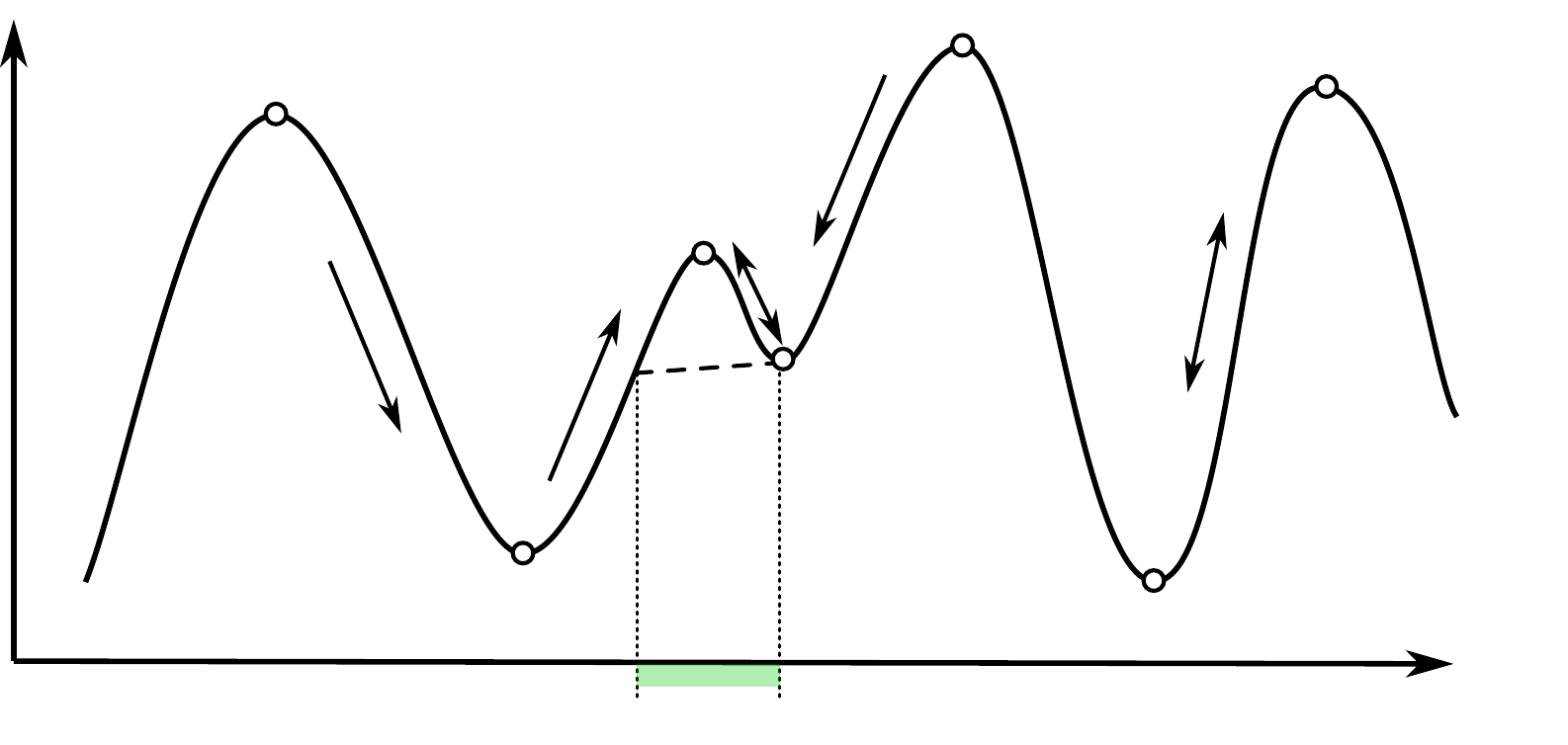 \label{fig:ftpairings}}
\caption{Illustration of restricted critical points and local pairings. %, and region of influence. 
(a) A Jacobi set $\J$ (black solid lines) intersects a level set $g^{-1}(t)$ (blue dashed line), and 
(b) the corresponding restricted function $f_t$ is shown. Local pairings among the restricted 
critical points in $f_t$ are indicated by arrows. The pair $(v_3,v_4)$ can be cancelled by 
lowering the maximum $v_3$ to match the value of $v_4$ (black dashed line in (b)). For the 
cancellation, its region of influence along the level set is shown in green (in both (a) and
(b)). It is the subset of the level set where the function value is modified, however, is 
shown as a thick region only for illustration. 
\label{fig:ft}}
\end{figure*}

Using the assigned local pairings between critical points of $f_t$, a simplification that removes 
critical points in pairs through atomic {cancellation} operations is given by the Morse Cancellation Theorem 
(Theorem 5.4 in~\cite{Milnor1965} as the First Cancellation Theorem, or~\cite{Morse1965}). 
We can perform such a cancellation by moving a critical point to the level of its partner to obtain a 
continuous simplified function $\overline{f_t}$, as shown in Figures~\ref{fig:overview_ft} 
and~\ref{fig:ftpairings}. To obtain a smooth $f_t^*$, an $\epsilon$-slope can be introduced 
in $\overline{f_t}$ while still maintaining locality. Consequently, for a pair $(v,u)$, a cancellation 
where $v$ is moved to the level of $u$ always guarantees locality. Notice in Figure~\ref{fig:ftpairings} 
that the pair $(v_5, v_6)$ could also be cancelled locally by bringing both points to a function value 
between $v_4$ and $v_7$. In general, one could potentially bring both points to a common 
intermediate value for a local cancellation. However, such cancellations may not admit valid 
simplification steps for reasons explained in Section~\ref{sec:js:regions}, and therefore are not 
considered. From now on, a cancellation induced by a pair $(v, u)$ always implies a procedure 
that moves $v$ to the level of $u$.
%%%%%%%%%%%%%%%%%%%%%%%%%%%%%%%%%%%
\subsection{Construction and cancellation of Jacobi regions}
%\label{sec:region}
\label{sec:js:regions}

%The cancellation of a pair of restricted critical points creates a piecewise-smooth 
%simplified function $f^*$, such that the restricted functions $f_t^*$ 
%are smooth. %(along the level sets of $g$). 
The cancellation of a pair of restricted critical points creates a smooth restricted function $f_t^*$. 
However, the function is still discontinuous across the level set \lset{g}{t}, since the 
neighboring restricted functions are unchanged. Hence, canceling a single pair of 
restricted critical points in isolation introduces unwanted discontinuities, and 
therefore violates the smoothness condition of a valid simplification. Instead, one can 
extend these cancellations to adjacent restricted functions, which, however, violates 
the consistency condition of a valid simplification. For example, consider the scenario 
shown in Figure~\ref{fig:discontinuity2}. Canceling $(u,v) \in f_{t_0}$ creates a 
discontinuous simplified function. This modification can be extended to an adjacent 
region $f_{[t_0-\epsilon,t_0+\epsilon]}$ allowing the creation of a smooth function 
$f^*$ at $t_0$ which cancels $(u, v)$. However, since $\J$ is now modified beyond the 
level set $g^{-1}(t_0)$, it is no more a consistent simplification. 

\begin{figure*}[t]
\centering
\subfigure[]{ 	\def\svgwidth{0.265\linewidth} 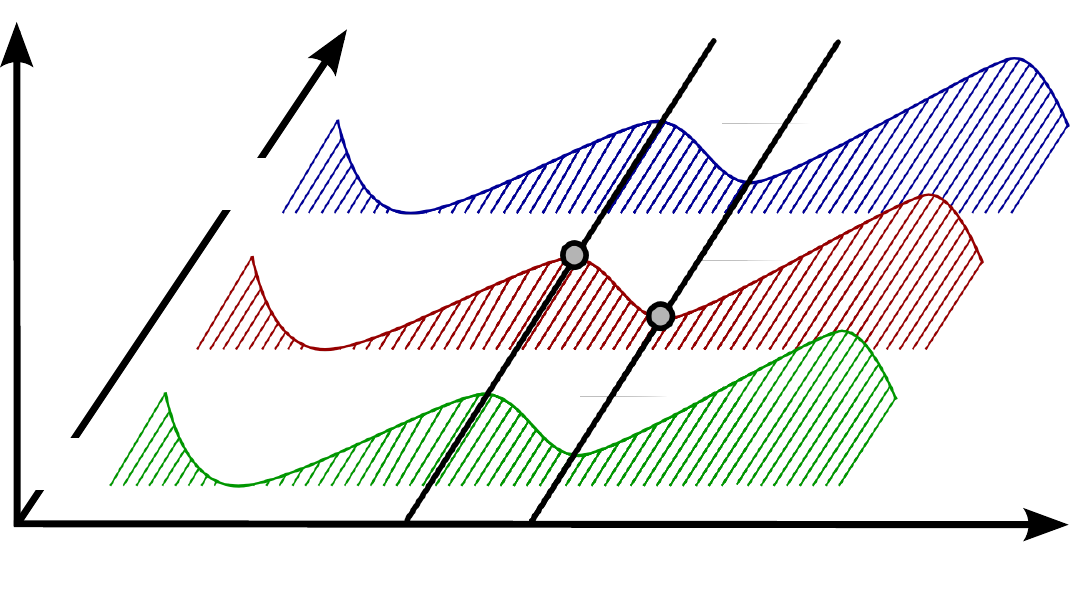 \label{fig:discont2_orig}}
\subfigure[]{ 	\def\svgwidth{0.265\linewidth} 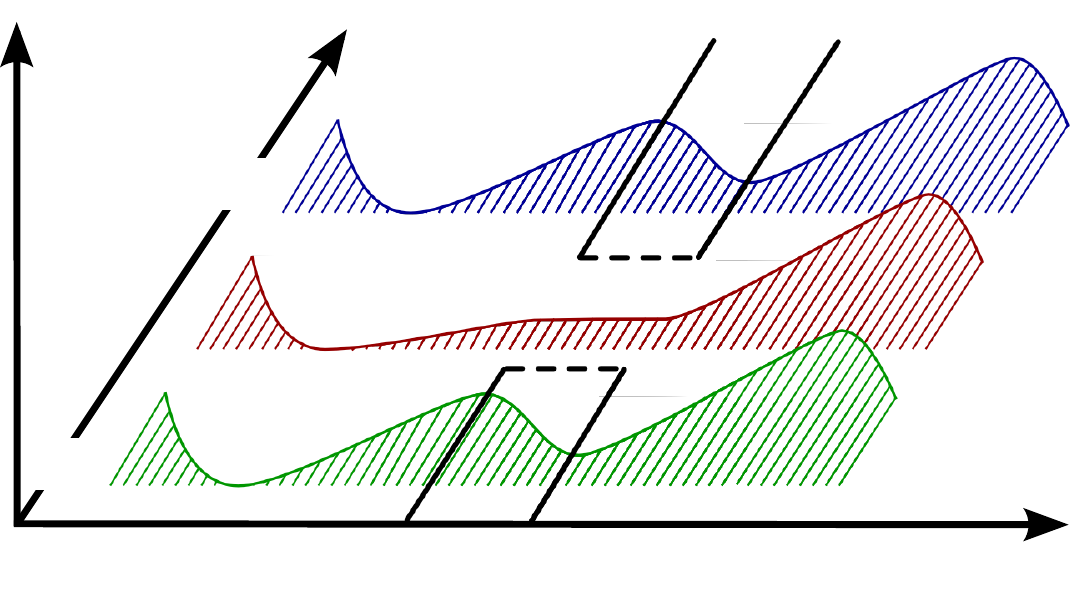 \label{fig:discont2_canc}}
\subfigure[]{ 	\def\svgwidth{0.265\linewidth} 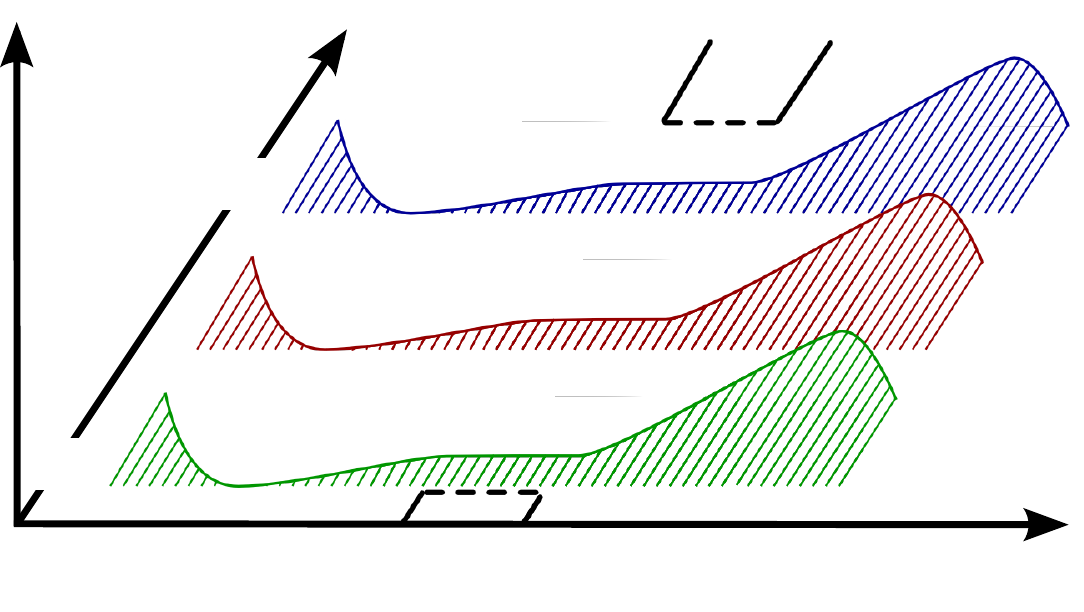 \label{fig:discont2_canc2}}
%\vspace{-5mm}
\caption{Cancellation of a pair of restricted critical points $(u,v) \in f_t$. 
(a)~The original $f_t$'s and  Jacobi set (in black). 
(b)~Canceling $(u,v)$ in $f_t$ in isolation creates a discontinuity across $t = t_0$, and hence is invalid. 
(c)~Extending the cancellation to $f_{[t_0-\epsilon,t_0+\epsilon]}$ creates a smooth $f^*$, but the cancellation is inconsistent since $\J$ outside $[t_0,t_0]$ is modified.
\label{fig:discontinuity2}}
%\vspace{-8mm}
\end{figure*}

Therefore, one must cancel connected sets of neighboring restricted critical points 
that are paired consistently. To understand their construction, we define 
\emph{switch points} as the set of points in $\J$ where $\p$ is not continuous, and 
\emph{boundary points} which are either switch points, BD points, or critical points of 
$g$. Then, the Jacobi set $\J$ can be decomposed into a set of non-overlapping
\emph{Jacobi segments}, which are maximal open subsets of $\J$ separated by 
boundary points. By definition, restricted critical points within the interior of Jacobi 
segments are consistently paired since $\p$ is continuous, and thus $\p$ induces 
a pairing between segments. Finally, we define {\it image points} as the 
level set neighbors of boundary points. Together, the boundary points and the image 
points decompose the Jacobi set into pieces $\alpha_i$ that have mutually consistent 
pairing, meaning that $\p$ is continuous both on $\alpha_i$ and its partner 
$\beta_i = \p(\alpha_i)$.  Given two such maximal subsets of Jacobi segments which 
are level set neighbors parametrized as $\alpha, \beta: (a,b) \to \J$, we call their
bounded region $R_{(a,b)}(\alpha,\beta)$ a {\it Jacobi region}. Similar to the point-wise 
cancellation, the entire segment $\alpha$ can be moved to the level of $\beta$ to cancel 
both the segments. Figure~\ref{fig:regions} shows boundary and image points, Jacobi segments, 
and Jacobi regions as pairings between them for a typical Jacobi set configuration.

\begin{figure}[t]
\centering
 	\def\svgwidth{0.3\linewidth} 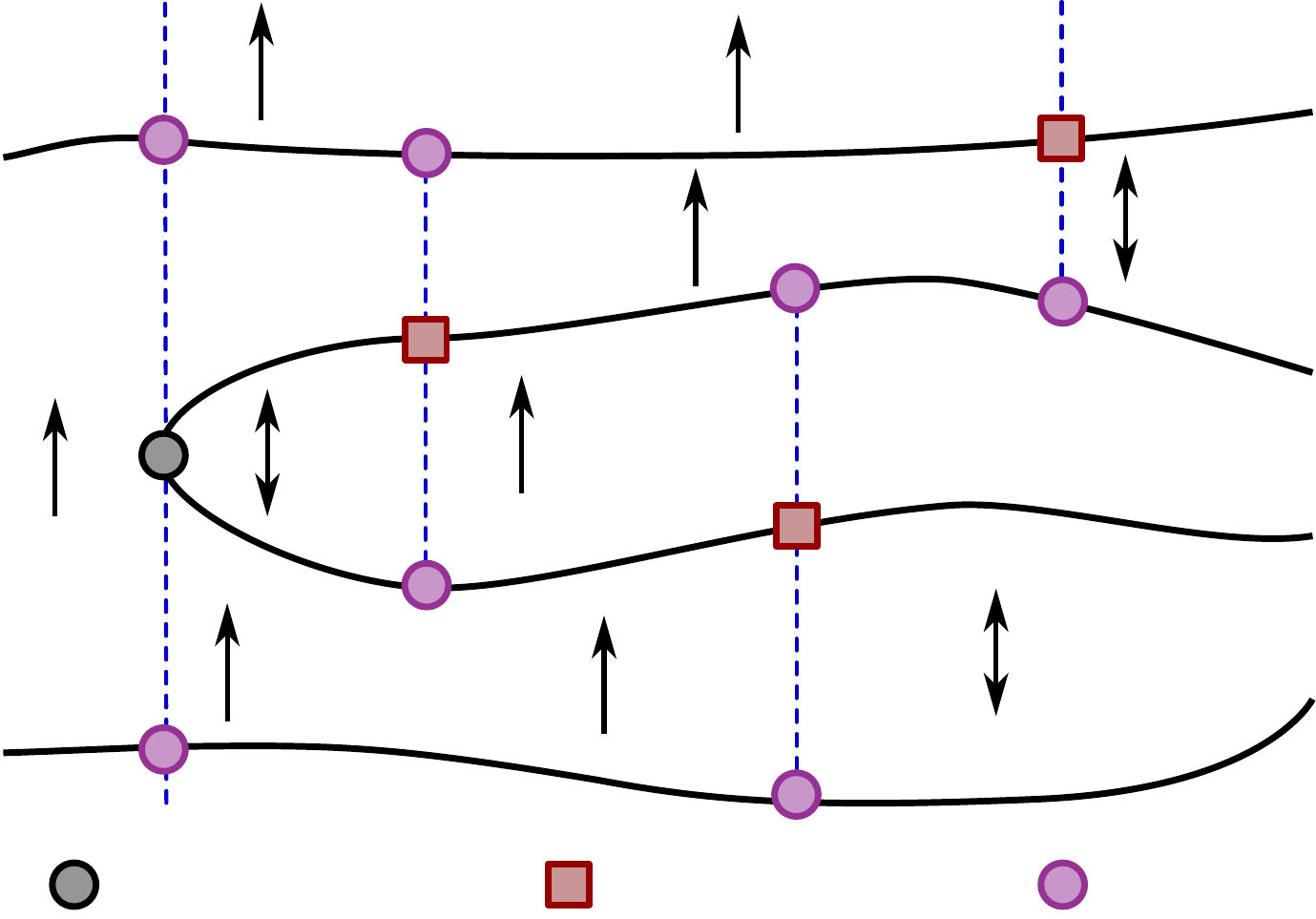
\caption{Illustration of Jacobi regions as pairings between Jacobi segments.\label{fig:regions}}
\end{figure}

There exist various classes of Jacobi regions with different implications on the simplification 
process. A Jacobi region is called {\it regular} if its closure does not contain BD points or critical
points of $g$. Regular regions have four ``corners'' made up of two switch and
two image points, e.g.\ $R_5, R_8$ in Figure~\ref{fig:regions}. With slight abuse of notation, 
we denote a corner as $\alpha(a) = \lim_{t \to a} \alpha(t)$. We further identify special but not 
mutually exclusive types of regions shown in Figure \ref{fig:cases}: 
(a) {\it BD internal regions} where $\alpha$ and $\beta$ share at least one BD point, i.e.\ $\alpha(a) = \beta(a)$ and/or $\alpha(b) = \beta(b)$;
(b) {\it BD side region} where $\alpha$ and/or $\beta$ are bounded by a BD point but $\alpha(x) \ne \beta(x)$, for all $x \in [a,b]$;  
(c) {\it BD external region} where the boundary of the region contains a BD point but neither $\alpha$ nor $\beta$ does; 
(d) {\it Saddle region} where the boundary of the region contains a saddle of $g$ but neither $\alpha$ nor $\beta$ does; and 
(e) {\it Extremal region} containing an extremum of $g$. 

\begin{figure*}[t]
\centering
%\vspace{-4mm}
\hspace{-4em}
\subfigure[]{  \def\svgwidth{0.22\linewidth}	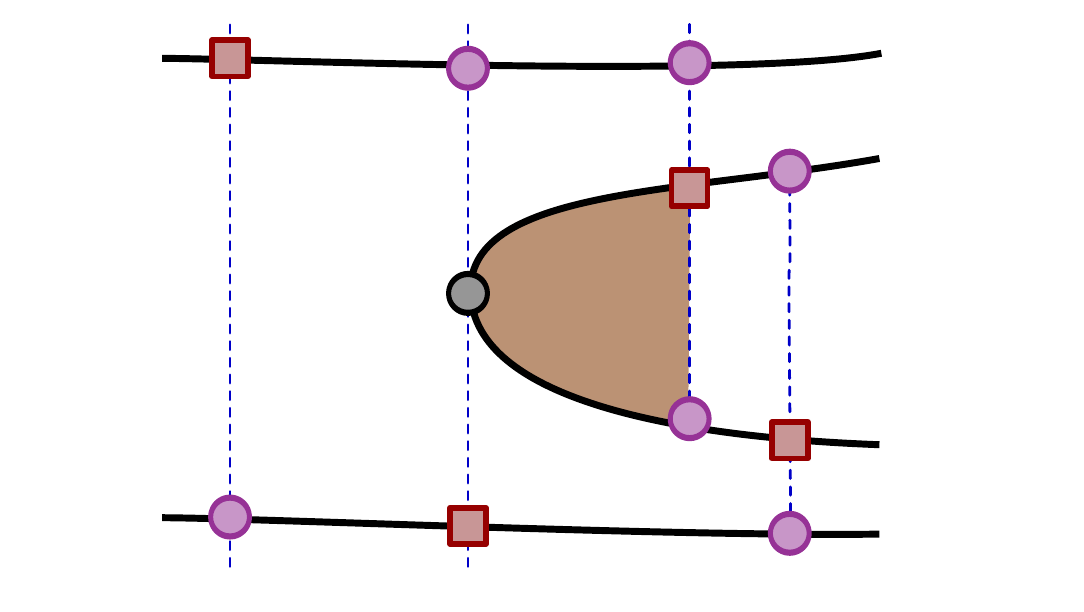 \label{fig:case_internal}}
\hspace{-2em}
\subfigure[]{  \def\svgwidth{0.22\linewidth}	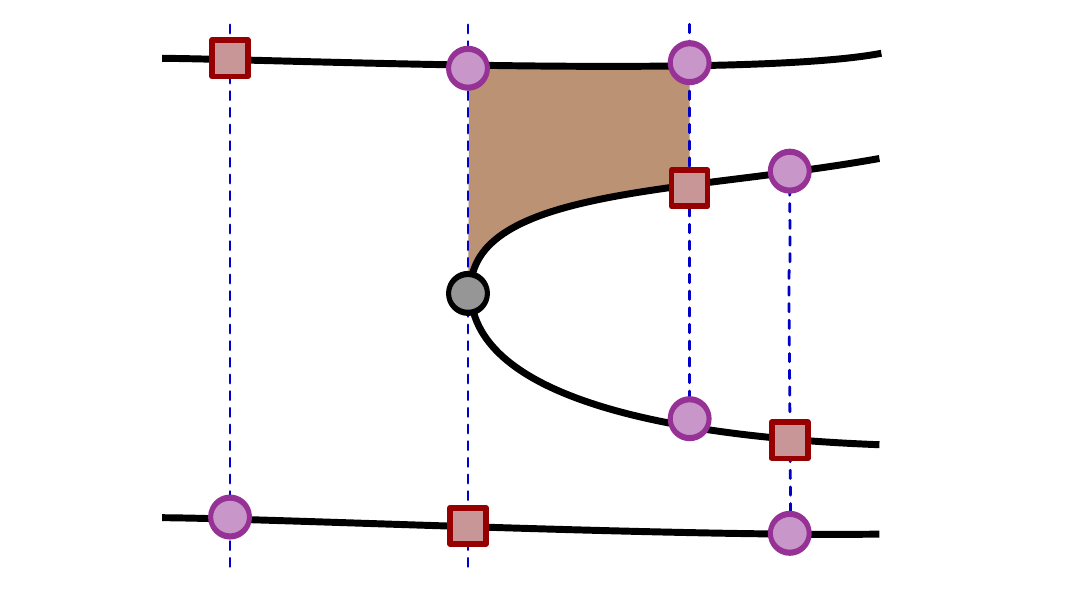 \label{fig:case_side}}
\hspace{-2em}
\subfigure[]{ 	\def\svgwidth{0.22\linewidth} 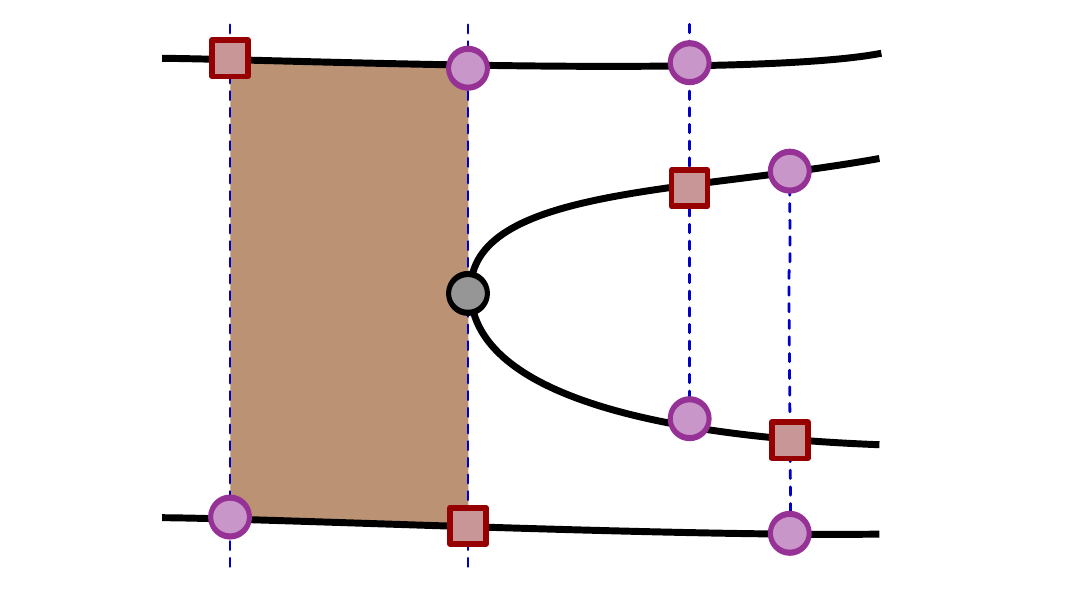 \label{fig:case_external}}
\hspace{-2em}
\subfigure[]{  \def\svgwidth{0.22\linewidth}	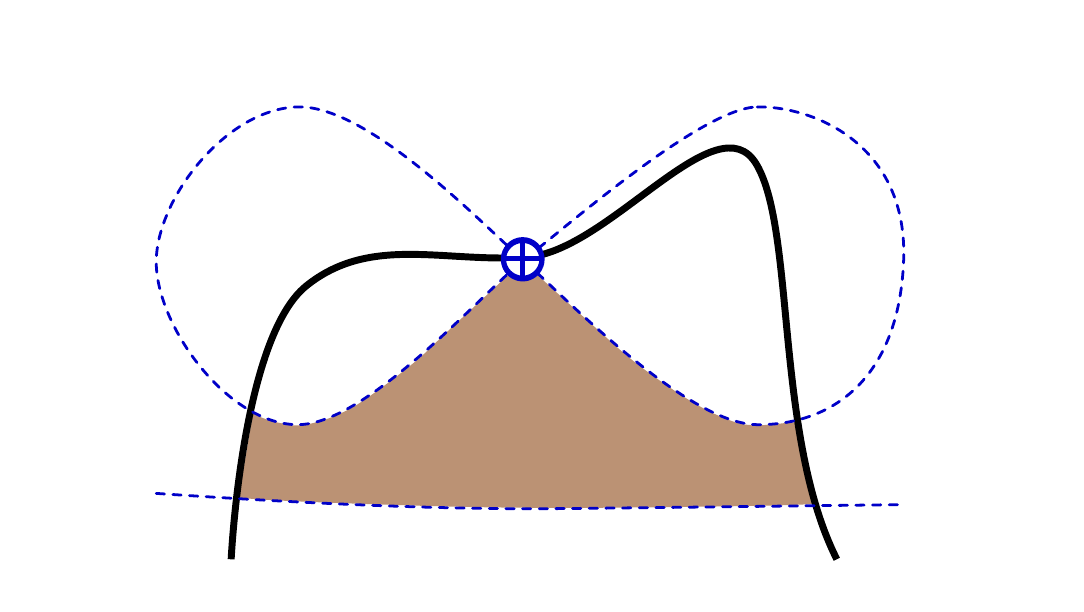 \label{fig:case_saddle}}
\hspace{-2em}
\subfigure[]{  \def\svgwidth{0.22\linewidth}	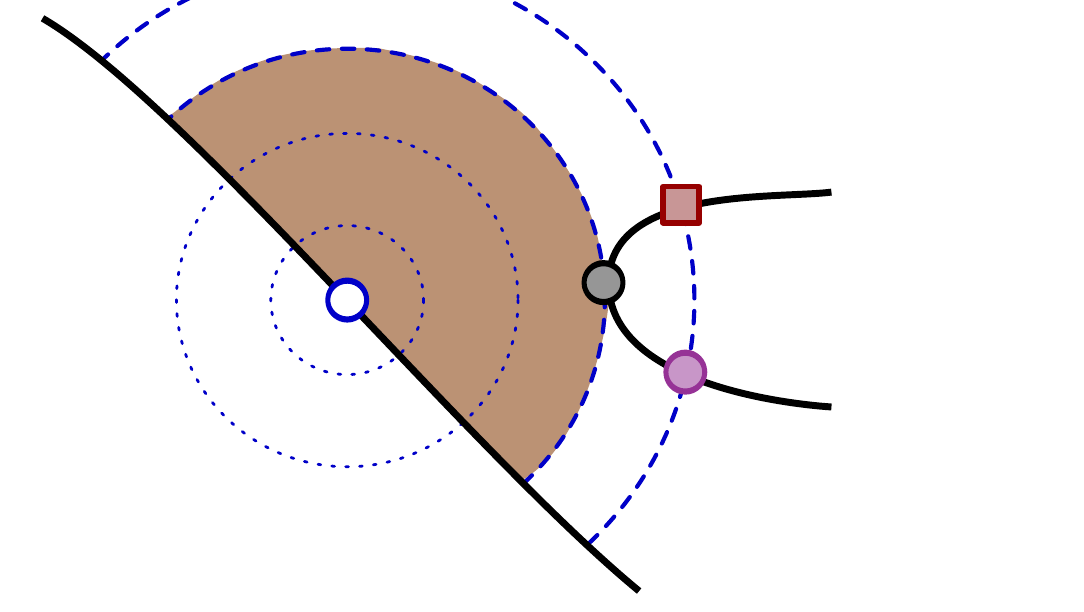 \label{fig:case_extremal}}
\hspace{-4em}
%\vspace{-4mm}
\caption{Special Jacobi regions: (a) BD internal; (b) BD side;  (c) BD external; (d)~Saddle; and (e) Extremal
  region.\label{fig:cases}}
%\vspace{-4mm}
\end{figure*}

\begin{figure*}[b]
\centering
%\hspace{-0.5em}
\subfigure[]{	\def\svgwidth{0.27\linewidth} 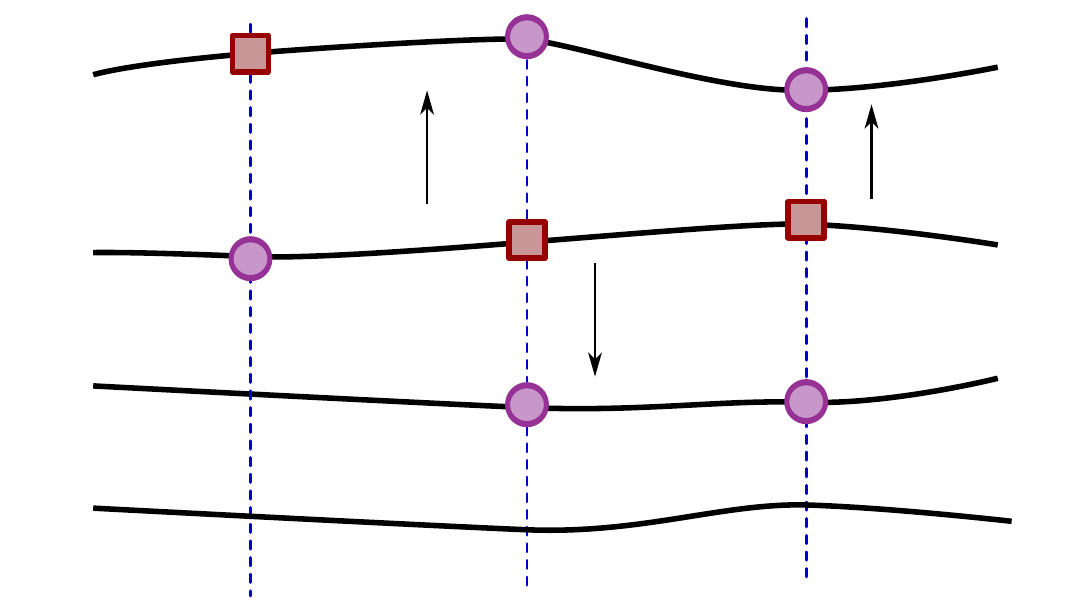 \label{fig:region_regular_orig}}
%\hspace{-0.5em}
\subfigure[]{ 	\def\svgwidth{0.27\linewidth} 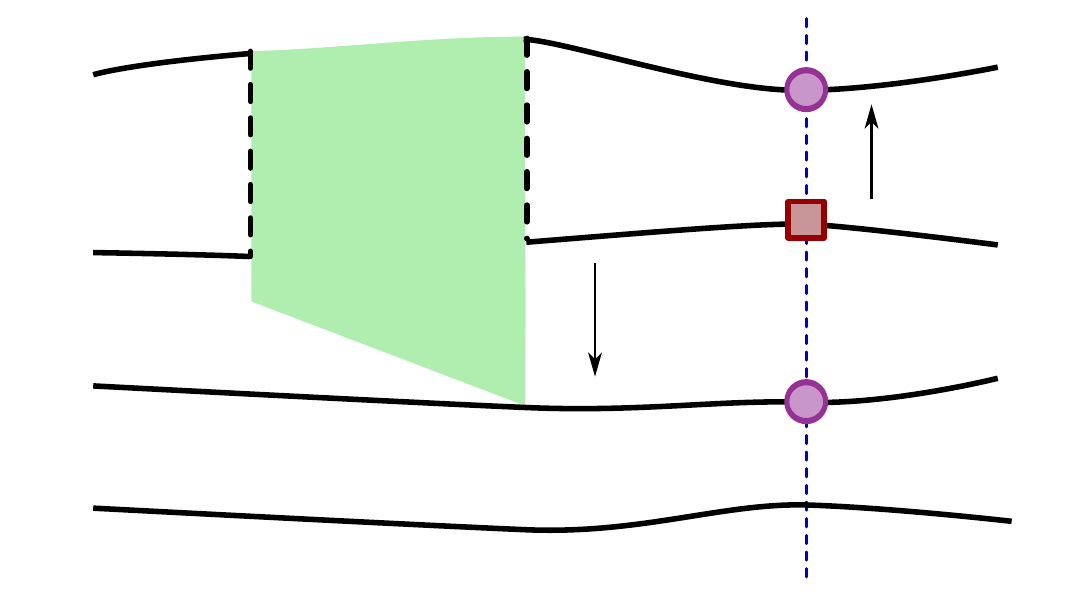 \label{fig:region_regular_canc}}
%\hspace{-0.5em}
%\subfigure[]{ 	\def\svgwidth{0.27\linewidth} \import{../figs/}{region_regular_canc2.pdf_tex}\label{fig:region_regular_canc2}}
\subfigure[]{ 	\def\svgwidth{0.27\linewidth} 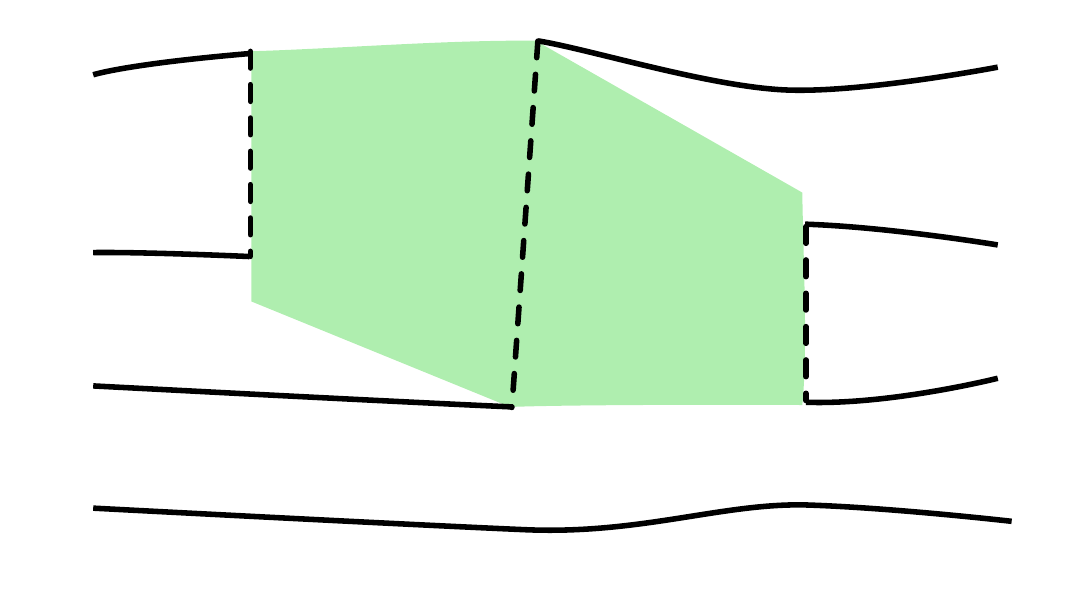\label{fig:region_regular_canc2}}
%\hspace{-0.5em}
%\vspace{-4mm}
\caption{Cancellation of $R_0 = R_{(t_1, t_2)}(\alpha,\beta)$ and $R_1 =
  R_{(t_2, t_3)}(\beta,\gamma)$ with regions of influence shown in green. 
  The black dashed lines represent the points added to the Jacobi set to connect the existing loops.\label{fig:cancellation}}
%\vspace{-4mm}
\end{figure*}

By construction, Jacobi segments are paired consistently within each region. Except 
for extremal regions, which already contain the minimal number of restricted critical 
points, boundary segments of a Jacobi region $R_{(a,b)}(\alpha,\beta)$ (such that 
$\p(\alpha(t)) = \beta(t)$) can be cancelled by setting $\overline f_t(\alpha(t)) = f_t(\beta(t))$ 
for all $t \in (a,b)$.  As shown for region $R_0$ in Figure~\ref{fig:region_regular_canc}, 
this cancellation will modify $f$ only within a small neighborhood around $R_0$ still 
bounded by $g^{-1}(t_1)$ and $g^{-1}(t_2)$. We call the modified region as the {\it region of influence} 
of the corresponding cancellation and point out that it does not contain portions of $\J$ 
not part of $R_0$, and thus satisfies the consistency condition. While this creates a continuous
$\overline f$ in the region, $\overline f$ is still discontinuous at the boundary, and 
constructing a corresponding smooth $f^*$ requires a non-local change. However, consider
the cancellation of $R_1$ following the cancellation of $R_0$ as shown in 
Figure~\ref{fig:region_regular_canc2}. By construction, the region of influence of $R_1$ matches 
that of $R_0$ at their shared boundary along $g^{-1}(t_2)$, since $\beta(t_2)$ is a switch point
(where $f_t(\alpha(t_2)) = f_t(\gamma(t_2))$).  Hence, continuing the cancellation in the 
obvious manner across $g^{-1}(t_2)$ creates a valid cancellation covering the interval $(t_1,t_3)$. 
Note that this would not be possible if the cancellation of $R_0$ modified both $\alpha$ 
and $\beta$, thus we choose to modify the values of either $\alpha$ or $\beta$ (as pointed out 
in Section~\ref{sec:js:rcp}). In general, given two regular regions $R_{(t_1,t_2)}(\alpha_i,\beta_i)$ 
and $R_{(t_2,t_3)}(\alpha_j,\beta_j)$ sharing a switch point, there always exists a valid simplification 
on the interval 
$(t_1,t_3)$ which removes $\alpha_i,\beta_i,\alpha_j,\beta_j$ as well as their shared
switch and image points from $\J$.

{In order to obtain a valid $\J(f^*,g)$ consisting of closed loops, the simplification must also reconnect 
the portions of $\J(f,g)$ rendered disconnected due to the cancellations. For a continuous 
simplification, this connection can be made within a single restricted function. However, a smooth 
simplification demands modifications which can not be confined locally. For example, consider the dashed 
line in Figure~\ref{fig:region_regular_canc2} connecting segments $\alpha$ and $\gamma$ which shows such 
a transition. Without loss of generality, assume $\alpha(t)$ and $\gamma(t)$ to be maxima. The corresponding 
restricted functions in $[t_2 - \epsilon, t_2 + \epsilon]$ are shown in Figure~\ref{fig:transition}. For 
cancellation of restricted critical points, $\beta(t)$ is moved towards $\alpha(t)$ for $t < t_2$, and towards 
$\gamma(t)$ for $t > t_2$. Figure~\ref{fig:transition_canc1} shows the restricted function when the transition 
is made with in a single level set. However, to obtain a smooth transition, the simplification must also modify 
$\gamma$ in $(t_2-\epsilon,t_2)$, and $\alpha$ in $(t_2,t_2+\epsilon)$. As shown in Figure~\ref{fig:transition_canc2}, 
the maxima $\gamma(t)$ and $\alpha(t)$ in the corresponding ranges are spatially shifted towards $\beta(t)$ 
such that $\beta(t_2)$ now becomes a maxima, i.e.\ $f^*(\beta(t_2)) = f(\alpha(t_2)) = f(\gamma(t_2))$. 
%This shifting can also be seen as addition of new maxima in $(t_2-\epsilon, t_2+\epsilon)$. 
Since such a transition can always be created at switch points, for simplicity in the following figures, we 
assume a smooth transition and illustrate them as vertical lines (along a single level set). Although this transition may appear to 
be a non-local and hence an invalid simplification, we remind the reader that locality is required for 
continuous simplification only. Furthermore, since the Jacobi set remains unchanged outside of $[t_1,t_3]$, 
consistency is maintained, therefore, producing a valid simplification.}

\begin{figure*}[t]
\centering
\subfigure[]{ 	\def\svgwidth{0.31\linewidth} 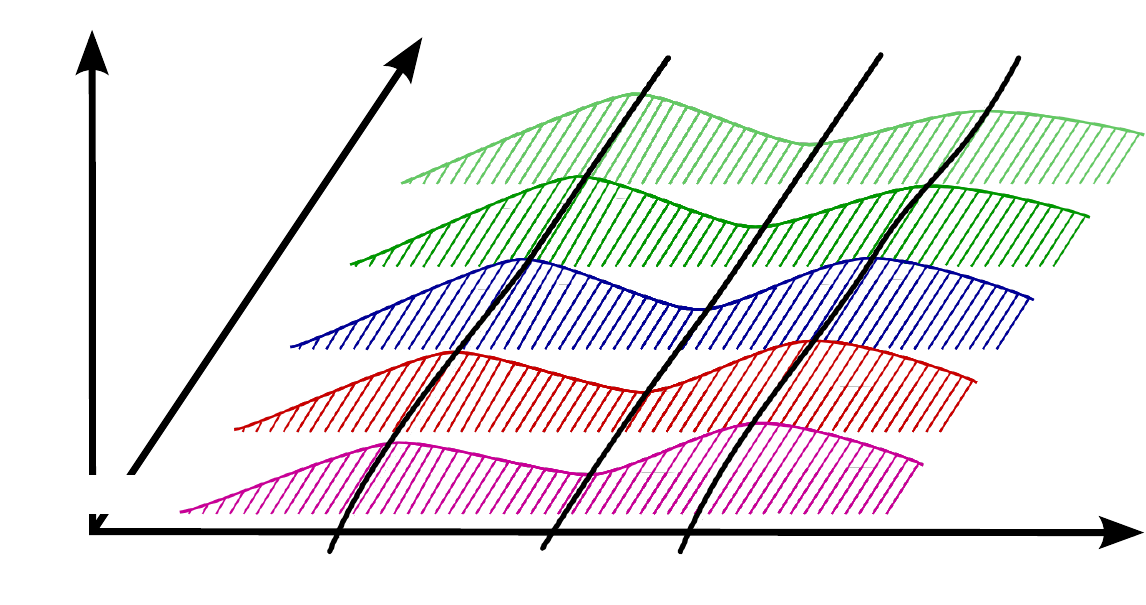 \label{fig:transition_orig}}
\subfigure[]{ 	\def\svgwidth{0.31\linewidth} 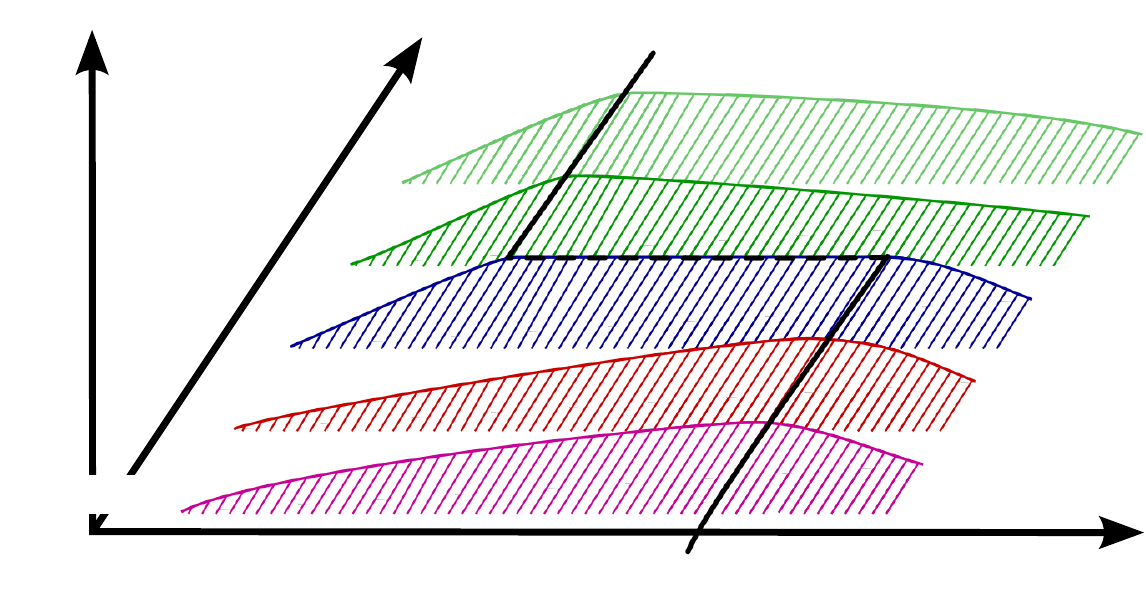 \label{fig:transition_canc1}}
\subfigure[]{ 	\def\svgwidth{0.31\linewidth} 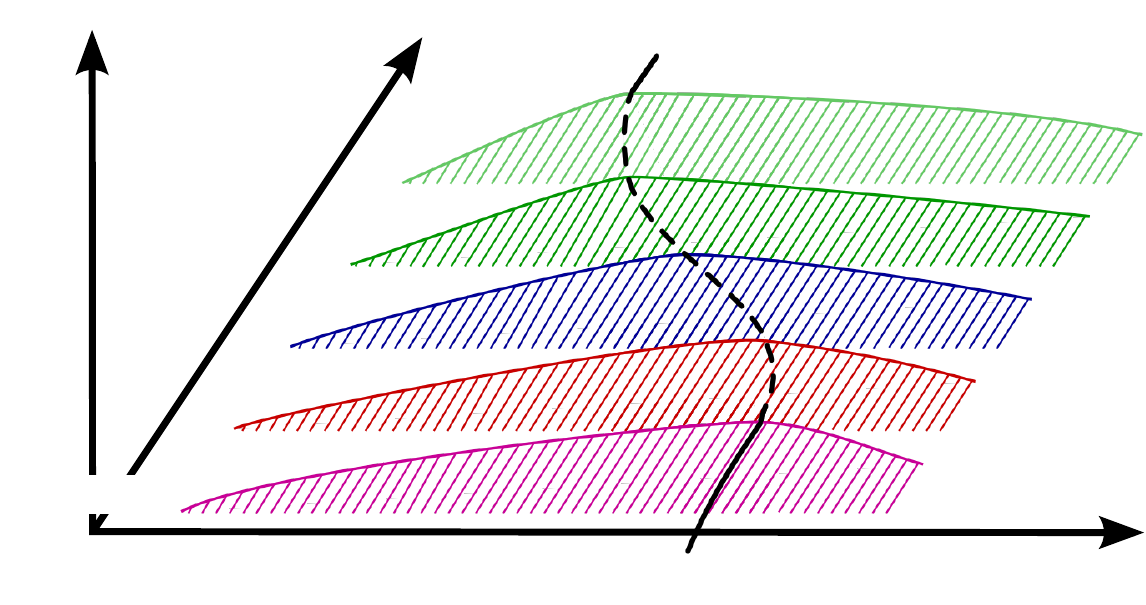 \label{fig:transition_canc2}}
\caption{The existing Jacobi curves may be connected smoothly by adding restricted critical points shown 
along the dashed curve. (a) Zoom-in view of the restricted functions in the neighborhood of \lset g {t_2} from 
Figure~\ref{fig:region_regular_orig}. (b) A continuous simplification can be obtained by creating a transition with in a single 
level set. (c) However, for a valid simplification, a smooth transition must be made by modifying the restricted functions in 
 $[t_2 - \epsilon,t_2 + \epsilon]$.\label{fig:transition}}
\end{figure*}

%%%%%%%%%%%%%%%%%%%%%%%%%%%%%%%%%%%

\subsection{Construction and cancellation of Jacobi sequences}
\label{sec:js:seq}

As discussed above, one can construct (partially) valid simplifications by simultaneously 
canceling adjacent Jacobi regions. In this section, we describe how to assemble 
\emph{Jacobi sequences} as ordered sets of regions that allow a valid simplification. 
Formally, we call two Jacobi regions \emph{adjacent} if they share a boundary point, and 
we use the function value of $g$ to induce an ordering among adjacent regions. To 
construct a sequence that admits a valid simplification, it is important to understand 
(a) where such a sequence may start or end; and (b) how to construct its corresponding 
simplification $\overline f$ and ultimately $f^*$. 

Following the discussion in Section~\ref{sec:js:regions} we claim that valid sequences are naturally
bounded by BD internal regions. This is because at the BD point, the region of influence shrinks to 
a single point and any arbitrary small interval outside the BD point allows the construction of a 
smooth $f^*$. More specifically, consider a sequence of Jacobi regions covering the interval $(a,b)$ 
that starts and ends with BD internal regions, and contains only regular regions otherwise. Given the 
discussion above, for any $\epsilon > 0$ we can create a smooth $f^*$ covering the interval 
$(a-\epsilon,b+\epsilon)$ which cancels all restricted critical points in the closure of the sequence. 
By construction $f^*$ is local, smooth, and consistent, and thus forms a valid simplification. 

Further, we note that BD external, BD side, extremal, and saddle regions can never be part of a valid simplification. 
Refer to Figure~\ref{fig:cases} and notice that it is not possible to continue across the BD point for BD external 
and BD side regions, since the discontinuity across the level set of BD point can not be removed locally. Similar 
argument holds for a saddle region, whose cancellation leaves unresolvable discontinuity around the saddle. 
Finally, an extremal region cannot be cancelled since all the level sets inside the region contain only two restricted 
critical points, and cannot be simplified further. 

As a result, valid sequences are comprised of only regular regions and BD internal regions, where they must begin 
and end with a BD internal region. Therefore, all sequences are seeded at BD internal regions and constructed 
by progression into adjacent regions monotonically in $g$ until another BD internal region is encountered, at 
which point the sequence is considered complete. Due to the ordering imposed on adjacent regions, a sequence 
cannot form loops. 

%Further, we note that BD external, BD side, extremal, and saddle regions can never be part of a valid simplification. 
%Refer to Figure~\ref{fig:cases} and notice that it is not possible to continue across the BD point for 
%BD external and BD side regions, since the discontinuity across the level set of BD point can not be removed locally. 
%Similar argument holds for a saddle region, whose cancellation leaves unresolvable discontinuity around the saddle. 
%Finally, an extremal region cannot be cancelled since all the level sets inside the region contain only two restricted 
%critical points, and cannot be simplified further. Therefore, if a sequence encounters any of these regions during its 
%construction, it is considered invalid and discarded.

On the other hand, if during its construction, a sequence encounters any of the regions that can not be simplified,  
it is considered invalid and discarded. 
Although such regions can invalidate some sequences, this does not stop the simplification from progressing. 
If no valid sequence exists due to the presence of saddle and/or extremal regions, we perform a conventional 
$2$D critical point cancellation in $g$ to create new sequences. This cancellation does not change the Jacobi 
set structurally, and can be done independent of any sequence cancellation. Section~\ref{sec:saddle} discusses 
saddle cancellation in detail. Again, the BD external or BD side regions may invalidate some sequences. 
However, in such a case, we can always seed a new sequence from the corresponding BD internal region. 

From Section~\ref{sec:js:regions}, we know that a region $R_{(a,b)}(\alpha, \beta)$, such that 
$\p(\alpha(t)) = \beta(t)$ for all $t \in (a,b)$, can be cancelled by moving the segment $\alpha$ to the 
level of $\beta$, that is, by setting $\overline f(\alpha(t)) = f(\beta(t))$.  However, if the region is 
\emph{mutually paired}, meaning $\p(\alpha(t)) = \beta(t)$ and $\p(\beta(t)) = \alpha(t)$, one can move 
either $\alpha$ or $\beta$. This provides flexibility in sequence construction, as one can smoothly transition 
from moving $\alpha$ to moving $\beta$. Since valid simplification requires cancellation of adjacent regions 
in which the same segment can be moved to its respective partners, it follows that one can potentially cancel 
either of the two adjacent regions after canceling $R$. For example, consider Figure~\ref{fig:bidirectional} 
where regions $R_0$ and $R_1$ are already a part of a Jacobi sequence. For cancellation in $R_0$, the 
segment $\beta$ is moved to match the value of $\alpha$.  For cancellation in $R_1$, we can either continue 
moving $\beta$ towards $\gamma$, or switch segments by smoothly transitioning from moving $\beta$ to 
moving $\gamma$. The former leads to the sequence $\{R_0, R_1, R_2\}$ where $\beta$ is moved to its 
respective partners in all regions, while the latter leads to $\{R_0, R_1, R_3\}$ where $\beta$ and $\gamma$ 
are moved in $R_0$ and $R_3$ respectively, while a transition between moving $\beta$ and moving $\gamma$ 
is performed in $R_1$. 

\begin{figure*}[t]
\centering
%\vspace{-2mm}
\subfigure[]{ 	\def\svgwidth{0.26\linewidth} 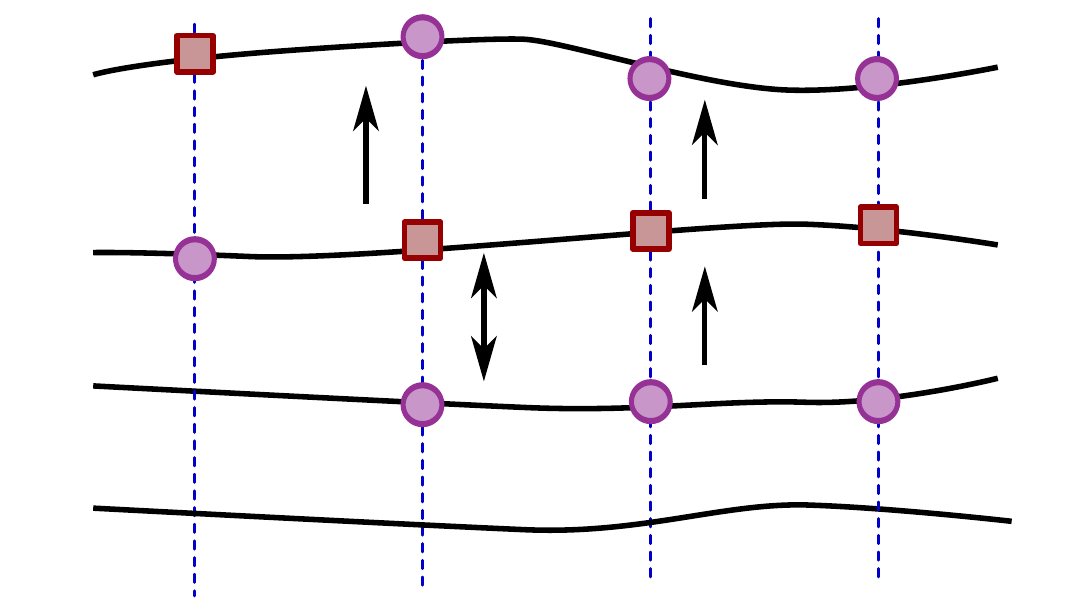 \label{fig:bidirectional1}}
\subfigure[]{ 	\def\svgwidth{0.26\linewidth} 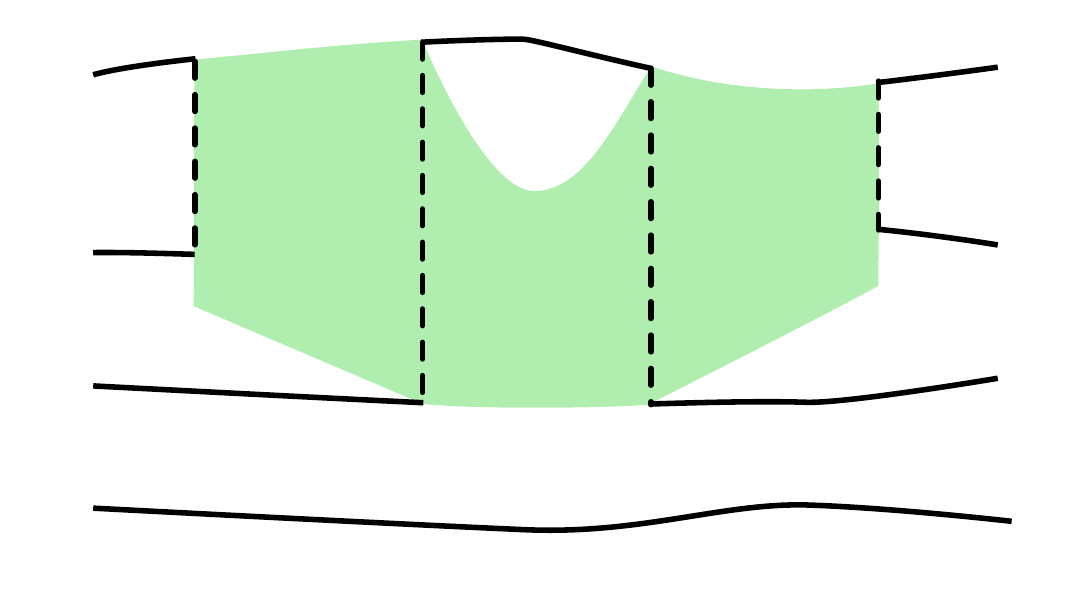 \label{fig:bidirectional2}}
\subfigure[]{ 	\def\svgwidth{0.26\linewidth} 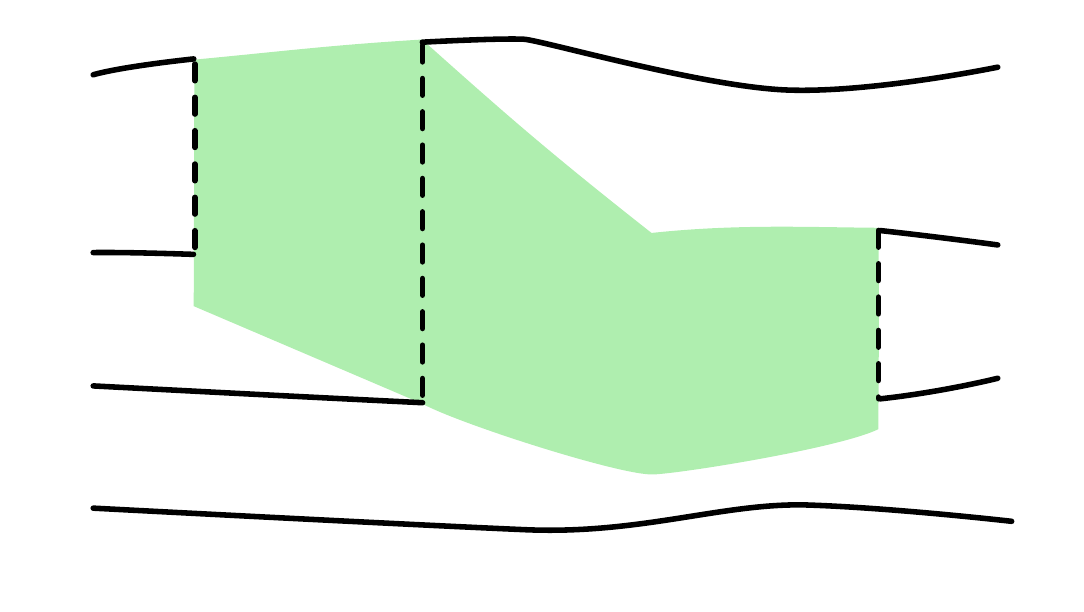 \label{fig:bidirectional3}}
%\vspace{-4mm}
\caption{Mutually-paired regions offer a choice of the segment to be moved. %In mutually-paired regions, both segments can be chosen to be modified during cancellation.
(a) Original configuration, % of Jacobi regions 
where $R_0$ can be cancelled by moving $\beta$ towards $\alpha$.
Subsequently, $R_1$ can be cancelled by: (b) moving $\beta$ to $\gamma$ leading to the sequence $\{R_0, R_1, R_2\}$.
(c) smoothly transitioning between moving $\beta$ to moving $\gamma$ % are moved to a common intermediate value, which is changed smoothly, 
leading to sequence $\{R_0, R_1, R_3\}$.} 
 \label{fig:bidirectional}
%\vspace{-2mm}
\end{figure*}

We point out that the critical points of $f$ are naturally cancelled as critical points 
of $f_t$. Further, the construction and cancellation of Jacobi sequences can 
handle general form of structural changes to the Jacobi set.  As examples, we show the merging and 
removal of loops from Jacobi set in Figure~\ref{fig:result_loops}.

\begin{figure*}[t]
\centering
\subfigure[]{ \label{fig:loop_cancel}
  \includegraphics[width=0.23\linewidth]{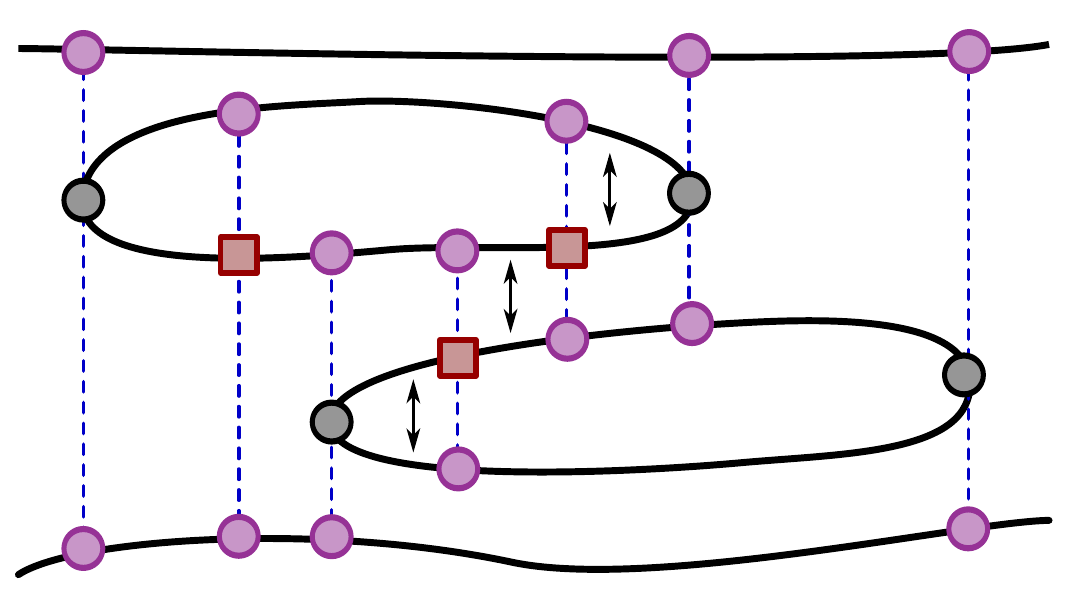}
  \includegraphics[width=0.23\linewidth]{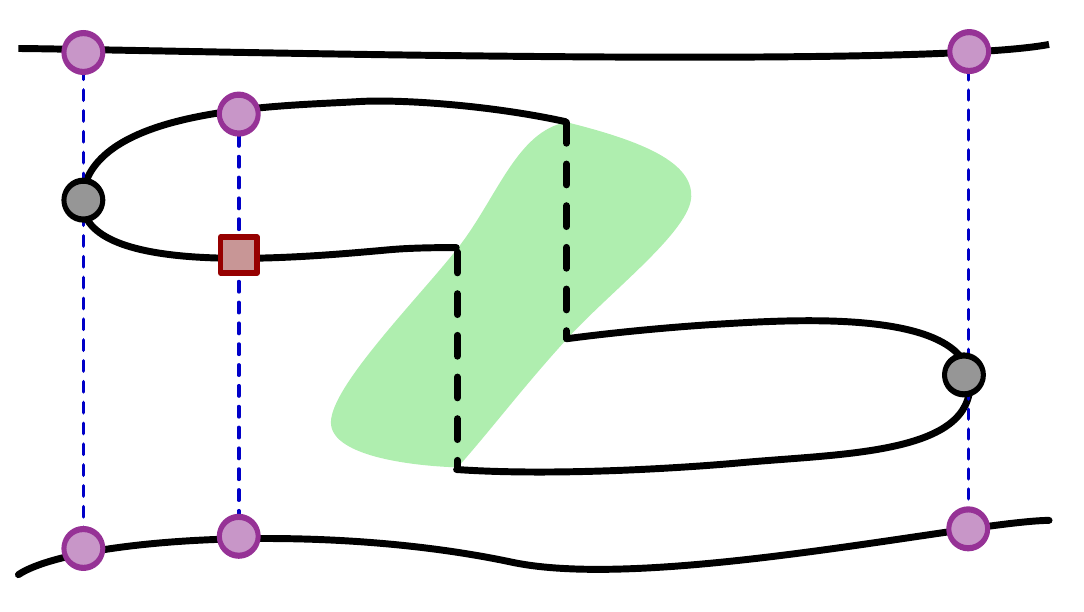}}
\subfigure[]{\label{fig:loop_cancel2}
\includegraphics[width=0.23\linewidth]{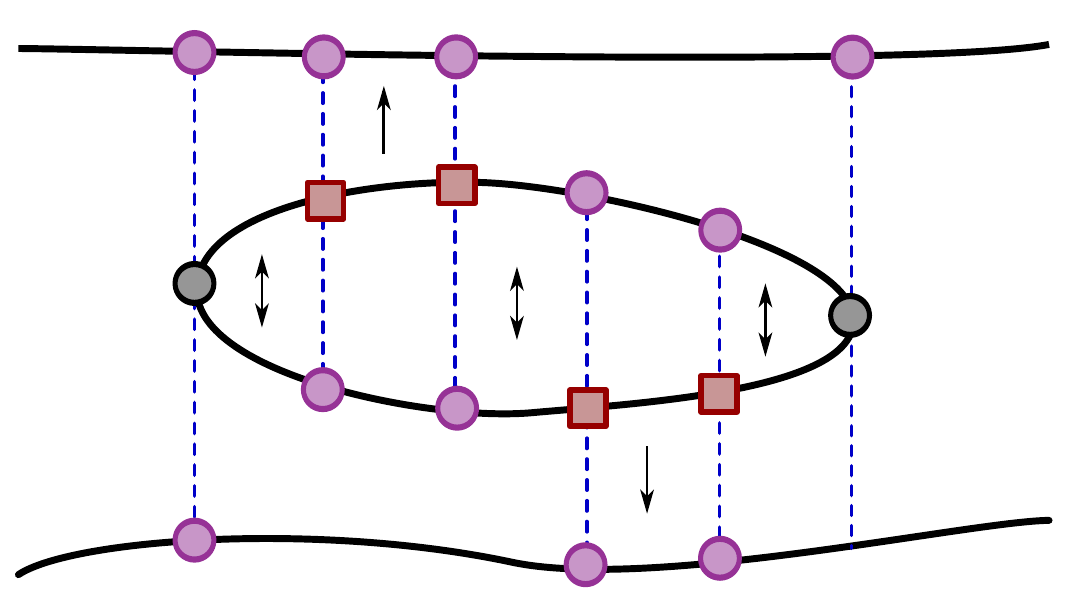}  
  \includegraphics[width=0.23\linewidth]{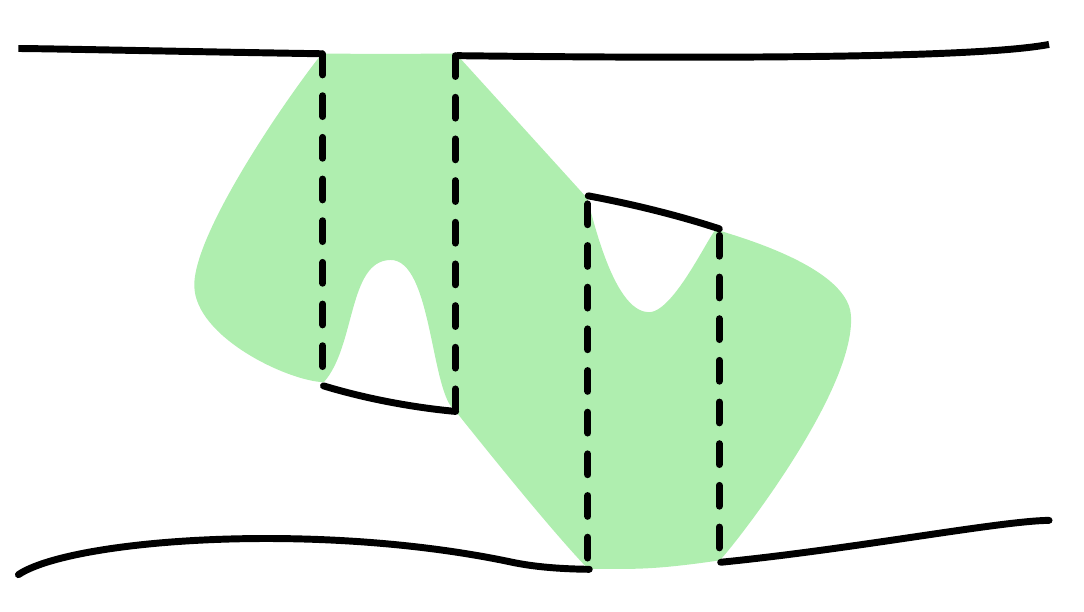}}
%\vspace{-4mm}
\caption{Our simplification algorithm performing (a) merging and 
	(b) removal of loops. Pairings are shown only for the regions that are cancelled, and the corresponding regions of influence 
	for the simplification.\label{fig:result_loops}}
%\vspace{-3mm}
\end{figure*}

\subsection{Ordering the cancellations}
\label{sec:js:ordering}

In order to obtain a hierarchy on the simplification process, we need to define a metric to measure the amount 
of modification needed for each simplification step. Although the choice of the metric is flexible, we choose a 
gradient-based metric capable of measuring the relative variation between the two functions inside a region, 
i.e.\ the \emph{comparison measure} $\kappa(R)$ (see Section~\ref{sec:fundamentals}). Our choice is inspired 
by the fact that the cancellation of a region creates a \emph{flat} $f^*$ in its interior, i.e.\ $\|\grad f^*\| \leq \epsilon$. 
An alternative formulation of $\kappa$~\cite{Vij2004} by rewriting it as an integral over the Jacobi set is 
\begin{eqnarray} \label{eq:kappa2}
\kappa(R) = \frac{1}{2\area{(R)}} \int_{v \in \Jspace}  |2f(v) - f(u)- f(w)| \cdot ||\grad{g(v)}|| \di v,
\end{eqnarray}
where, $u,w \in \n_g(v)$. 
Therefore, $\kappa(R)$ for every region $R$ can be computed by integrating over its bounding segments. 
The modification needed to cancel a Jacobi sequence, is the sum of modifications of all regions in the sequence. 
Similarly, $\kappa(R)$ can also be defined for affected regions in the critical point cancellation (see Section~\ref{sec:saddle}). 
Thus, all valid {simplification steps} can be performed in the increasing order of $\kappa$.

\paragraph{Construction of simplified function $f^*$}
Given a Jacobi sequence $\{R_i\}$ spanning the level sets in $[a,b]$, the steps to construct the simplified functions 
can be summarized below.
\begin{itemize}
\item [Step 1.] For all the regions included in the sequence, a continuous function $\overline f$ is created by canceling corresponding Jacobi segments, as guided by the pairing function. 
This modification is local and is achieved by ``flattening" the function to an appropriate value. 
\item [Step 2.] A smooth simplified function $f^*$ is created by following modifications.
\begin{itemize}\dense
\item To obtain smoothness along level sets, $\overline f$ is perturbed to induce an $\epsilon_1$-slope along the level sets, using any $\epsilon_1 > 0$ (e.g.\ see
 Figure~\ref{fig:overview_ft}).
\item To obtain smoothness across level sets while maintaining consistency, $\overline f$ is perturbed in the range \mbox{$(a-\epsilon_2,a]$} and 
			$[b,b+\epsilon_2)$, using any $\epsilon_2 > 0$ (e.g.\ see Figure~\ref{fig:valid}).			
\item To obtain smoothness in transitions between existing Jacobi curves, restricted critical points are spatially shifted in an $\epsilon_3$-neighborhood of level sets, using any 
$\epsilon_3 > 0$. (e.g.\ see Figure~\ref{fig:transition_canc2}).
\end{itemize}
\end{itemize}

%%%%%%%%%%%%%%%%%%%%%%%%%%%%%%%%%%%

\section{Cancellation of critical points in $g$}
\label{sec:saddle}

As discussed in Section~\ref{sec:js:seq}, no valid simplification sequence of $f$
can cancel a critical point of $g$. However, there may exist configurations
such that all Jacobi sequences of $f$ contain critical points of $g$ and all
sequences in $g$ contain critical points in $f$. In this case, there exists no
valid sequence and the Jacobi set cannot be simplified through a standard
cancellation. Instead, we use traditional critical point cancellations to remove
pairs of critical points from either function. In this section, we show that
critical points on $g$ can be cancelled with minimal impact to the geometry of
$\J(f,g)$. Furthermore, we describe the change in $\kappa$ caused by such a
cancellation, and how the pairing among Jacobi segments is affected. Rather than
the typical persistence pairing, we use a slightly relaxed notion of critical
point pairing centered around the notion of an {\it isolated} pair.

\begin{definition}
  A saddle-maximum $(\Sd,\Mx)$ pair of a function $g$ is called \emph {isolated} if
  the component of the super-level set \lsetgs\ containing $\Mx$ does not contain any other
  critical points of $g$.
\end{definition}

Clearly, on a simply-connected domain all extrema of $g$ except for a single
maximum and minimum can be removed through successive cancellation of isolated
pairs. The section is divided into four parts: First, we describe how given an
isolated maximum-saddle pair $(\Sd,\Mx_1)$ in $g$, one can construct a smooth
function $g^*$ that only differs from $g$ in a arbitrary small neighborhood of
the super-level set around $\Mx_1$; Second, we show that $\J(f,g) = \J(f,g^*)$
except for an $\epsilon$-neighborhood around $\Sd$; and Third, we discuss how the
cancellation affects Jacobi segments and regions; and last, we identify  the modification 
needed for this cancellation.

\begin{figure*}[htbp]
\centering
\subfigure[]{		
\def\svgwidth{0.6\linewidth} 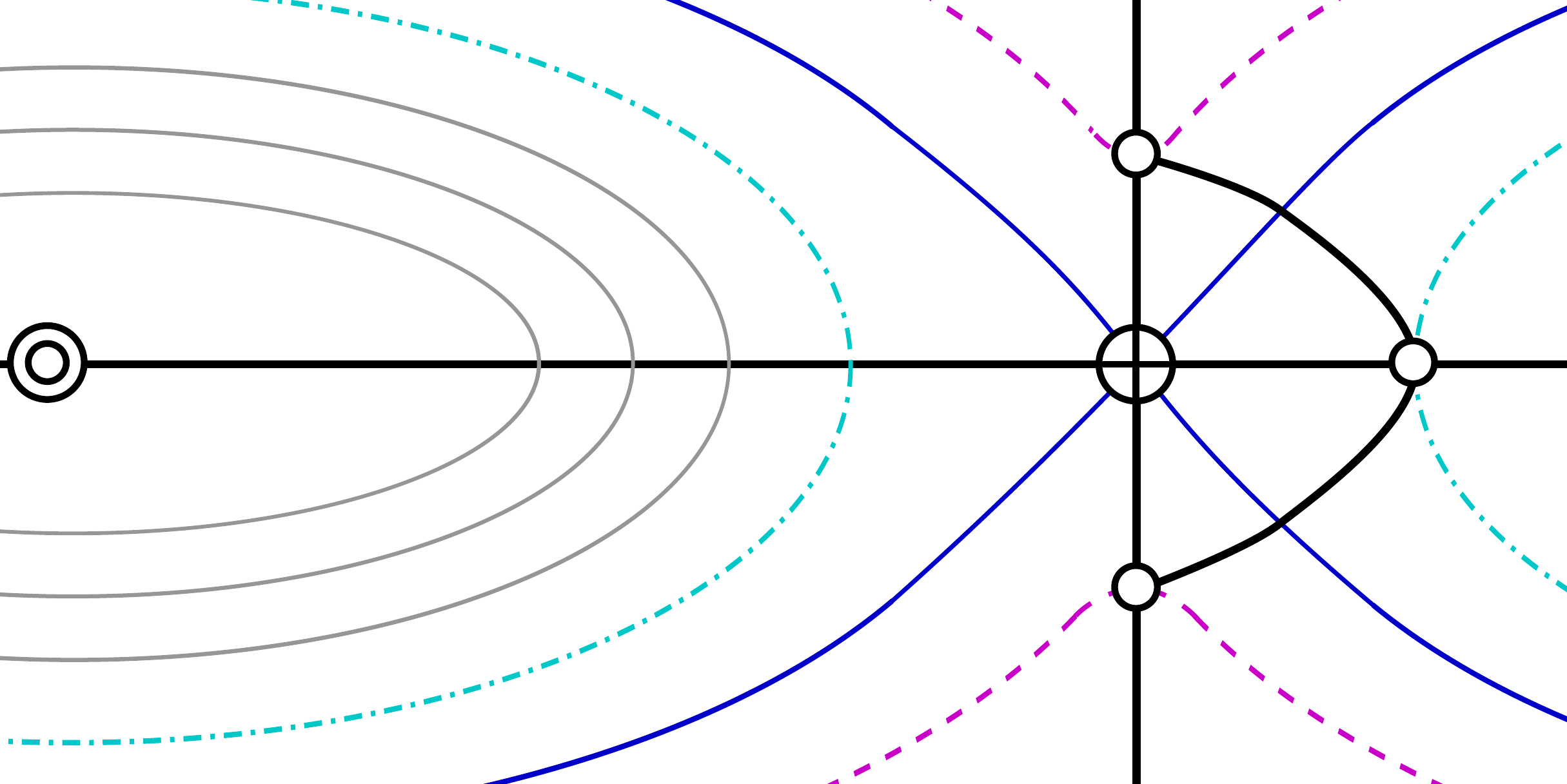 
\def\svgwidth{0.6\linewidth} 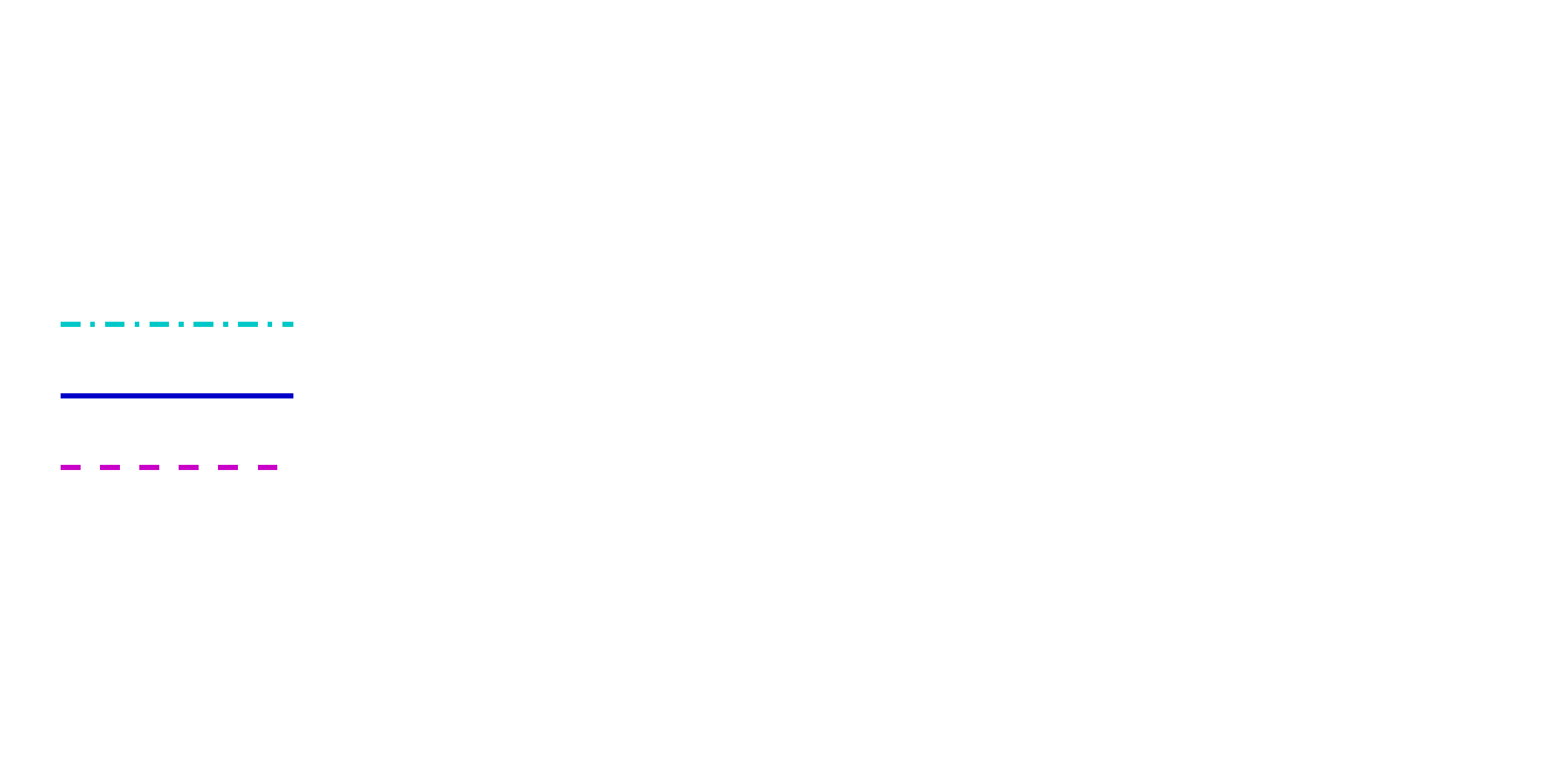  \label{fig:sadmax_zoom1}
\hspace{-15em}}

\subfigure[]{		\includegraphics[width=0.45\linewidth]{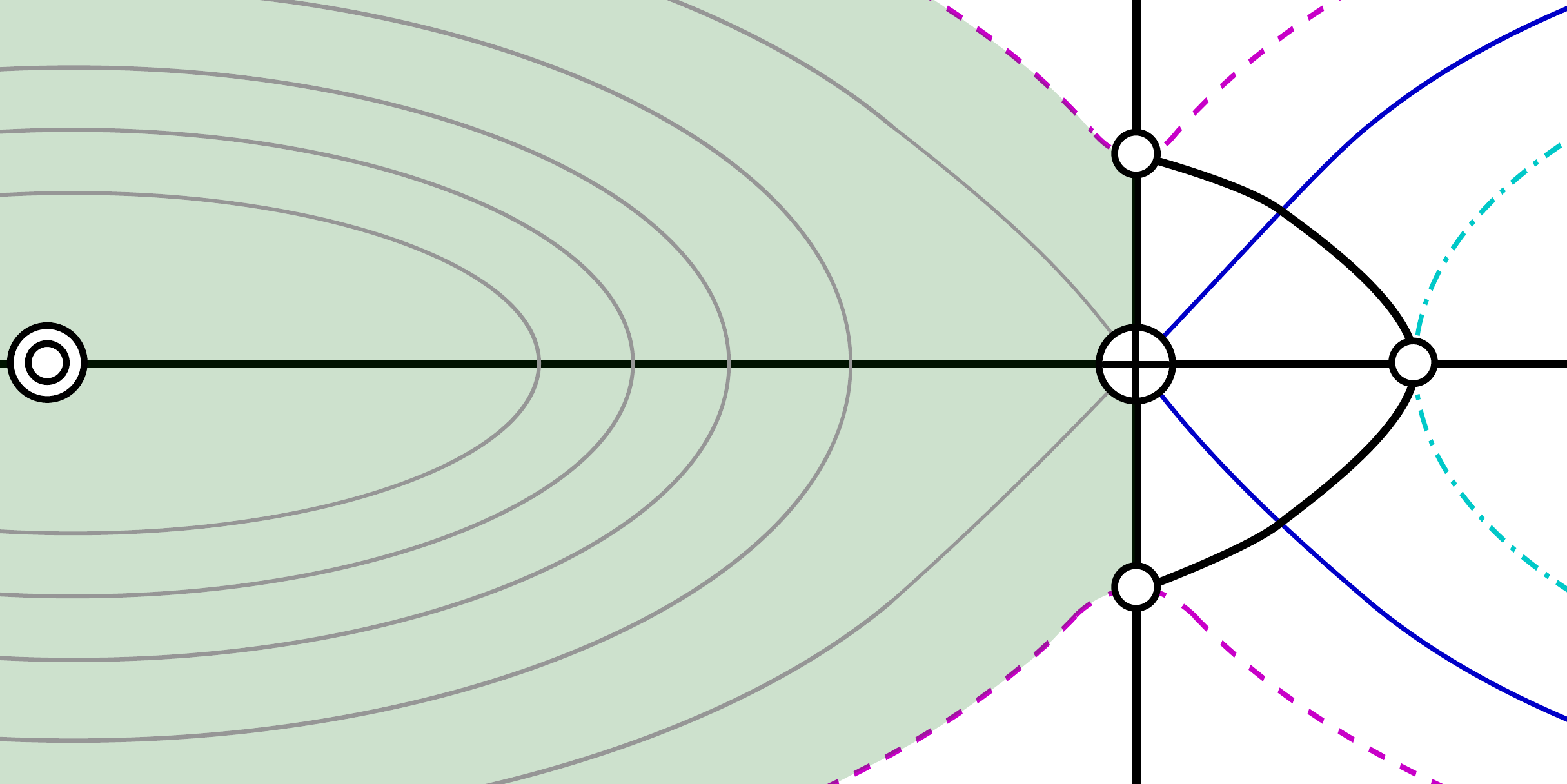} \label{fig:sadmax_zoom2}}
\hspace{2em}
\subfigure[]{		\includegraphics[width=0.45\linewidth]{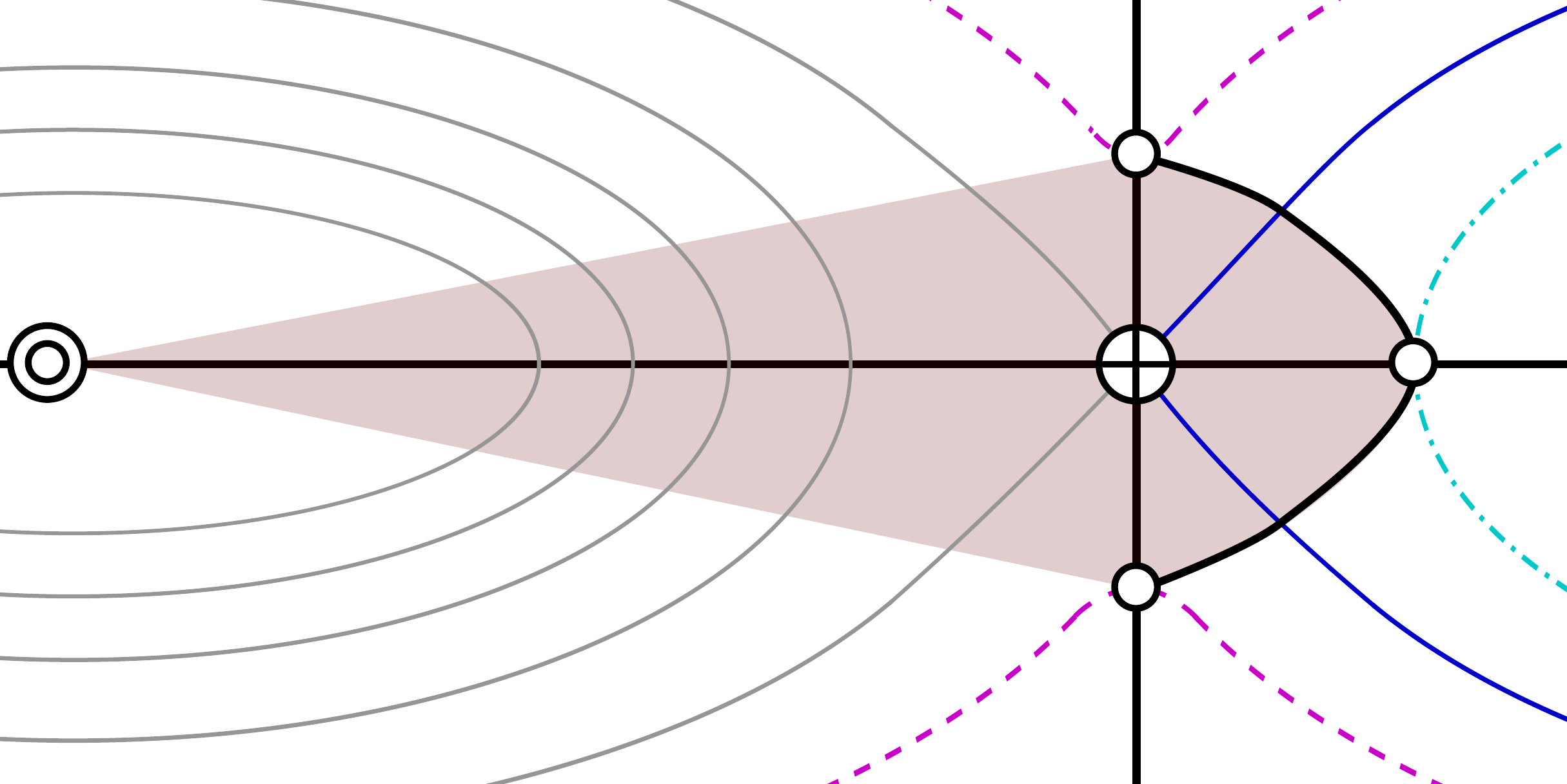} \label{fig:sadmax_zoom3}}

\subfigure[]{		\includegraphics[width=0.6\linewidth]{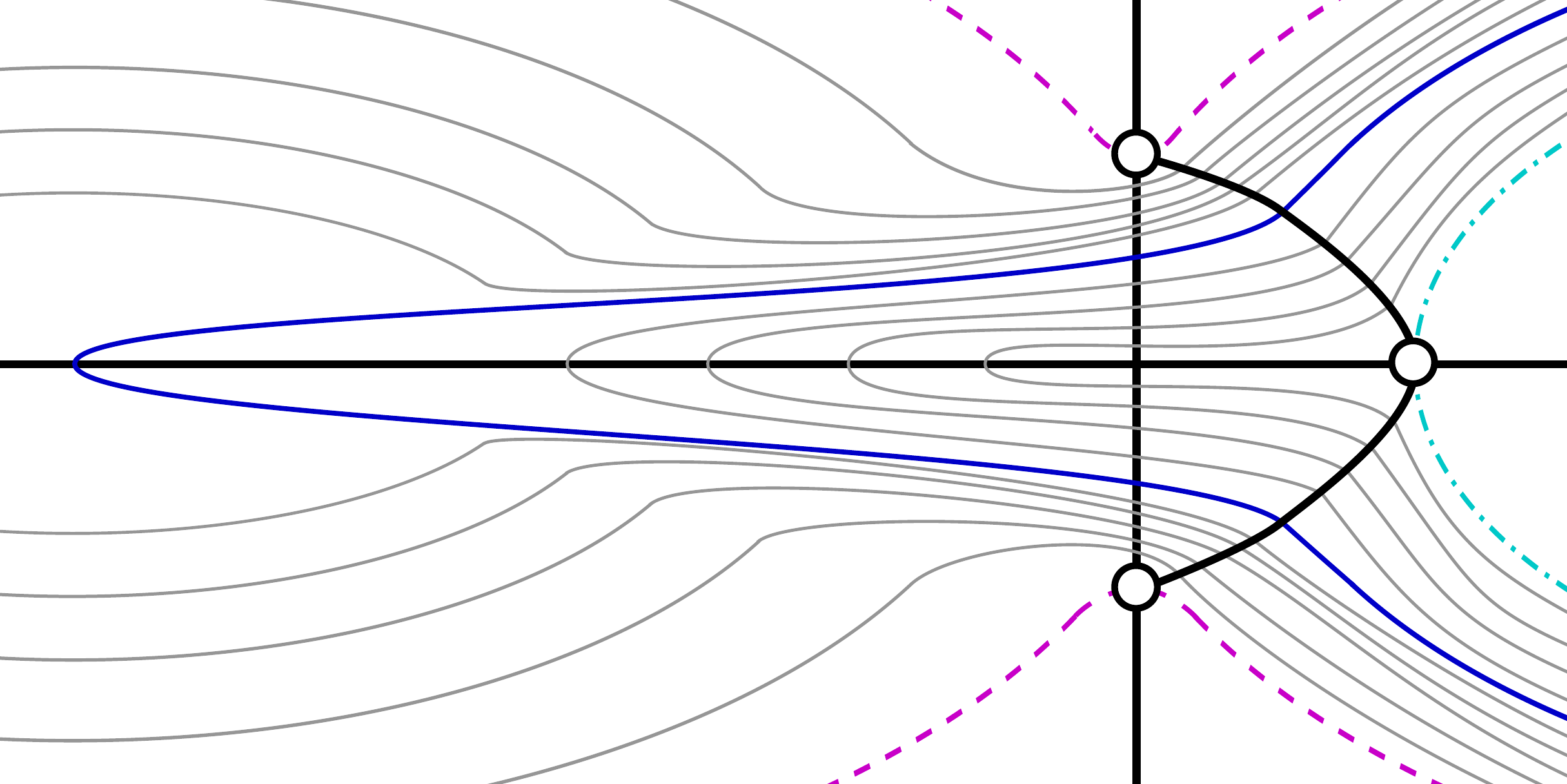} \label{fig:sadmax_zoom4}}

\caption{Cancellation of a saddle-maximum pair, $(\Sd,\Mx_1)$. 
(a) Level sets of the original function, $g$. Line $l_1$ connects $\Mx_1$ with $a \left(\in \lsetgsp\right)$ 
through $\Sd$. Line $l_2$ connects $b_1$ and $b_2$ with $b_i \in \lsetgsn$. Line $l_3$ connects $b_1$, 
$a$, and $b_2$. 
(b) The shaded region bounded by \lsetgsn\ and $l_2$ is rescaled to the range $[g(\Sd)-\epsilon,g(\Sd)]$. 
This creates a discontinuity along $l_2$. 
(c) The shaded region $C$ is identified and $g$ is modified such that $\grad g^*$ is anti-parallel to $\grad g$  
%$\grad \gs(x) = - \grad g(x)$
 along $l_1$, and $\grad \gs(x) = \grad g(x)$ for all $x \in \partial C$. This cancels the $(\Sd, \Mx_1)$ pair, and 
removes the discontinuity in~\ref{fig:sadmax_zoom2}. 
(d) The final function, $\gs$ does not contain the saddle $\Sd$, and the maximum $\Mx_1$.
The modification (from $g$ to $g^*$)  is confined only to the shaded regions in Figures~\ref{fig:sadmax_zoom2} 
and~\ref{fig:sadmax_zoom3}.
\label{fig:saddlecancellation_zoom}}
\end{figure*}

\newparagraph{Critical points pair cancellation}
Consider the canonical level set structure of $g$ around $\Sd$ shown in
Figure~\ref{fig:sadmax_zoom1} and the three level sets 
\lsetgs, \lsetgsp, and \lsetgsn. 
%$g^{-1}(g(\Sd))$, $g^{-1}(g(\Sd+\epsilon))$, and $g^{-1}(g(\Sd)-\epsilon)$. 
Furthermore, consider
three lines: $l_1$, connecting $\Mx_1$ with $\Sd$ and a point $a$ on 
\lsetgsp; %$g^{-1}(g(\Sd)+\epsilon)$; 
$l_2$, connecting $\Sd$ with two points $b_1$ and $b_2$ on 
\lsetgsn; %$g^{-1}(g(\Sd)-\epsilon)$; 
and $l_3$ connecting $b_1$, $a$, $b_2$ as shown in Figure~\ref{fig:sadmax_zoom1}.

To create a $g^*$ that cancels $(\Sd,\Mx_1)$ we first rescale $g$ within the
super-level set of $g^{-1}(g(\Sd)-\epsilon)$ on the left of $l_2$
(Figure~\ref{fig:sadmax_zoom2}) by monotonically mapping the range
$[g(\Sd)-\epsilon,g(\Mx_1)]$ to the range $[g(\Sd)-\epsilon,g(\Sd)]$. Choosing
the appropriate map, the resulting function $\bar{g}$ is smooth except at $l_2$
where it is discontinuous. Note that $\bar{g}(\Mx_1) = g(\Sd)$. Subsequently, we
modify $\bar{g}$ to be monotonically increasing along $l_1$. Finally, we define
a cone $C$ around $l_1$ as a region enclosed by $l_3$ and containing $\Mx_1$
(Figure~\ref{fig:sadmax_zoom3}). Now we define a smooth $g^*$ with $g^* =
\bar{g}$ outside $C$, $\grad g^*(x) = \grad g(x)$ for all $x \in \partial C$,
and $\grad g^*$ is anti-parallel to $\grad g$ 
%$\grad g^*(x) = -\grad g(x)$, 
for all $x \in l_1$, as shown in
Figure~\ref{fig:sadmax_zoom4}. Such a $g^*$ exists for all $\epsilon > 0$ as the
solution of a boundary value problem.

\newparagraph{Jacobi set geometry}
\begin{figure*}[ht]
\centering
\subfigure[]{		\def\svgwidth{0.32\linewidth} 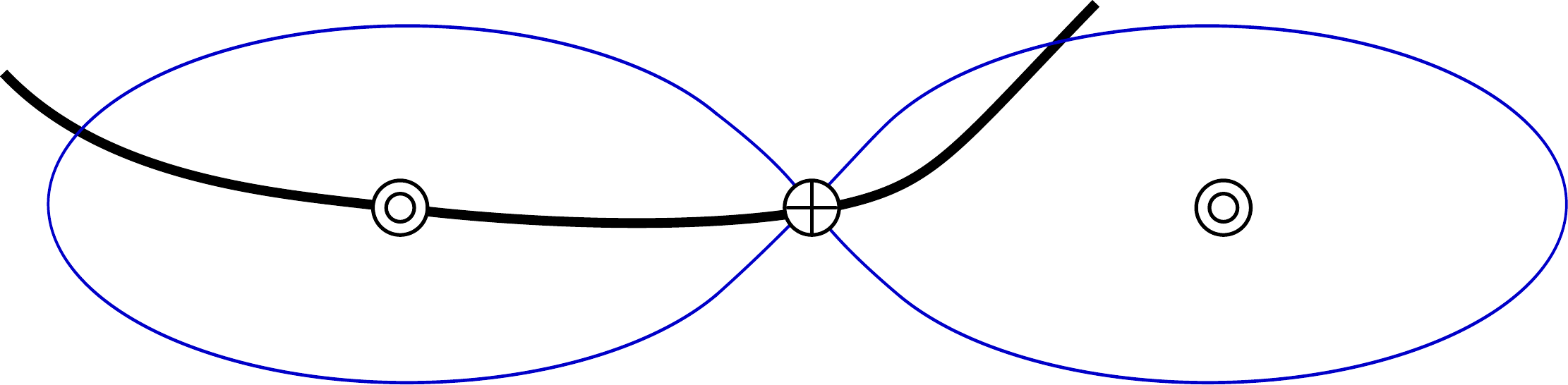  \label{fig:sad_case1}}
\hspace{-1em}
\subfigure[]{		\def\svgwidth{0.32\linewidth} 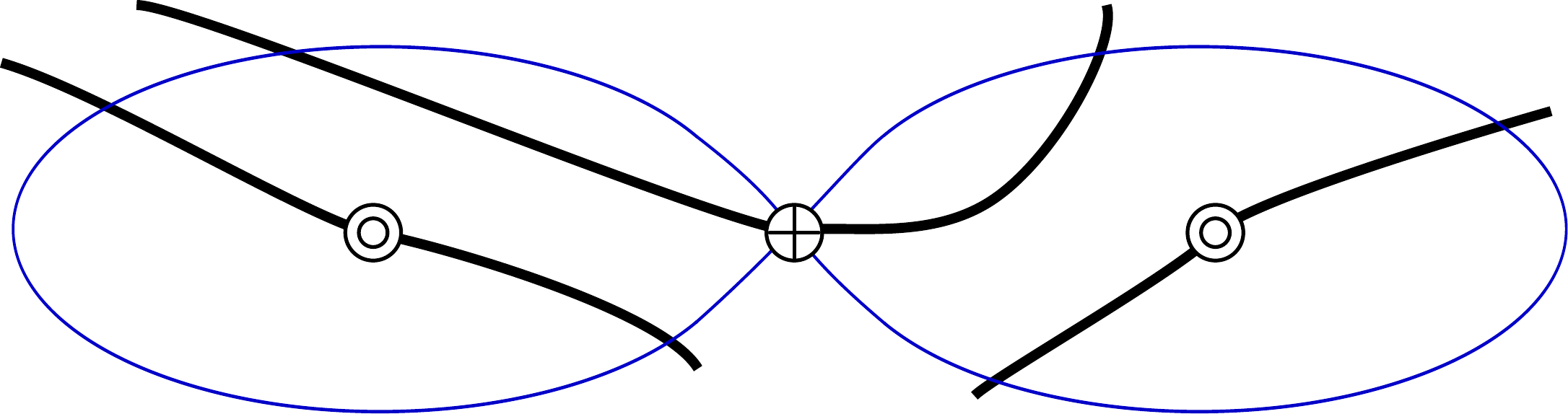  \label{fig:sad_case2}}
\hspace{-1em}
\subfigure[]{		\def\svgwidth{0.32\linewidth} 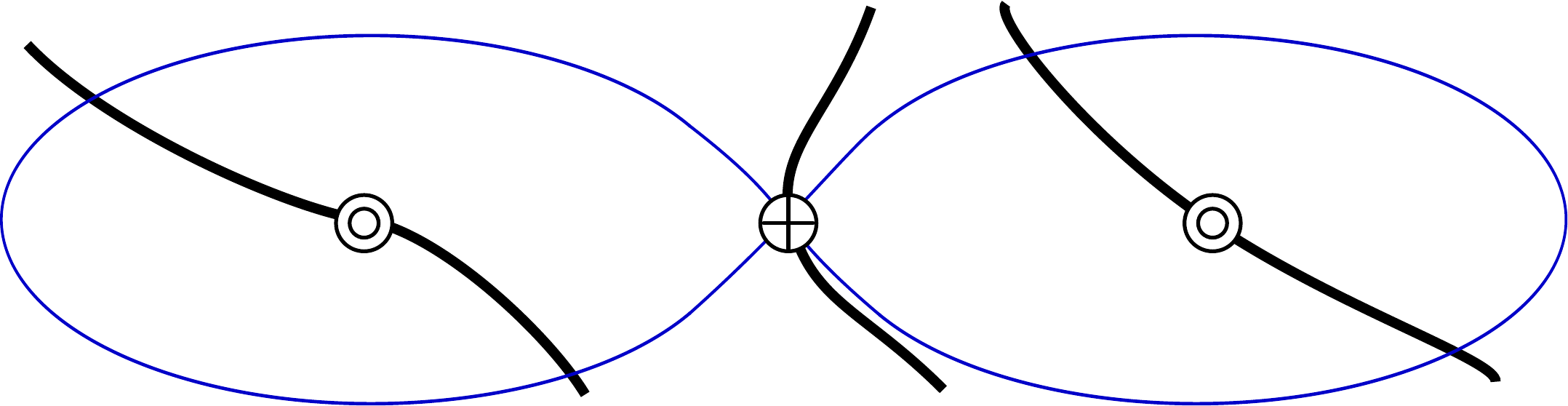  \label{fig:sad_case3}}
\caption{Different cases of Jacobi set connectivity for an isolated saddle-maximum pair $(\Sd,\Mx_1)$. 
The saddle and the maximum may be parts of by (a) the same Jacobi set component $\J_{\Sd,\Mx_1}$, or 
(b) and (c) separate Jacobi components $\J_\Sd$ and $\J_{\Mx_1}$ respectively. 
$\Lspace_1$ and $\Lspace_2$ are super level sets of $g$ surrounding $\Mx_1$ and $\Mx_2$ respectively. 
\label{fig:saddle_cases}}
\end{figure*}
Generically, there exist three different configurations of Jacobi sets in the neighborhood 
of an isolated critical point pair as shown in Figure~\ref{fig:saddle_cases}. The most common 
configuration is a Jacobi set connecting $\Mx_1$ and $\Sd$ (Figure~\ref{fig:sad_case1}). In 
this case we can define $l_1 \subset \J$ and $l_2$ as a subset of the unstable $1$-manifold  
of $\Sd$ (i.e.\ gradient descending path of a saddle) which guarantees that $\grad g^*(x)$ 
and $ \grad g(x)$ are aligned, for all $x \in \J(f,g)$. It follows that $x \in \J(f,g)$ implies 
$x \in \J(f,g^*)$. Furthermore, for all $x \notin C$ we have $\grad g^*(x) = \grad g(x)$ which 
implies $\J(f,g) = \J(f,g^*)$ for $\Mspace \setminus C$.  However, with $g^*$ as defined above, 
there may exist additional points $x \in C$ with $x \in \J(f,g^*)$. By construction, these must 
be part of isolated Jacobi components (loops) entirely contained in $C$. As such they must form 
a valid cancellation sequence and can be remove using the approach discussed in Section~\ref{sec:jssimp}. 

The situations shown in Figures~\ref{fig:sad_case2} and~\ref{fig:sad_case3} follow a similar 
argument except that it cannot be guaranteed that $\J(f,g) = \J(f,g^*)$ around $\Sd$. However, 
since this portion of the Jacobi set enters and exists $C$ exactly once, there must exist a $g^*$ 
that connects the entry and exit points with a single line of the Jacobi set containing no BD points. 
Therefore, $\J(f,g) \neq \J(f,g^*)$ only in a small neighborhood around $\Sd$. 
%Therefore $g^*$ does not change the geometry of $\J$ too much surround $\Sd$. 

\newparagraph{Modifications in Jacobi segments and Jacobi regions}
To understand how the Jacobi segments and Jacobi regions are affected by this cancellation, we refer to Figure~\ref{fig:saddlecancellation_full}. 
In addition to the lines and points described above (in Figure~\ref{fig:saddlecancellation_zoom}), 
let $c_1$ and $c_2$ be the points where $l_3$ intersects with \lsetgs, and $d$ the point of intersection of $l_1$ with \lsetgsn. 
The cancellation modifies the shape of the level sets in specific ways shown in Figure~\ref{fig:sadmax_zoom4}. %in such a way that the points on $l_1$ and $l_3$ are connected by them. 
That is, each point $x \in \overline{d\Mx_1}$ (i.e. the line between $d$ and $\Mx_1$) is connected to some $y \in \overline{b_1 c_1}$ 
on one side (of $g^{-1}(\Sd)$), and some $z \in \overline{b_2 c_2}$ on the other side. 
Similarly, each $x \in \overline{\Mx_1 a}$ is connected either to some $y \in \overline{c_1 a}$, or some $z \in \overline{c_2 a}$. 
%That is, each point $x \in \overline{d\Sd}$ (i.e. the line between $d$ and $\Sd$) is connected to some $y \in \overline{b_1 c_1}$ on one side (of $g^{-1}(\Sd)$), and some $z \in \overline{b_2 c_2}$ on the other side. 
%Similarly, each $x \in \overline{\Sd a}$ is connected either to some $y \in \overline{c_1 a}$, or some $z \in \overline{c_2 a}$. 
The pairings in $f_t$ must be recomputed along the modified level sets, and new regions need to be created. 

To illustrate the modifications in pairings and Jacobi regions, we give an example of such cancellation in Figure~\ref{fig:saddle_example}.
The figure shows that most of the regions are unaffected. Only the regions that included the cone $C$ as described above are modified, 
and extend along the new level sets.

\begin{figure*}[!ht]
\centering
\subfigure[]{           \def\svgwidth{0.7\linewidth} 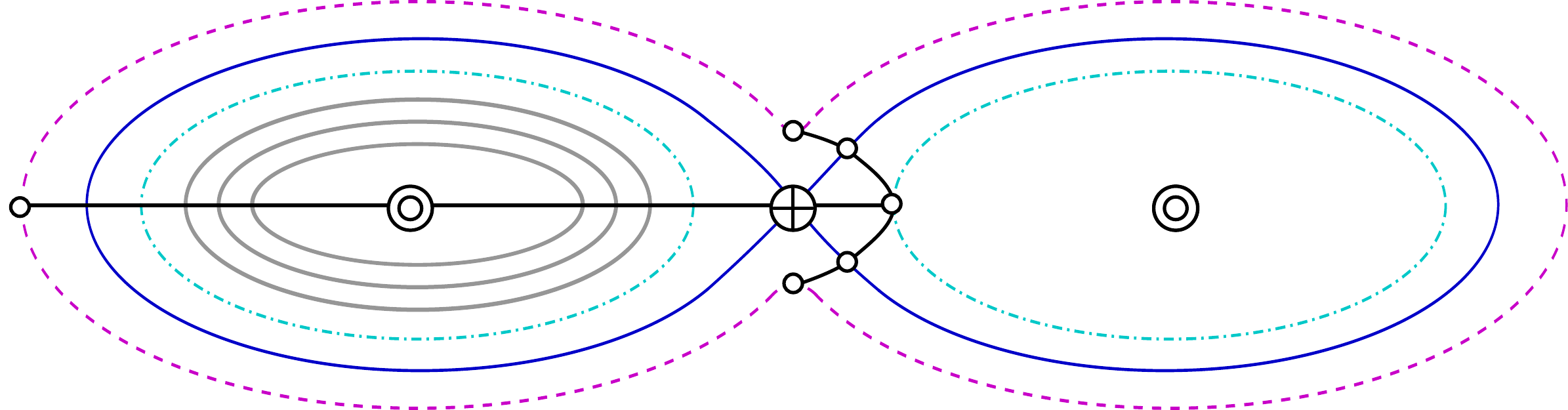 \label{fig:sadmax_full1}}
\subfigure[]{           \def\svgwidth{0.7\linewidth} 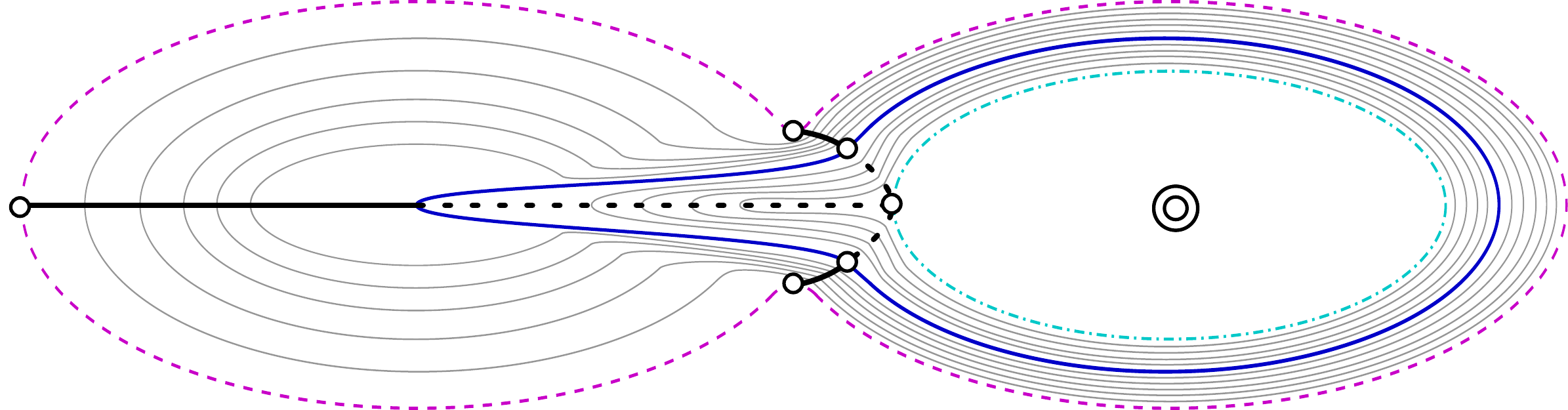 \label{fig:sadmax_full2}}
\caption{The modifications in level sets due to the cancellation of $(\Sd,\Mx_1)$, as shown (a) before and (b) after the cancellation. 
%(a) The level sets of original function $g$. Some points of interest $a$, $b_i$'s, $c_i$'s, and $d$ are also identified. 
%(b) After the cancellation, the level sets are modified such that they connect the points on $l_1$ between $d$ and $\Sd$, with the points on $l_3$ between $b_i$ and $c_i$. 
%The the points on $l_1$ between $\Sd$ and $a$ are connected with the points on $l_3$ between $c_i$ and $a$. 
\label{fig:saddlecancellation_full}}
\end{figure*}

\begin{figure*}[!ht]
\centering
\subfigure[]{           \def\svgwidth{0.7\linewidth} 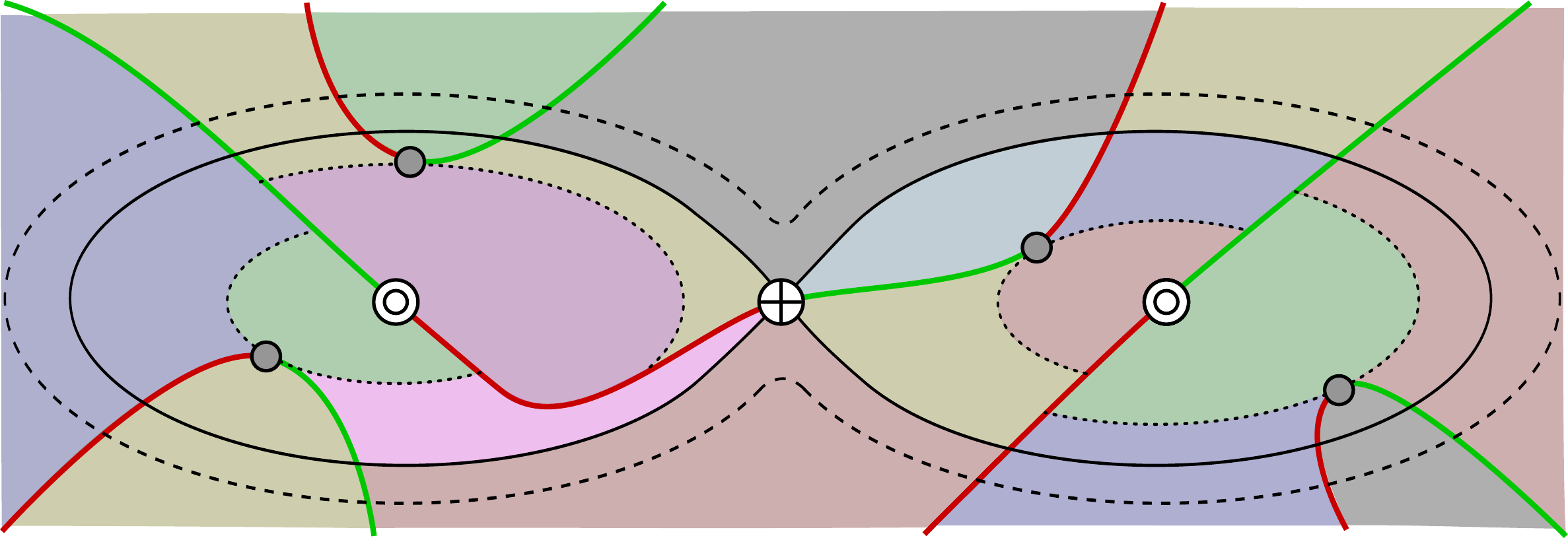 \label{fig:sadmax_ex1}}
\subfigure[]{           \def\svgwidth{0.7\linewidth} 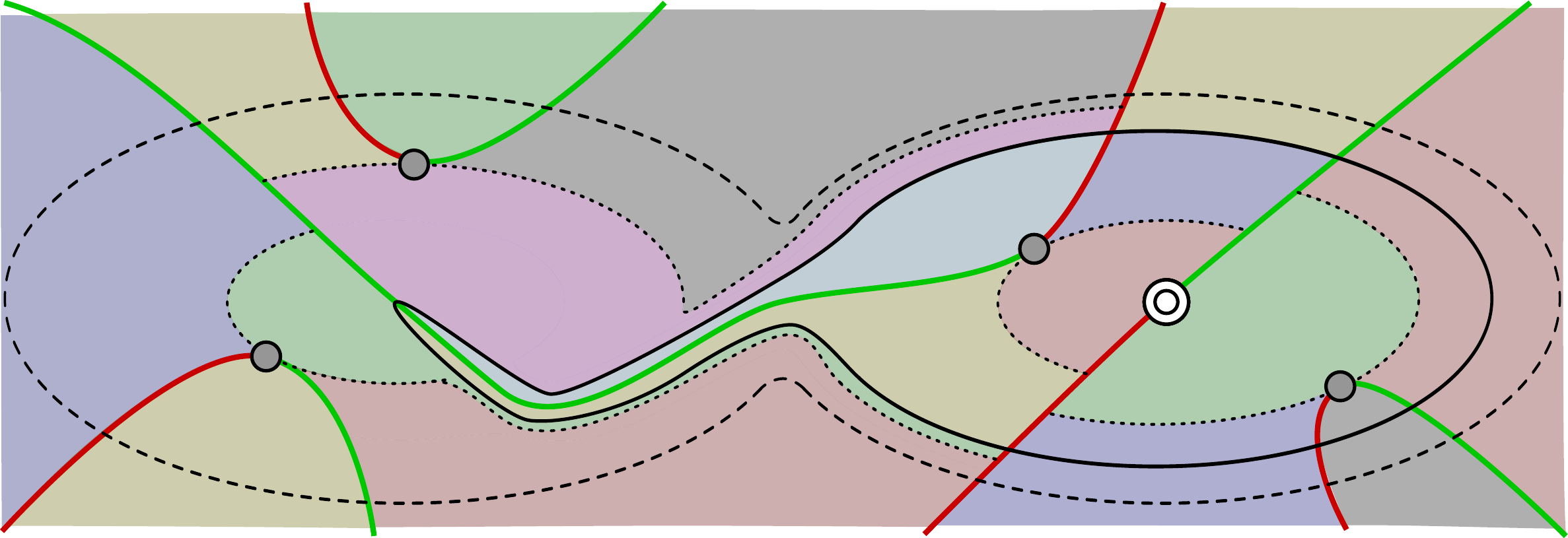 \label{fig:sadmax_ex2}}
\caption{Effect of saddle cancellation on Jacobi segments and Jacobi regions. In addition to the level sets of BD points (dotted), level sets \lsetgs\ (solid) and \lsetgsn\ (dashed) are shown for reference. 
The Jacobi set is shown as red-green curve, with color representing the criticality (i.e. restricted maximum or minimum).  
Jacobi regions corresponding to (a) the original Jacobi set $\J(f,g)$, and (b) the Jacobi set after cancellation, $\J(f,g^*)$, 
are shown in different colors. 
\label{fig:saddle_example}}
\end{figure*}

%\subsection{Comparison Measure}
\newparagraph{Modification needed for the cancellation}
We  note that 
$|\nabla g^*(x)| = O(\epsilon)$ for all $x \in \Lspace_1$ (where $\Lspace_1$ is the super level set surrounding $\Mx_1$). 
Then, the comparison measure of $\Lspace_1$ after cancellation, $\kappa^*$, is given by
\begin{align*}
	\kappa^*(\Lspace_1) &= \frac{\int_{\Lspace_1} \|\nabla f(x) \times \nabla \gs(x)\| \di x}{\area(\Lspace_1)} 
							  %\approx \frac{\int_{\Lspace_1} \|\nabla f(x) \times \epsilon\| dx}{\area(\Lspace_1)} 
							  %\approx \frac{\epsilon \int_{\Lspace_1} dx}{\area(\Lspace_1)} \approx \epsilon
							  = O(\epsilon)
\end{align*}
Note that $\kappa^*$ is independent of both the difference in the function values of $\Sd$ and $\Mx_1$, %i.e. $f(\Mx) - f(\Sd)$, 
and the shape of $\Lspace_1$.
Thus, the amount of perturbation introduced by this cancellation is 
approximately $\lim_{\epsilon \to 0} (\kappa - \kappa^*) = \kappa$. 
 
%%%%%%%%%%%%%%%%%%%%%%%%%%%%%%%%%%%

\section{Summary and Correctness}
\label{sec:summary} 

Given the discussion on the cancellation of restricted critical points of $f_t$, and 
critical points of $g$, we now summarize the complete procedure to simplify a
given Jacobi set. So far, all the discussion has focused on modifying $f$ with 
respect to the level sets of $g$. However, we may wish to interleave the
modifications of either of these functions with respect to the other.  Thus, to
simplify the Jacobi set, we need to identify all Jacobi sequences with respect to
both -- the level sets of $f$ and the level sets of $g$.
\begin{itemize}
\item[Step 1.] Identify all possible simplification steps with respect to the level sets
  of $g$/$f$, by creating all possible Jacobi sequences, and identifying all 
  isolated saddle-extremum pairs.  
\begin{itemize}\dense
\item Compute the pairings between restricted critical points and identify the switch points.
\item Create Jacobi segments by decomposing $\J$ into subsets bound by the BD points in 
		$\J$, critical points of $g$/$f$, switch points, and their images. 
\item Create Jacobi regions using the pairings induced on the segments, and compute their 
		$\kappa$. 
\item Create Jacobi sequences $\{S\}_g$ and $\{S\}_f$ by seeding them at BD internal regions, 
		and propagating monotonically into adjacent regions in a depth-first manner, and 
		compute its $\kappa$.  
\item Identify all isolated saddle-extremum pairs $\{P\}_g$ and $\{P\}_f$, and compute their 
		$\kappa$. 
\end{itemize}
\item [Step 2.] Store all sequences $\{S\}_f$ and $\{S\}_g$, and all pairs of saddle-extremum pairs 
	$\{P\}_f$ and $\{P\}_g$ into a common list $\Lcal$, ordered by their $\kappa$, the 
	amount of modification needed for their cancellation. 
\item [Step 3.] Select the element ($S$ or $P$) with the lowest $\kappa$ from $\Lcal$, perform its 
	cancellation, recompute the pairings in $\J$, and create corresponding Jacobi regions. 
\item [Step 4.] Remove from $\Lcal$ all the existing sequences that cease to exist due to this cancellation, 
	and identify and add to $\Lcal$ any new sequences containing the newly created regions. 
\item [Step 5.] Repeat steps 3 and 4 until the Jacobi set reaches its simplest possible configuration under 
	our definition of validity or a user-defined threshold is achieved.
\end{itemize}

% ------------------------
\paragraph{Correctness and Termination}
In order to prove the correctness of this simplification scheme, we note that 
by definition, every valid simplification step ensures that the resulting function 
$f^*$ (or $g^*$) is Morse, and the simplified Jacobi set reflects the Jacobi set of the 
simplified functions. Therefore, we have the following corollary: 

\begin{corollary}\label{lemma:validJS}
The simplified functions $f^*$ and $g^*$ are Morse and the simplified Jacobi set is a valid Jacobi set $\J(f^*,g^*)$.
\end{corollary}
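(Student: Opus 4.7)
The plan is to argue by induction on the sequence of atomic operations produced by Steps 1--5 of the algorithm, showing that each step preserves both the Morse property of the two functions and the identity between the combinatorially ``simplified'' Jacobi set maintained by the algorithm and the true $\J$ of the updated function pair. The base case is immediate since the input $f$ and $g$ are assumed Morse and generic, and $\J(f,g)$ is given. So the content of the proof reduces to two inductive steps, one for each kind of atomic operation.

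For a Jacobi sequence cancellation, I would invoke the construction of $f^*$ laid out at the end of Section~\ref{sec:js:ordering}. The locality condition of valid simplification, combined with the pairing scheme of Section~\ref{sec:js:rcp} and the Morse Cancellation Theorem applied fibrewise to each $f_t$, guarantees that each restricted pair $(v,u)$ along the sequence is removed from $f_t$ without introducing new restricted critical points. The smoothness and consistency conditions then upgrade these fibrewise cancellations to a globally smooth $f^*$ that agrees with $f$ outside a $g$-collar $[a-\epsilon,b+\epsilon]$. Since the $\epsilon_1$, $\epsilon_2$, $\epsilon_3$ perturbations used to achieve smoothness can be chosen generically, no new degenerate critical points of $f^*$ are created and all critical values remain distinct, so $f^*$ is Morse. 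For the Jacobi set equality, consistency gives $\J(f^*,g) = \J(f,g)$ outside $[a,b]$, while within the sequence the defining equation~\eqref{eq:jacobi:grad} applied to $f^*$ produces exactly the curves drawn in Figures~\ref{fig:region_regular_canc2} and~\ref{fig:transition_canc2} (including the added transition arcs), which is the combinatorial output of the cancellation.

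For a saddle-extremum cancellation in $g$, I would appeal directly to the construction of $g^*$ in Section~\ref{sec:saddle}. By design $g^*$ is smooth, coincides with $g$ outside the cone $C$, and on $C$ has exactly the critical points of $g$ minus the isolated pair $(\Sd,\Mx_1)$; genericity of the boundary-value extension can be invoked to keep $g^*$ Morse with distinct critical values. The Jacobi-set analysis of the three cases in Figure~\ref{fig:saddle_cases} shows that $\J(f,g) = \J(f,g^*)$ on $\Mspace\setminus C$, while inside $C$ any discrepancy either is a small connecting arc around $\Sd$ (cases~\ref{fig:sad_case2} and~\ref{fig:sad_case3}) or consists of isolated loops that are themselves valid sequences (case~\ref{fig:sad_case1}). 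In both situations, the algorithm recomputes pairings and regions in Step~3 after the cancellation, so the stored combinatorial Jacobi set is exactly $\J(f,g^*)$ by construction.

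The main obstacle, and the one I would spend the most care on, is verifying the Morse and genericity hypotheses on $f^*$ and $g^*$ after each step: the flattening in Step~1 of the $f^*$ construction a priori creates a continuum of coincident values on each modified segment, and the monotone rescaling used in the saddle cancellation could in principle align a new critical value of $g^*$ with an existing one or make a critical point degenerate at the boundary of $C$. The remedy in both cases is that the $\epsilon$-parameters controlling the smoothing and the boundary-value extension can be chosen from an open dense set, so a generic choice yields a Morse $f^*$ (respectively $g^*$) with critical values distinct from those of $g$ (respectively $f$), which also restores the transversality needed for $\J(f^*,g^*)$ to be a $1$-manifold with isolated BD points. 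Once this genericity remark is in place, the two inductive steps combine to give the corollary.
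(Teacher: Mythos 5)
Your proposal is correct and takes essentially the same approach as the paper, whose entire proof is the one-line observation that ``by definition, every valid simplification step ensures that the resulting function $f^*$ (or $g^*$) is Morse, and the simplified Jacobi set reflects the Jacobi set of the simplified functions'' --- i.e., the corollary is read off directly from the constructions in Sections~\ref{sec:js:main} and~\ref{sec:saddle}, which are exactly the two cases of your induction. Your write-up is simply a more careful unpacking of that assertion, and your closing genericity remark (choosing the $\epsilon$-perturbations so that no degenerate critical points or coincident critical values are introduced) addresses a point the paper passes over in silence.
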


\noindent 
By construction, the algorithm terminates when no other pair
of restricted critical points can be cancelled through a valid
simplification, and no other isolated critical points can be cancelled
through conventional critical point cancellation. 

In Section~\ref{app:minimal}, we study the minimal Jacobi set configuration possible for a given domain. 
Here, the minimal configuration means that the functions $f$ and $g$ are Morse functions with the minimal 
number of critical points, and the Jacobi set $\J(f, g)$ has the minimal number of loops and contains no 
BD points. 
Section~\ref{app:sphere} shows that 
for simply connected domains, our algorithm achieves this minimal configuration. Finally, Section~\ref{app:torus} 
discusses the challenges in removing certain kinds of Jacobi segments on non-simply connected domains. 
We discuss configurations with more than the minimal number of loops which cannot be simplified through 
local modification, but an extensive study on handling such cases is beyond the scope of this paper. 

%%%%%%%%%%%%%%%%%%%%%%%%%%%%%%%%%%%

\subsection{Minimal Jacobi sets}
\label{app:minimal}

As with most simplification procedures, the goal is to reach the simplest possible 
configuration according to some measure. In the case of Jacobi sets, for a pair of 
Morse functions $f$ and $g$ defined on a smooth, compact, and orientable $2$-manifold 
$\Mspace$ without boundary, the goal is to reduce the number of loops and BD points 
in $\J(f,g)$. We focus on Jacobi set with minimal configuration, i.e.\ the functions $f$ 
and $g$ are Morse functions with the minimal number of critical points, the Jacobi set 
$\J(f, g)$ has the minimal number of loops and contains no BD points. Previously, 
Bennett et al.~\cite{BennettPascucciJoy2007} suggested that given a domain with genus 
$\gamma$ the minimal Jacobi set has $\gamma + 1$ loops. We disprove this claim by 
showing that there exist functions $f$ and $g$ on $\Mspace$ that give rise to at least 
one and at most two Jacobi loops. Especially for manifold with even genus, a Jacobi set 
with one loop exists. 
Furthermore, for $\gamma = 0$ we prove that
our algorithm will reach the minimal configuration. Unfortunately, for
$\gamma > 1$ there exist configurations with more than the minimal
number of loops which cannot be simplified through local
modification. 

To proof the lemma below, 
we first give a construction of a Jacobi set containing  two loops on a (single) torus, and a single loop on a double-torus. 
Since a manifold of even genus is homeomorphic to a connected sum of double-tori, 
and a manifold of odd genus is homeomorphic to a connected sum of double-tori and a (single) torus, 
we can apply a similar construction procedure to show that there exist functions $f$ and $g$ such that $\J(f,g)$ has 
a single loop for even genus and two loops for an odd genus.

\begin{lemma}\label{lemma:minimal_loops}
  The minimal Jacobi set $\J(f,g)$ on a manifold $\M$ of genus $\gamma$ contains at 
  least $1$ and at most $2$ loops.
\end{lemma}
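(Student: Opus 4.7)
The plan is to prove the two bounds separately. For the lower bound, I would argue that every Morse function on a closed $2$-manifold has at least a maximum and a minimum by compactness, and these critical points lie in $\J(f,g)$ by definition of the Jacobi set. In the \emph{minimal} configuration, which by assumption contains no BD points, the Jacobi set is a smooth closed $1$-submanifold of $\Mspace$ and therefore a disjoint union of simple closed curves. Non-emptiness then forces the number of loops to be at least one, which yields the lower bound for every genus.

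For the upper bound I would give explicit constructions and then propagate them through connected sums, following the outline in the paragraph preceding the lemma. First, on the single torus $T^2$ (genus $1$), I would exhibit a pair of Morse functions $f, g$ whose Jacobi set consists of exactly two loops — for instance, by taking $g$ to be a height function with two saddles and two extrema, and choosing $f$ so that the alignment locus $\grad f \parallel \grad g$ cuts out two meridional loops (one through each saddle-extremum pair). Second, on the double-torus $\Sigma_2$ (genus $2$), I would construct $f, g$ whose Jacobi set has exactly one loop, by arranging $g$ so that both handles are traversed in sequence and choosing $f$ so that the gradient-alignment set threads through the critical configurations of both handles as a single closed curve, with no BD points.

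To extend to arbitrary genus, I would use the classification of closed orientable surfaces: $\Sigma_\gamma \cong \#_{k} \Sigma_2$ when $\gamma = 2k$ is even, and $\Sigma_\gamma \cong (\#_{k} \Sigma_2) \,\#\, T^2$ when $\gamma = 2k+1$ is odd. The key step is to perform each connected sum inside a small disk on which both $f$ and $g$ are monotone with aligned gradients; such a disk contributes no points to the Jacobi set, so surgery across it merges the two Jacobi sets without introducing new loops or BD points. Iterating the $\Sigma_2$ construction yields a single-loop Jacobi set for every even genus, while the mixed decomposition yields a two-loop Jacobi set for every odd genus, giving the upper bound of $2$ in all cases.

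The main obstacle will be the single-loop construction on $\Sigma_2$: one must verify that the gradient-alignment locus can be arranged to traverse both handles as one connected curve without pinching off an auxiliary component or forcing a BD point at the neck between the two handles. Once this base case is in hand, the connected-sum step is essentially a local flat-gluing argument, and the arithmetic $\gamma = 2k$ vs.\ $\gamma = 2k+1$ immediately produces the final loop counts. This simultaneously disproves the prior claim of $\gamma + 1$ loops by Bennett et al., since our upper bound is uniformly at most $2$ regardless of $\gamma$.
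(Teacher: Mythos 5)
Your overall strategy is the same as the paper's: exhibit a two-loop configuration on the torus and a one-loop configuration on the double-torus, then decompose a genus-$\gamma$ surface as a connected sum of double-tori (plus one torus when $\gamma$ is odd) to get the bound of $2$. Your lower-bound argument (the Jacobi set contains the extrema of $f$ and $g$, hence is non-empty, and with no BD points it is a closed $1$-manifold) is fine and is essentially left implicit in the paper.

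However, there are two genuine gaps in the upper-bound half. First, your connected-sum step does not do what you need. You propose to glue along a small disk that ``contributes no points to the Jacobi set'' — setting aside that you describe this disk as having \emph{aligned} gradients, which is precisely the condition for belonging to $\J$ (you mean linearly \emph{independent} gradients) — a surgery performed away from both Jacobi sets leaves the two Jacobi sets as disjoint closed curves in the new surface. The loop count is then additive: $\#_{k}\,\Sigma_2$ would carry $k$ loops, not one. To get a single loop you must route the Jacobi curves \emph{through} the gluing region and reconnect them there. The paper does exactly this: it builds the surface from bent and straight cylinders with $g$ a height function and $f$ the orthogonal height function, so the Jacobi set runs along the silhouettes of every piece, and along a straight gluing cylinder it smoothly transitions $f$ to $-f$, which rotates the two silhouette curves by a half turn around the cylinder. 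This twist is what permutes the endpoints of the Jacobi arcs entering and leaving each piece and merges the would-be $\gamma+1$ loops into one (for even genus) or two (for odd genus). Second, and relatedly, the single-loop construction on the double-torus — which you correctly identify as the main obstacle — is exactly where this twist mechanism is needed, and your proposal does not supply it; it only asserts that the alignment locus can be ``arranged to thread through both handles.'' Without the explicit half-turn device (or an equivalent rerouting mechanism), neither the $\Sigma_2$ base case nor the iteration to higher even genus is established. The same device also explains why the single torus stubbornly retains two loops (the twist merely interchanges the two loops rather than merging them), which is why the odd-genus answer is $2$ and not $1$.
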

\begin{proof}
In the case when $\gamma = 0$, it is easy to see that there exist $f$ and $g$ that create 
only a single loop. For example, imaging a sphere embedded into $\Rspace^3$ centred at 
the origin. Then two height functions $f$ and $g$ with 90 degree angle, that is, $f(x,y,z) = x$ 
and $g(x,y,z) = z$ will create such a Jacobi set.

For $\gamma > 0$, $\M$ is homeomorphic to a connected sum of $\gamma$ tori. Such a 
surface can be constructed as the union of bent and straight cylinders as shown in 
Figure~\ref{fig:singleloop}. Imaging each piece embedded into $\Rspace^3$ with 
$g(x,y,z) = z$, the height function. Defining $f(x,y,z) = x$ creates a Jacobi set that follows
the silhouette and creates $\gamma +1$ loops. However, along a straight cylinder we can 
smoothly transition to $f(x,y,z) = -x$ (and the reverse) which winds the Jacobi set around 
the cylinder in a half turn (Figure \ref{fig:singleloop}(e)-(f)). Combining these twisted cylinders 
one can reconnect the default $\gamma +1$ loops. The Jacobi sets for the torus and 
double-torus are also shown in Figure \ref{fig:minloops} without the gluing cylinders. 

In particular, as shown in Figure~\ref{fig:singleloop}(h) for a double-torus we can connect all 
pieces into a single loop. Clearly, as shown Figure~\ref{fig:singleloop2}, by combining 
double-tori this creates functions with a single Jacobi loop for all surfaces with even genus. 
However, for a single torus the same technique simply intertwines two loops 
(Figure ~\ref{fig:singleloop}(g)). Nevertheless, treating a surface with odd genus as one with 
even genus plus a torus, there must exist $f$ and $g$ that create only two loops which proves 
the lemma.
\end{proof}  
  
\noindent We conjecture that for surfaces with uneven genus two loops is the
minimal configuration as the re-combinations must come in pairs but
currently there exists no proof.

\begin{figure*}[t]
\centering
\def\svgwidth{1.0\linewidth} 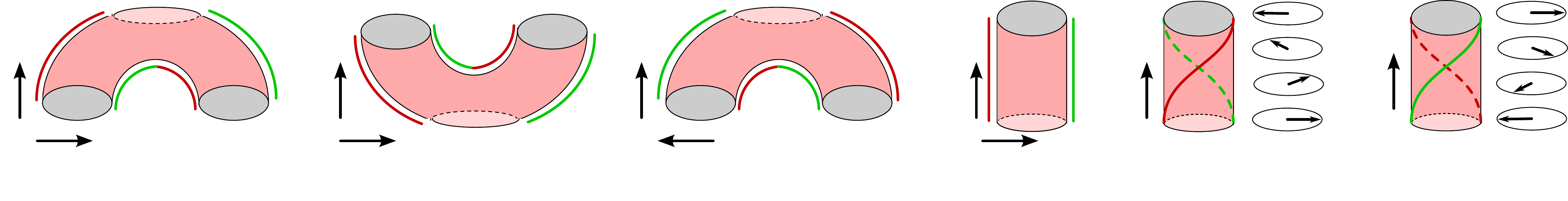

\vspace{5mm}

\def\svgwidth{0.85\linewidth} 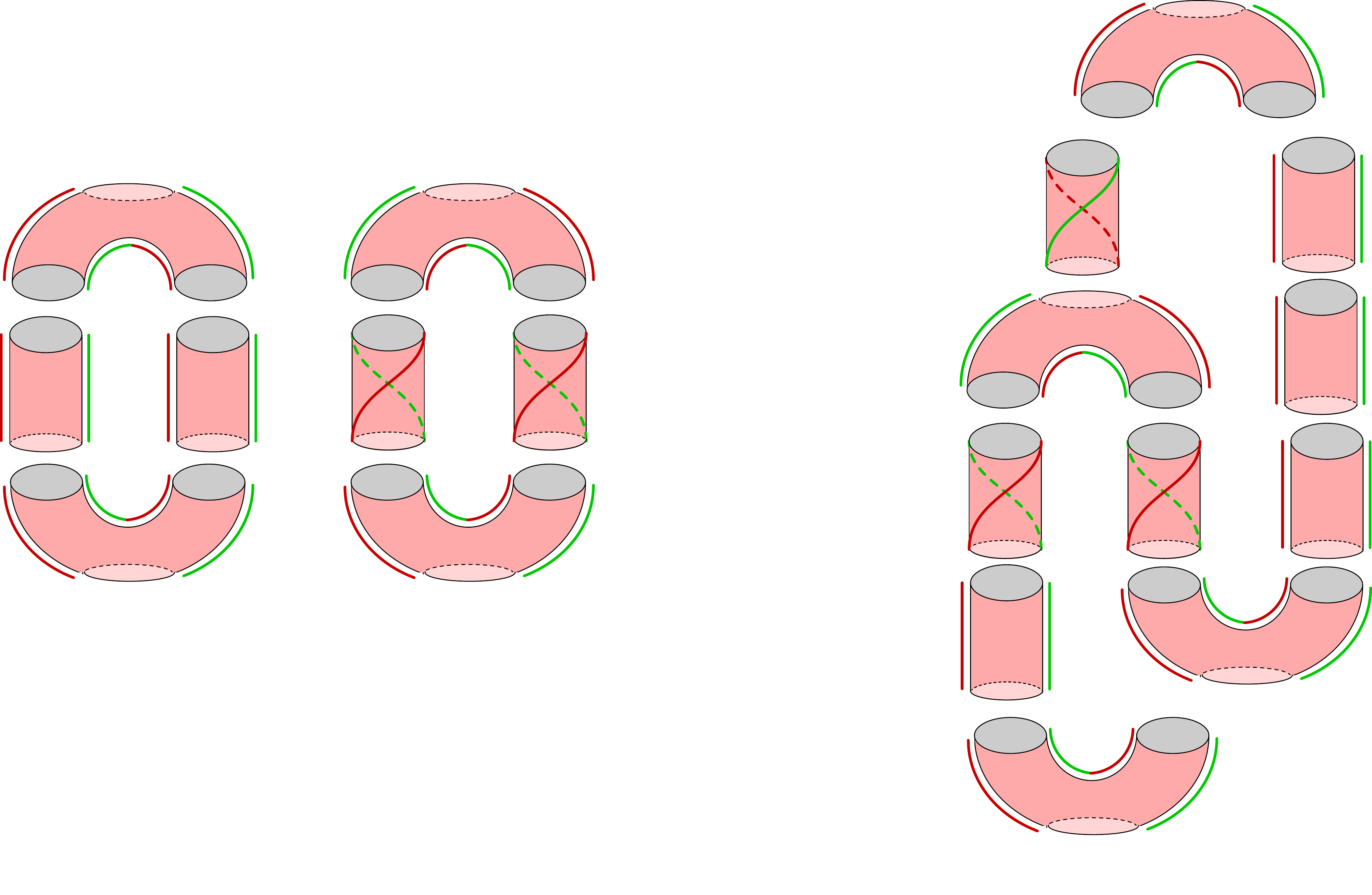
\caption{A single (g) and double-torus (h) can be constructed by gluing together six smaller pieces (a) to (f).  
The arrows indicate the directions of functions $f$ and $g$, the colors of the Jacobi loops denote criticality of 
$f_t$, and the dashed lines denote part of the loop on the back side of the manifold. 
In (e) and (f), $f$ is smoothly changed to $-f$  from top to bottom. This operation rotates the Jacobi loop between 
left and right of the corresponding pieces. When the pieces are glued together, this rotation makes it possible for 
a single Jacobi loop to connect all the critical points of $f$ and $g$ for a double-torus. However, for a single torus, 
it simply interchanges the connectivity of the two loops. 
\label{fig:singleloop}}
\end{figure*}

\begin{figure*}[!ht]
\centering
\subfigure[]{	\includegraphics[width=0.2\linewidth]{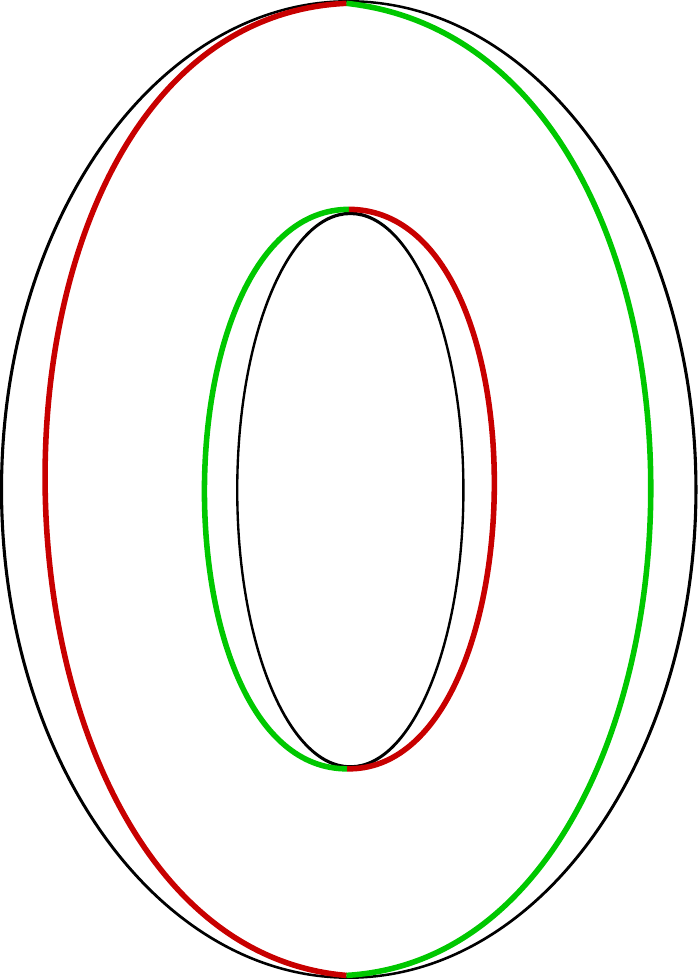} \label{fig:minloops_1}
					\includegraphics[width=0.2\linewidth]{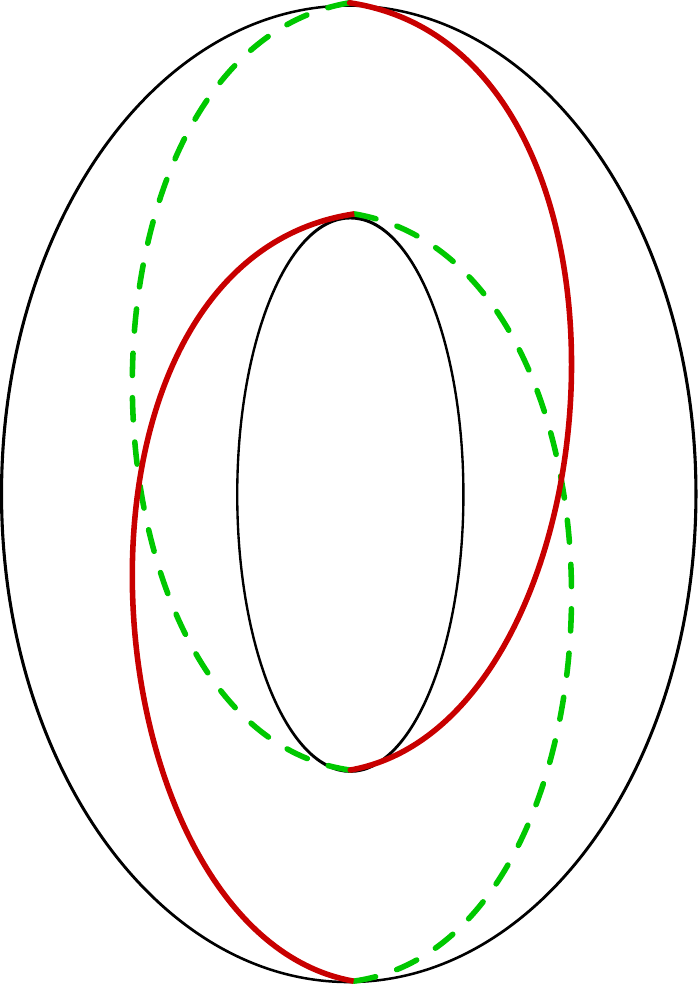} \label{fig:minloops_2}}
\hspace{2em}
\subfigure[]{	\includegraphics[width=0.2\linewidth]{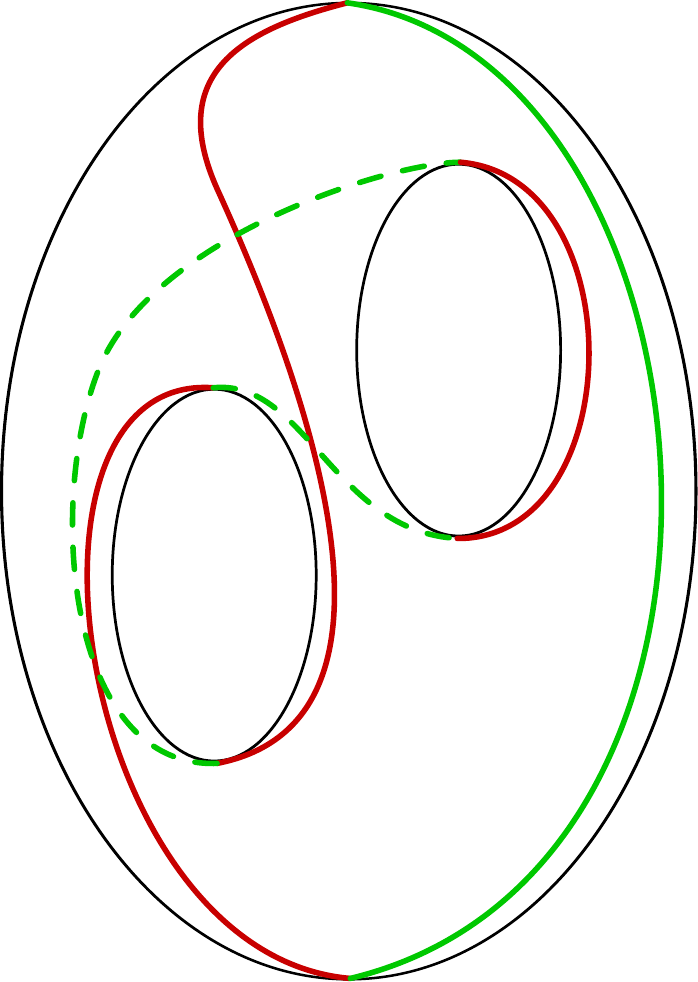} \label{fig:minloops_3}}
\caption{(a), (b) The Jacobi set on a single torus ($\gamma = 1$) contains $2$ loops, with two possible configurations. 
(c) In the case of a double torus ($\gamma = 2$), a configuration with a single Jacobi loop is feasible. 
The color of the Jacobi loops denote criticality of $f_t$, the dashed line denotes the loop on the back side of the torus.
\label{fig:minloops}}
\end{figure*}

\begin{figure*}[!ht]
\centering
\includegraphics[height=0.42\linewidth]{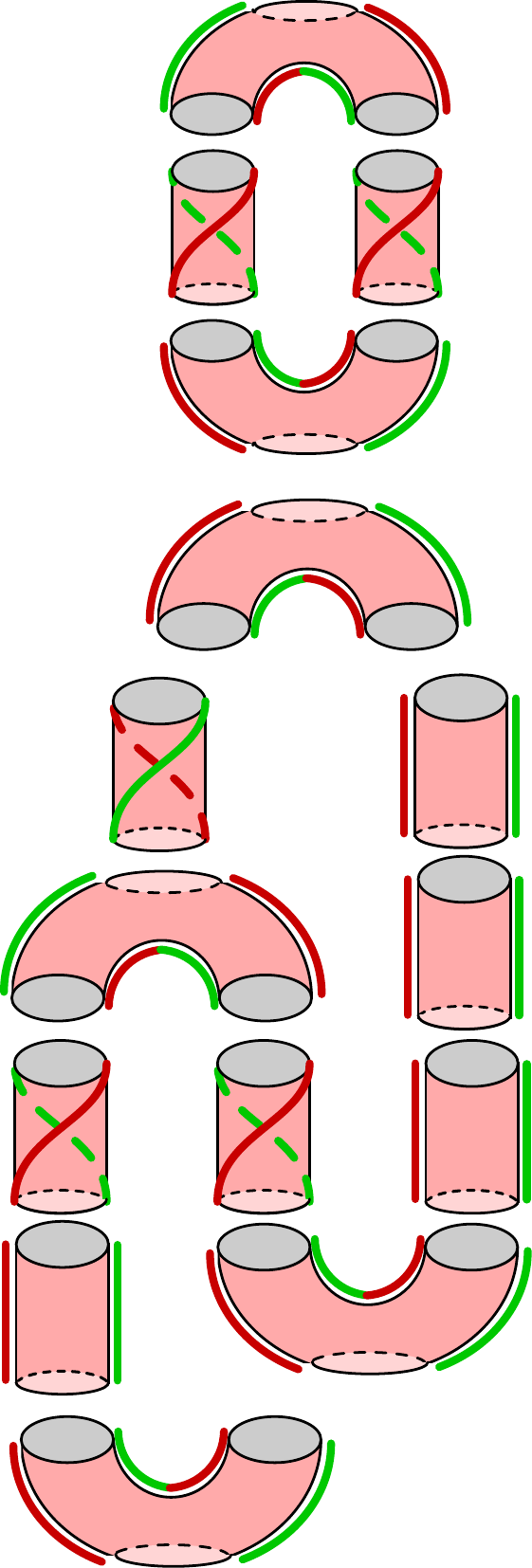}
\hspace{3cm}
\includegraphics[height=0.6\linewidth]{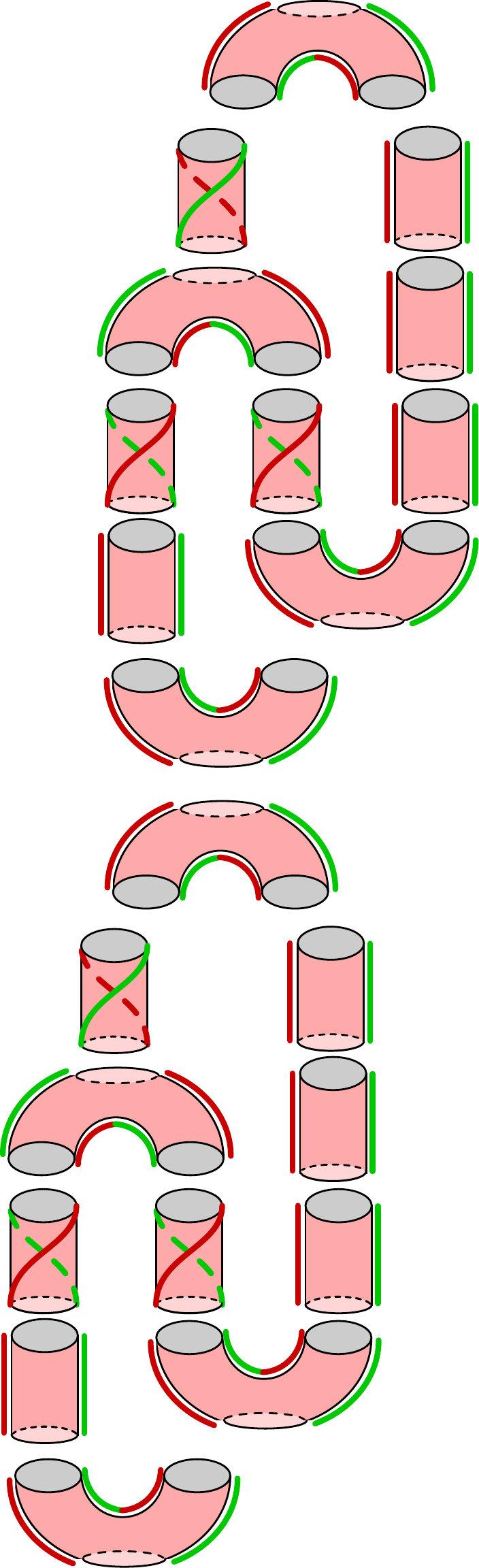}
\caption{A three-torus (left) constructed as connected sum of a single torus and a double-torus ($\Tspace^3 = \Tspace \# \Tspace^2$), 
and a four-torus (right) constructed as connected sum of two double-tori ($\Tspace^4 = \Tspace^2 \# \Tspace^2$).
Clearly, all critical points on the four-torus can be connected by the single loop. On the other hand, for a three-torus, one needs two Jacobi 
loops. 
\label{fig:singleloop2}}
\end{figure*}

%%%%%%%%%%%%%%%%%%%%%%%%%%%%%%%%%%%

\subsection{Simplification of Jacobi set on simply connected domains}
\label{app:sphere}

To show that our simplification can achieve the minimal configuration on simply connected 
domains, we first argue that if  two BD points are connected by a Jacobi loop, there always 
exists a valid sequence that removes both BD points from the Jacobi set. Furthermore, 
assuming $g$ contains only two extrema on a simply connected domain, there exist only a 
single configuration (shown in Figure~\ref{fig:proof}) such that the Jacobi set contains  
BD points not connected by the same Jacobi loop. Subsequently, we prove that these BD points 
must also be connected by a valid sequence.

\begin{figure*}[!t]
\centering
\def\svgwidth{0.4\linewidth} 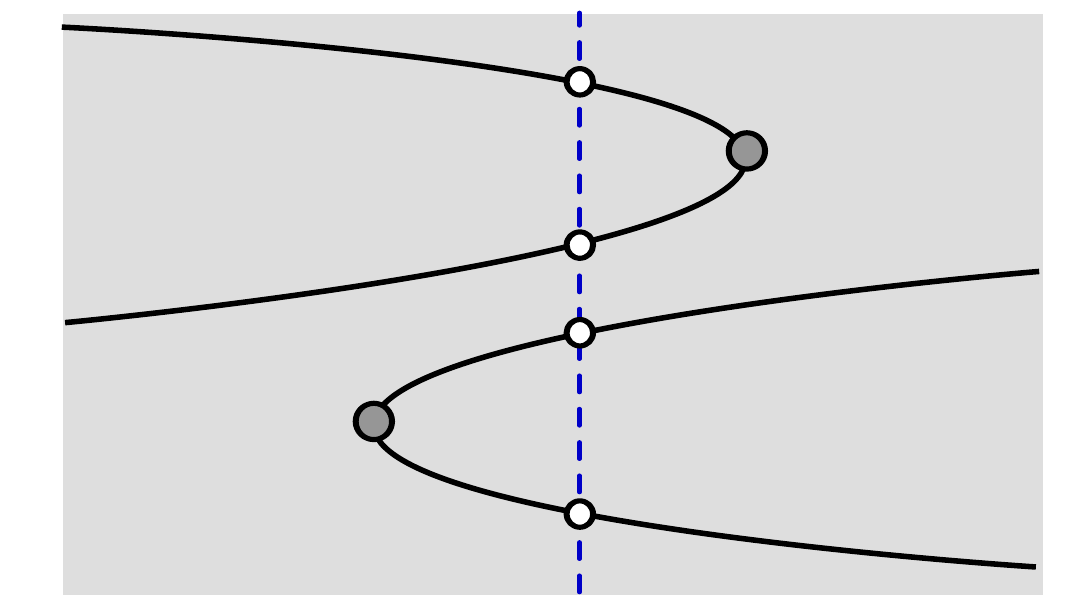 
\def\svgwidth{0.4\linewidth} 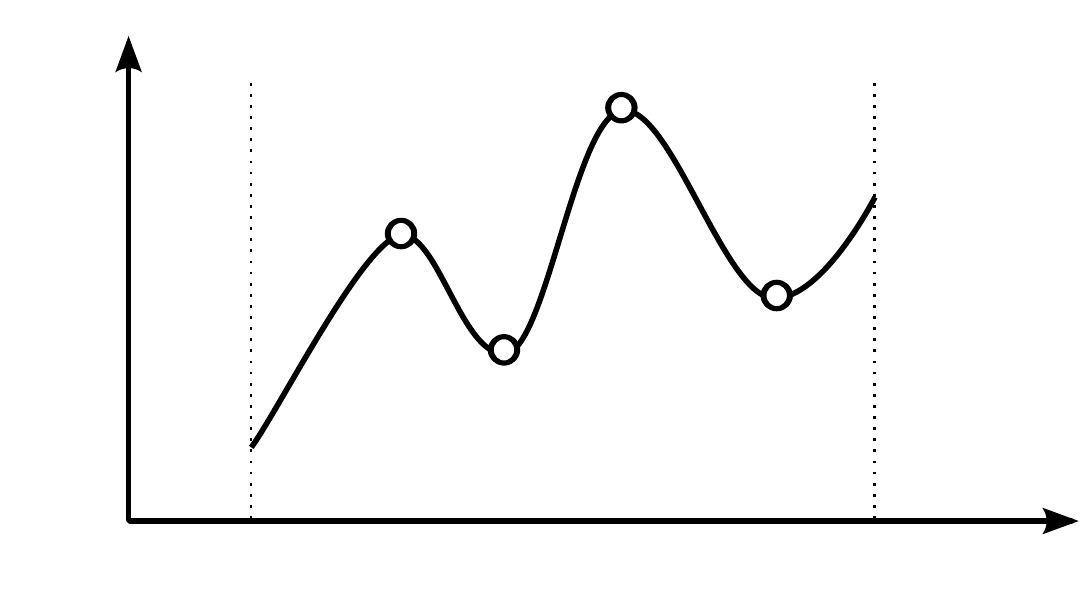
\caption{A pair of BD points must always be connected by a Jacobi sequence. Consider a level set $g^{-1}(t)$ between the level sets containing the BD points $u$ and $v$. 
Since the corresponding $f_t$ is periodic, both pairs of restricted critical points $(a_1,b_1)$, and $(a_2,b_2)$ can not be mutually paired. 
Thus, there must exist a region $R_{[t_1,t_2]}(\alpha_1,\beta_2)$, or $R_{[t_1,t_2]}(\alpha_2,\beta_1)$ leading to a valid sequence connecting $u$ and $v$.
\label{fig:proof}}
\end{figure*}

\begin{lemma}\label{lem:connected}
If $\M$ is a simply connected domain, and $u$, $v$ are two BD points 
  connected by a Jacobi loop %$\J$ 
  such that no critical points of $g$ or other BD points are between 
  them. Then, there exists a sequence of Jacobi regions connecting $u$ with $v$ that
  forms a valid simplification.
\end{lemma}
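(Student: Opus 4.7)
The plan is to exhibit an explicit valid Jacobi sequence running from $u$ to $v$. Since $u$ and $v$ are BD points and no other BD points or critical points of $g$ lie in the $g$-range $(g(u),g(v))$, the Jacobi loop containing them splits into two $g$-monotone arcs $\alpha,\beta:[g(u),g(v)]\to\J$ joining $u$ to $v$. On the simply-connected $\Mspace$, this loop bounds a disk, which I will call the \emph{corridor}. Near each BD point, the two arcs coalesce in both position and in restricted function value, so $\alpha$ and $\beta$ are mutually paired in an open neighborhood of $u$ and of $v$, giving rise to BD internal Jacobi regions $R_u$ and $R_v$ anchored at the two endpoints.

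The heart of the argument is that the corridor admits no obstructions to a valid sequence. Among the special regions introduced in Figure~\ref{fig:cases}, each of the BD external, BD side, saddle, and extremal types requires the closure of the region to contain a BD point or a critical point of $g$. The hypothesis forbids both in $(g(u),g(v))$, so any further decomposition of the corridor (caused by other Jacobi branches entering it and generating switch points) yields only regular regions, together with $R_u$ and $R_v$ at the ends. I would then seed the sequence at $R_u$ and repeatedly advance into the adjacent region at higher $g$-value across a switch point. Strict monotonicity in $g$ together with boundedness of the corridor guarantees that the chain reaches $R_v$, and because it begins and ends with a BD internal region and contains only regular ones in between, Section~\ref{sec:js:seq} certifies it as a valid Jacobi sequence; cancelling it removes both $u$ and $v$ from $\J$.

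The main obstacle will be the careful handling of other Jacobi branches passing through the corridor. One must verify that the boundary arcs $\alpha$ and $\beta$ remain mutually paired along their length, so the chain stays inside the corridor and does not leak out through either bounding arc, and that the adjacency graph of regions inside the corridor can be walked as a simple monotone path from $R_u$ to $R_v$ rather than branching ambiguously. Simple connectivity of $\Mspace$ is what makes this possible: it guarantees that the loop genuinely bounds a disk, and that the corridor's regions admit a layered ordering compatible with the $g$-sweep. Once these structural facts are in place, validity of the sequence follows directly from the criteria in Section~\ref{sec:js:seq}.
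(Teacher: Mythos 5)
Your overall strategy---seed the sequence at the two BD internal regions anchored at $u$ and $v$, observe that the hypothesis excludes every obstruction type from Figure~\ref{fig:cases} in the open interval $(g(u),g(v))$, and propagate monotonically in $g$ until the two seeds meet---is the same skeleton as the paper's proof, and the classification step (BD external, BD side, saddle, and extremal regions all need a BD point or a critical point of $g$ in their closure) is a correct and slightly more explicit rendering of what the paper leaves implicit. The paper, however, rests its argument on a counting step that you omit: since the BD pair $u$, $v$ creates and destroys two restricted critical points while the restricted functions just outside $[g(u),g(v)]$ are Morse on circles and hence already carry at least two critical points, every $f_t$ with $t$ strictly between the BD values has at least \emph{four} restricted critical points. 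This is what guarantees that every point of the connecting arcs has a genuine partner distinct from itself and that no intermediate region degenerates to the uncancellable two-critical-point configuration; your region-type case analysis does not by itself rule this out, because a region whose level sets carry only two restricted critical points is not formally ``extremal'' unless it contains an extremum of $g$. You should add this count (it follows in one line from the hypotheses) to close that hole.

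The genuinely different element of your write-up is the ``corridor'' --- the disk bounded by the loop --- together with the requirement that $\alpha$ and $\beta$ stay mutually paired and that the chain of regions never leaves the disk. This is not needed and is in general false: the local pairing $\p$ assigns each point of $\alpha$ to whichever level set neighbor is closer in $f$-value, which may well lie on a branch of $\J$ outside the corridor, and the resulting valid sequence then consists of regions straddling the loop boundary (this is exactly the mechanism behind the loop-merging example of Figure~\ref{fig:result_loops}(a)). The paper only uses mutual pairing \emph{at the BD points themselves}, on the inside of the two BD internal regions, which is automatic because the two arcs coalesce there; away from the endpoints it lets the pairing go wherever it wants and only needs that some chain of adjacent, unobstructed regions links the two seeds. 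If you drop the confinement-to-the-corridor claim and replace it with the four-critical-point count plus mutual pairing at the endpoints, your argument matches the paper's and the obstacle you flag at the end disappears rather than needing to be overcome.
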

\begin{proof}
Let $t_1$, $t_2 \in \Rspace$ denote the function values of $g$ at the BD points, that is,  
	$t_1 = g^{-1}(u)$ and $t_2 = g^{-1}(v)$, and without loss of generality, assume $t_1 < t_2$. 
	The BD points create and destroy two restricted critical points. Since the restricted functions 
	$f_{t_0-\epsilon}$ and $f_{t_1+\epsilon}$ are Morse and hence must contain at least two 
	restricted critical points, it follows that for all 
	$t \in (t_0,t_1)$, $f_t$ has at least four restricted critical points. As a result, each point on 
	the Jacobi set connecting $u$ with $v$ is paired and can be cancelled with its
   partner. Since at a BD point, $\J$ is always mutually paired on the ``inside'' (of the BD internal region)
   there must exist a valid sequence or Jacobi regions connecting $u$ with $v$. 
   
   Note that the two possible configurations for this scenario are shown in Figures~\ref{fig:valid} and~\ref{fig:loop_cancel2}.
\end{proof}

\noindent
To prove the main result we note that on a simply connected domain all saddles
 of $f$ and $g$ can be removed either through simplifying the Jacobi set or
 through  direct cancellations. 
 That is, on a simply connected domain all saddles of $f$
and $g$ can be removed through cancellations and only a single
minimum and a single maximum remains. In this case, any potentially
remaining BD points must form a valid sequence of regions, as no
critical points exist that may block a sequence from being formed.

\begin{lemma}
\label{lemma:algo_simplest}
If $\Mspace$ is a simply connected domain ($\gamma = 0$), the algorithm reduces a Jacobi set to 
its minimal configuration -- a single loop without birth death points.
\end{lemma}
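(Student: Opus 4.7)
\medskip
\noindent\textbf{Proof plan.}
The plan is to organize the argument into three stages that mirror the two cancellation primitives of the algorithm. First, I would invoke the saddle-extremum cancellation of Section~\ref{sec:saddle}. On a simply connected $\Mspace$ ($\chi(\Mspace)=2$), any saddle of $g$ admits an isolated partner among the extrema of $g$, so successive isolated-pair cancellations reduce $g$ to exactly one maximum and one minimum, and symmetrically for $f$. Each such cancellation has the property, established in Section~\ref{sec:saddle}, that $\J(f,g)=\J(f,g^*)$ outside an $\epsilon$-neighborhood of the cancelled saddle and only reshuffles alignments; hence the set of BD points and the loop-structure of $\J$ are preserved throughout this stage. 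At the end of stage one the algorithm has only extremal and regular regions (plus possibly BD-internal, BD-side, and BD-external regions) to deal with, and no critical point of $g$ or of $f$ besides the four global extrema remains to block sequences.

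Second, I would remove every remaining BD point. Since restricted critical points are born and die in pairs, BD points come in pairs along the Jacobi curves. For any pair $u,v$ of consecutive BD points on the same Jacobi loop, Lemma~\ref{lem:connected} produces a valid sequence of regions connecting $u$ to $v$, and its cancellation removes both BD points (and may merge loops as in Figure~\ref{fig:loop_cancel}). For two BD points on distinct loops, I would use the configuration in Figure~\ref{fig:proof}: any level set $g^{-1}(t)$ between them meets the Jacobi set in restricted critical points whose mutual-pairing pattern must fail on at least one side, exposing a regular region that bridges the two loops into a valid sequence. Iterating this process until $\Lcal$ contains no BD-seeded sequence leaves a BD-free Jacobi set.

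Third, I would show the terminal BD-free Jacobi set is a single loop. Parametrize any loop $\gamma$ and consider $g|_\gamma$; at its extrema, $\gamma'$ is tangent to the level set of $g$, and such tangency can only occur at a BD point or at a critical point of $g$. Because BD points are gone and $g$ has only its two extrema left, every loop of $\J$ must pass through both the maximum and the minimum of $g$ (and symmetrically through the two extrema of $f$). Locally at a non-degenerate extremum of $g$ only one smooth Jacobi arc passes through, so all loops through the max of $g$ agree locally there; combined with Lemma~\ref{lemma:minimal_loops} for $\gamma=0$, this forces the Jacobi set to be exactly one loop through all four extrema, which is the minimal configuration.

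The main obstacle will be stage three, specifically ensuring that the interleaving of stages one and two cannot leave the algorithm stranded at a BD-free configuration with two or more loops threading the same four extrema. My intended fix is not to wait until all BD points are gone before starting to merge loops: the $\kappa$-ordered schedule of Section~\ref{sec:js:ordering} will, whenever multiple loops coexist, still contain a BD-anchored sequence whose cancellation merges them (as in Figure~\ref{fig:loop_cancel}), since any multi-loop configuration on a simply connected domain with only extremal critical points must contain BD points by the tangency/parity argument above. Carrying this out rigorously, and verifying that every reachable terminal state of the algorithm coincides with the minimum promised by Lemma~\ref{lemma:minimal_loops}, is the technical heart of the proof.
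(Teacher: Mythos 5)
Your proposal follows essentially the same route as the paper's proof: first cancel all saddles of $f$ and $g$ so that only four extrema remain, then remove BD points joined by a common loop via Lemma~\ref{lem:connected}, and finally resolve the remaining case of two BD points on distinct loops (one around the maximum of $g$, one around the minimum) by showing that the two region pairs cannot both be mutually paired everywhere --- which is exactly the paper's Figure~\ref{fig:proof} argument, though the paper makes it concrete with the inequalities $f(b_1)>f(b_2)$, $f(a_1)<f(a_2)$ versus $f(a_2)<f(a_1)$ on the periodic level set, where you only gesture at ``the pairing must fail on one side.'' Your third stage --- the tangency argument that a saddle-free, BD-free Jacobi set must pass through all four extrema and hence be a single loop --- is a detail the paper leaves implicit, and it correctly disposes of the ``stranded multi-loop'' worry you raise at the end, since any such terminal configuration would necessarily still contain BD points and therefore a valid sequence.
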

\begin{proof}
  Without loss of generality, we suppose $f$ and $g$ contain no saddles, 
  since for simply connected domain, all saddles can be cancelled (either through 
  sequence cancellation or by direct cancellation with extrema of the functions). 
  As a result the level sets of $g$ can be seen as a
  collection of vertical lines periodic at $\infty$ as shown in
  Figure~\ref{fig:proof}. Following Lemma~\ref{lem:connected}, all BD points
  connected by Jacobi loops can be removed through valid cancellations. Nevertheless,
  there can exist two BD points are not connected to each other, one forming a loop
  with the  maximum of $g$ and one forming a loop with the minimum of
  $g$. Both loops must overlap since each $f_t$ must have at least two
  restricted critical points.

  Assume there does not exist a sequence connecting the two BD points. It
  follows that the curve $\alpha_1$ is always mutually paired with $\beta_1$ and
  $\alpha_2$ is always mutually paired with $\beta_2$. We show that this is a
  contradiction. Assume $a_1$, $a_2$ are maxima and $b_1$, $b_2$ are minima. If
  $\alpha_1$ is mutually paired with $\beta_1$ then $f(b_1) > f(b_2)$ and
  $f(a_1) < f(a_2)$. (Remember that the level sets are periodic.) However,
  $\alpha_2$ mutually paired with $\beta_2$ implies $f(a_2) < f(a_1)$ which gives a
  contradiction, and hence proves the lemma.
\end{proof}

%%%%%%%%%%%%%%%%%%%%%%%%%%%%%%%%%%%

\subsection{Simplification of Jacobi set on non-simply connected domains\label{app:torus}}

On simply connected domains, we showed that our simplification scheme can obtain the minimal Jacobi set 
configuration. Here, we discuss the fundamental problems that are inherent to the structure of 
non-simply connected domains, and how they impact our simplification algorithm. 
The simplification algorithm terminates when no more valid simplification steps are possible, and hence 
no more Jacobi sequences can be found. 
For non-simply connected domains, there exist saddles that can not be removed. These saddles may 
block the construction of Jacobi sequences, and hence no more Jacobi sequences may be formed. Thus, our 
algorithm may terminate without achieving the minimal Jacobi set.

To contrive such an example, we start with the minimal Jacobi set (shown in Figure~\ref{fig:minloops_1}) on 
a torus $\Tspace$, where the two functions are height functions with an angle $\pi/2$ between them. 
The function $f$ can then be changed along the outer silhouette of the torus, using a sinusoidal kernel that 
replaces the restricted maxima with a valley and restricted minima with a ridge. For each $f_t$, this operation 
replaces one restricted critical point by three, thus creating two extra Jacobi loops. Since the function must 
stay smooth, the kernel must go to zero at the critical points of $g$, where the restricted critical points of 
$f_t$ switch. 

To understand this Jacobi set, recall that the torus is constructed as the product of two circles. If $\theta$ and $\phi$ denote the 
polar angle of the two circles, then the torus can be parametrized as $\Tspace(\theta,\phi)$. Figure~\ref{fig:torus_imp} 
shows the 
level sets and critical points of the two functions (in red and blue) on the $\theta-\phi$ plane along with 
the Jacobi set (in black). Clearly, there exist four loops in the Jacobi set. The saddles on $\J_1$ and $\J_3$ also 
act as BD points. Any sequences that are seeded at the BD points always get stuck at the saddles and hence, 
no valid sequence is possible. 

%This is, currently, a limitation to our simplification algorithm. 
Notice that in this case, there exists no simplification that can be achieved by local modification to the 
functions. Thus, our algorithm can not simplify this Jacobi set further. Going forward, we envision more 
general and global simplifications steps, which modify more than two loops of Jacobi set simultaneously. 
Such simplifications will be able to handle such difficult cases for non-simply connected domains. 

\begin{figure*}[!ht]
%\vspace{-1em}
\centering
	\def\svgwidth{0.6\linewidth} 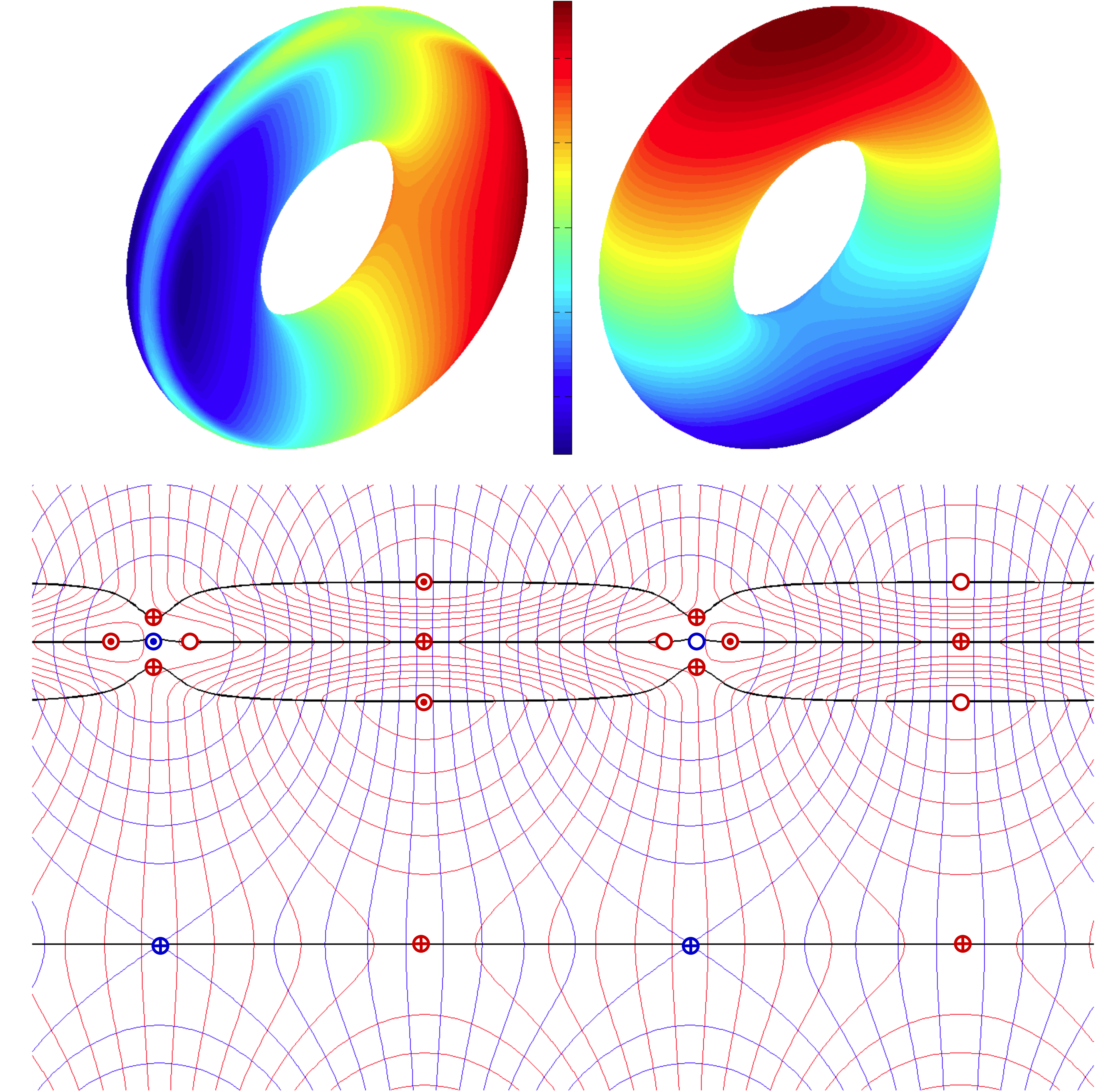 
%\vspace{-1em} 
 \caption{(Top) Functions $f$ (left) and $g$ (right) are defined on a torus, $\Tspace(\theta,\phi)$. (Bottom) The level sets and critical points of 
 $f$ and $g$ are shown in red and blue respectively, along with the Jacobi set in black, on the $\theta-\phi$ plane. 
 Since the domain is periodic, the four Jacobi loops are closed. Although, there exist BD points 
 on $\J_3$ and $\J_4$ (coinciding with saddles of $f$), the algorithm can not find a valid Jacobi sequence due to the 
 presence of irremovable saddles.\label{fig:torus_imp}}
%\vspace{-1em} 
\end{figure*}
%%%%%%%%%%%%%%%%%%%%%%%%%%%%%%%%%%%

\section{Discussion and Future Work\label{sec:discussion}}

In this paper, we introduce a direct technique for Jacobi set simplification, 
aimed at achieving local, smooth, and consistent modifications to the 
underlying functions. Our technique is guided by a user-defined 
metric, and offers fine control over the
simplification process and could be widely applicable in many data analysis
applications.  While our choice of the metric -- the comparison measure $\kappa$
is well-suited for our purpose, we would like to explore other possibilities.
Further, while it is clear that the algorithm reduces a Jacobi set to its
simplest configuration for simply connected domains, there exist cases where
this is not possible for non-simply connected domains. Understanding of such cases 
may lead to the need of global simplification operations which can help obtain 
the simplest Jacobi set for manifolds with non-zero genus. % higher than one. 
We wish to explore such cases and extend our simplification scheme 
to address them. 
Lastly, while the focus of the current work is a detailed discussion on the various 
elements of the algorithm for smooth functions, a technical discussion on its 
discrete adaptation and practical implementation is forthcoming. 

%%%%%%%%%%%%%%%%%%%%%%%%%%%%%%%%%%%

\section*{Acknowledgement}
\noindent The authors would like to thank Attila Gyulassy for insightful discussions during the early stage of this project. 

\bibliographystyle{abbrv}
\bibliography{refs}
%\section*{Acknowledgments}

%%%%%%%%%%%%%%%%%%%%%%%%%%%%%%%%%

\end{document}